\documentclass{amsart}

\usepackage{amssymb}
\usepackage{amsfonts}
\usepackage{amsmath}
\usepackage{amsthm}
\usepackage{amsxtra}
\usepackage{hyperref}
\usepackage{subfigure}
\usepackage{cite}
\usepackage{graphicx}
\usepackage{paralist}
\usepackage{fullpage}

\numberwithin{equation}{section}

\newcommand{\cc}{\mathrm{c.c.}}
\newcommand{\bigo}{\mathcal{O}}

\newcommand{\abs}[1]{\left\vert#1\right\vert}

\newcommand{\paren}[1]{\left(#1\right)}
\newcommand{\bracket}[1]{\left[#1\right]}
\newcommand{\set}[1]{\left\{#1\right\}}
\newcommand{\mean}[1]{\left\langle#1\right\rangle}
\newcommand{\eps}{\epsilon}

\newcommand{\dz}{\partial_{z} }
\newcommand{\dZ}{\partial_{Z} }

\newcommand{\dt}{\partial_{t}}
\newcommand{\dT}{\partial_{T}}

\newcommand{\f}{\phi}
\newcommand{\dphi}{{\partial_\f}}

\newcommand{\nn}{\nonumber}

\newcommand{\R}{\mathbb{R}}

\newcommand{\Azero}{A^{(0)}}

\newcommand{\Ep}{E^+}
\newcommand{\Em}{E^-}
\newcommand{\sech}{\mathrm{sech}}
\newcommand{\sign}{\mathrm{sign}}

\newtheorem{thm}{Theorem}[section]

\newtheorem{prop}[thm]{Proposition}

\begin{document}


\author[Simpson] {Gideon Simpson} 
\email{simpson@math.toronto.edu}
\address{Mathematics Department, University of Toronto \\
  Toronto, Ontario, Canada}

\author[Weinstein]{M.I. Weinstein}
\email{miw2103@columbia.edu}
\address{Applied Mathematics Department, Columbia University \\
  New York City, NY 10027, USA}

\title[Nonlinear Periodic Maxwell]{Coherent Structures and Carrier Shocks in the Nonlinear
  Periodic Maxwell Equations}

\begin{abstract}
We consider the one-dimensional propagation of
  electromagnetic waves in a weakly nonlinear and low-contrast
  spatially inhomogeneous medium with no energy dissipation.
   We focus on the case of a periodic medium, in which dispersion 
   enters only through the (Floquet-Bloch) spectral band dispersion associated
    with the periodic structure; chromatic dispersion ( time-nonlocality of the 
    polarization) 
     is neglected. Numerical
  simulations show that for initial conditions of wave-packet type (a
  plane wave of fixed carrier frequency multiplied by a slow varying,
  spatially localized function) very long-lived spatially localized
  coherent soliton-like structures emerge, whose character is that of
  a slowly varying envelope of a train of shocks. We call this structure an {\it envelope 
  carrier-shock train}.
  
  The structure of the solution violates the oft-assumed nearly
  monochromatic wave packet structure, whose envelope is governed by
  the nonlinear coupled mode equations (NLCME). The inconsistency and
  inaccuracy of NLCME lies in the neglect of all (infinitely many)
  resonances except for the principle resonance induced by the initial
  carrier frequency.  We derive, via a nonlinear geometrical optics
  expansion, a system of nonlocal integro-differential equations
  governing the coupled evolution of backward and forward propagating
  waves. These equations incorporate effects of {\it all} resonances.
  In a periodic medium, these equations may be expressed as a system
  of infinitely many coupled mode equations, which we call the
  extended nonlinear coupled mode system (xNLCME). Truncating xNLCME
  to include only the principle resonances leads to the classical
  NLCME.

  Numerical simulations of xNLCME demonstrate that it captures both
  large scale features, related to third harmonic generation, and fine
  scale {\it carrier shocks} features of the nonlinear periodic
  Maxwell equations.
\end{abstract}

\maketitle

\tableofcontents

\section{Overview}

Realized and potential applications of microstructured dielectric
media motivate a thorough mathematical study of wave-propagation
governed by nonlinear hyperbolic equations, {\it e.g.} Maxwell's
equations with periodic and nonlinear constituitive laws.
This paper explores a class of nonlinear hyperbolic equations with a
spatially periodic flux function:

\begin{subequations}\label{nonlin-periodic}
  \begin{align}
    \partial_t{\bf v} + \partial_x{\bf f}(x, {\bf v}) &= {\bf 0} \\
    {\bf f}(x, {\bf 0}) &= {\bf 0},\\
    {\bf f}(x+2\pi,{\bf v}) &= {\bf f}(x,{\bf v}).
  \end{align}
\end{subequations}
In particular, we shall assume that periodic variations are weak (a
low contrast structure) and study solutions, whose amplitude
is small and such that the effects of periodicity-induced dispersion and
nonlinearity are in balance.

Indeed, a non-trivial spatially periodic structure is 
dispersive.  This can be seen by linearizing \eqref{nonlin-periodic}
about the state ${\bold v}={\bold 0}$, giving the linear system:
\begin{align}
  &\partial_t{\bf V} + \partial_x \left( D_{\bf v}{\bf f}(x, {\bf 0})
    {\bf V} \right) ={\bf 0},
  \label{lin-periodic}\end{align}
which retains periodicity. Floquet-Bloch theory
\cite{Eastham,Reed-Simon-IV} implies that associated to the PDE
\eqref{lin-periodic} is a family of {\it band dispersion functions}
$k\mapsto\omega_j(k),\
k\in\left(-\frac{1}{2},\frac{1}{2}\right]$. Wave propagation is
dispersive since the group velocities, $\omega_j'(k)$, are typically
non-zero. Thus, waves of different wavelengths travel with different
speeds.

Dispersive properties, encoded in the functions $\omega_j(\cdot)$ and
the associated Floquet-Bloch states, can be manipulated by {\it tuning
  the periodic structure} through, for example, modification of the
periodic lattice, the maximum and
minimum variations of $ D_{\bold v}{\bold f}(x, {\bold 0})$ (material
contrast), {\it etc}.  

It is well-known that for general initial conditions, solutions of
hyperbolic systems of conservation laws with spatially homogeneous
nonlinear fluxes:
\begin{equation} \label{nl-hyp-cons-law}
  \partial_t{\bf v}\ + \partial_x{\bf f}({\bf v}) ={\bf 0}
\end{equation}
develop singularities (shocks) in finite time, \cite{lax1964dss ,
  klainerman1980fsw}.  \medskip

\noindent {\bf Question 1:}\ {\it Is {\it spectral band dispersion}, due to
a periodic structure, 
  sufficient to arrest shock formation?}  \footnote{For example,
  though typical smooth initial data for the inviscid Burgers equation
  $\partial_tu+u\partial_xu=0$ develop shocks in finite time, the
  corresponding solutions of the Korteweg - de Vries (KdV) equation,
  $\partial_tu+u\partial_xu+\partial_x^3u=0$, a dispersive
  perturbation, remain smooth for all time. }
\\

The ability to control or inhibit the formation of singularities in
nonlinear wave propagation could have significant impact in, for
example, electromagnetics and elasticity. Strictly speaking, the
answer to Question 1 is no. Indeed, for a system of the form
\eqref{nonlin-periodic}, let us suppose that the flux function was
periodically piecewise constant.  Finite propagation speed
considerations imply that for appropriate initial data, which are
sufficiently localized within a uniform region, a shock will form.
The dispersive character of the periodic structure is manifested only
on sufficiently large spatial and temporal scales.  Thus, the problem
of controlling shock formation should be posed relative to some class
of initial conditions.
\bigskip
 
A second motivation is the study and design of media which support the
propagation of stable soliton-like pulses. These have applications to
optical devices which transfer store or, in general, process  information
 which is encoded as light pulses. Associated
with dispersive wave-propagation at wavenumber $k_\star$ is a
dispersion length $\sim (\omega''(k_\star))^{-1}$.  Soliton formation
is possible on length scales where the dispersion length and the
characteristic length on which nonlinear effects act are
comparable. Technological advances have made it possible to fabricate
 microstructured media 
with specified dispersion lengths at specified wavelengths. For a
given dispersion length, a balance between dispersion and nonlinearity
is achieved by tuning the strength of the nonlinear effects through
 adjusting  the field intensity (by an amount which is material
dependent). An example of this balance at work is {\it gap soliton}
formation in periodic structures. These are experiments in optical
fiber periodic structures (gratings) involving highly intense
(nonlinear) light with carrier wave-length satisfying the Bragg
(resonance) condition.  The length-scale of such solitons is $10^{-2}$
meters \cite{Eggleton-deSterke-Slusher}.
 
Theory predicts the existence of gap solitons traveling at any speed,
$v$, between zero and the speed of light, $c$\
\cite{christodoulides-joseph:89,aceves1989sit}. Experiments
\cite{Eggleton-deSterke-Slusher} demonstrate speeds as low as $.3c$ to
$.5c$.  Potential applications of gap solitons, based on the design of
appropriate localized defects in a periodic structure, are all-optical
storage devices \cite{Goodman-Slusher-Weinstein:02}. The term gap
soliton is used due to frequency of the gap soliton envelope lying in
the spectral gap of the linearized system.  \bigskip

Physical predictions of gap solitons are based on explicit solutions
of {\it nonlinear coupled mode equations} (NLCME), given below in
\eqref{eq:nlcme}. NLCME has been formally derived in, for example,
\cite{desterke1994gs} from \eqref{eq:maxwell1}; see also the
discussion in Section \ref{sec:nlcme}.  Rigorous derivations of NLCME,
from models with appropriate dispersion have been presented for the
anharmonic Maxwell-Lorentz equations \cite{goodman01npl} and other
nonlinear dispersive equations; see {\it e.g.} \cite{groves2001mps,
  schneider2001ncm,schneider2003eas, pelinovsky2007jcm,
  pelinovsky2008mgs}.
  
Within the approximation of a small amplitude wave field as a
wave-packet with slowly varying envelope and {\it single} carrier
frequency, propagating through a low contrast periodic structure near
the Bragg resonance (\ see scaling in Figure \ref{f:wavepack})\ ),
NLCME is argued to govern the principle forward and backward slowly
varying envelopes of carrier waves; see \cite{desterke1994gs} and
references therein.

{\it As discussed in \cite{goodman01npl} and in section
  \ref{sec:nlcme}, if the only source of dispersion is the spatial
  dispersion of the periodic medium ({\it e.g. negligible chromatic
    dispersion}) for weakly nonlinear waves in low contrast media all
  nonlinearity-generated harmonics are resonant and therefore all mode
  amplitudes are coupled at leading order. The correct mathematical
  description would appear to require infinitely many interacting
  modes.} Thus, the classical NLCME are {\bf not} a mathematically
consistent approximation.  NLCME may however be satisfactory physical
description, for some purposes.t Indeed, the soliton wave form prediction based on NLCME
appears to describe some features of experiment.
\footnote{Physicists argue in two ways that the coupling to higher
  harmonics is argued to be negligible : (i)\ The material systems
  considered are dissipative at higher wave numbers. Higher wave
  numbers are damped and therefore these mode amplitudes can be
  ignored, and (ii) {\it Chromatic dispersion} (arising due to the
  finite time response of the medium to the field) causes nonlinearly
  generated harmonics to be off-resonance. Therefore, an initial
  condition exciting the principle modes will not appreciable excite
  higher harmonics.
  These rationales are somewhat ad hoc since the precise damping
  mechanisms are not well-understood and chromatic dispersion is a
  much weaker effect than photonic band dispersion for weak periodic
  structures.}

\begin{figure}
  \centering
  \includegraphics[width=4in]{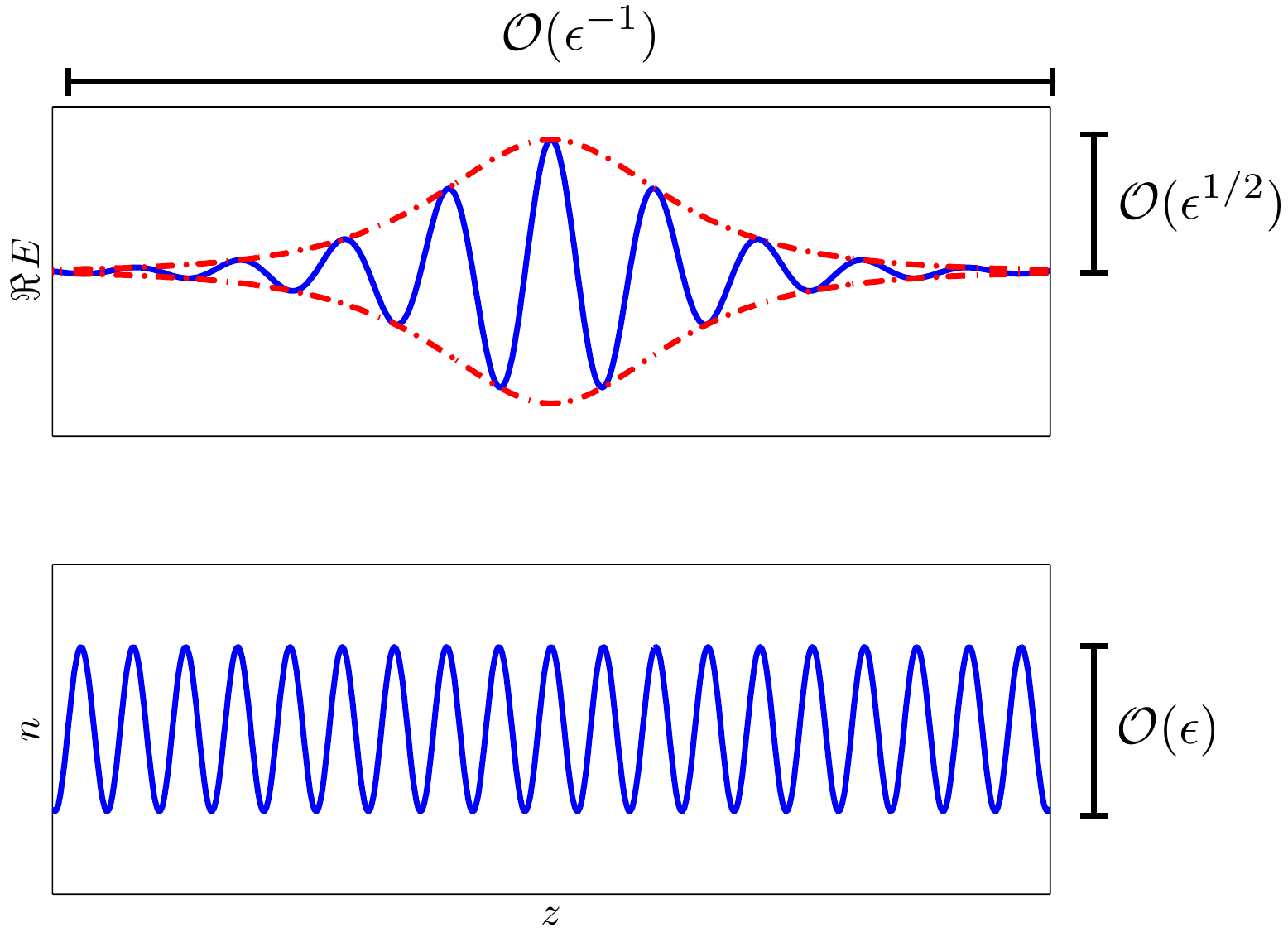}
  \caption{A wavepacket (real part, $\Re E_0(z)$, of complex field)
    with carrier wave length equal twice that of the waveguide
    refractive index ($n(z)$). }
  \label{f:wavepack}
\end{figure}
\bigskip

\noindent {\bf Question 2:}\ {\it Do nonlinear periodic
  hyperbolic systems have stable coherent structures, and can one
  develop a mathematical theory? How 
  are the classical NLCME related to this theory?
  See the discussion in section
  \ref{sec:discussion}.}
\medskip

In this article we report on progress on Questions 1 and 2 in the
context of the one-dimensional, nonlinear Maxwell equations governing
the electric ($E$) and magnetic ($B$) fields:
\begin{subequations}\label{eq:maxwell1}
  \begin{gather}
    \dt D =  \dz B,\\
    \dt B = \dz E.
  \end{gather}
\end{subequations}
with constitutive law
\begin{align}
  \label{DE}
  D\ &=\ \epsilon(z,E)\ E\ \equiv\ \left(\ n^2(z) + \chi E^2\ \right)E\\
  \label{Ndef}
  n(z)\ &=\ n_0\ +\ \epsilon N(z)
\end{align}
$n(z)$ is a {\it linear} refractive index, consisting of a nonzero
background average part, $n_0$, and a fluctuating ({\it
  e.g. periodic}) part $\epsilon N(z)$. The nonlinear term $\chi E^2$
is the nonlinear refractive index, arising from the Kerr effect; in
regions of high intensity the refractive index is higher.  The consituitive law
 \eqref{DE}, prescribes $D$ as  a
 a local function of $E$. Thus chromatic dispersion, which arises due a 
 time-nonlocal relation between $D$ and $E$ has been 
neglected.  {\it For simplicity, we assume $n_0 = 1$, which can be
  arranged by a simple scaling.}

\subsection{Summary of results}\label{sec:summary}

\begin{enumerate}
\item In section \ref{sec:observations} we present numerical
  simulations of the nonlinear periodic Maxwell equations, \eqref{eq:maxwell1},
   for initial data obtained from the
  explicit NLCME soliton. Under this time-evolution there is robust spatially localized structure
   {\it on the scale of the NLCME soliton
    envelope}.  The persistence of a localized structure and speed
  of propagation are consistent with that of the NLCME soliton.  There
  is, however, a deviation from the NLCME soliton related to {\it
    third harmonic} generation; these are the two accessory pulses
  around the principle wave in Figure \ref{f:intro_third_harmonic}
  (a).

\begin{figure}
  \centering
  \subfigure[]{\includegraphics[width=2.3in]{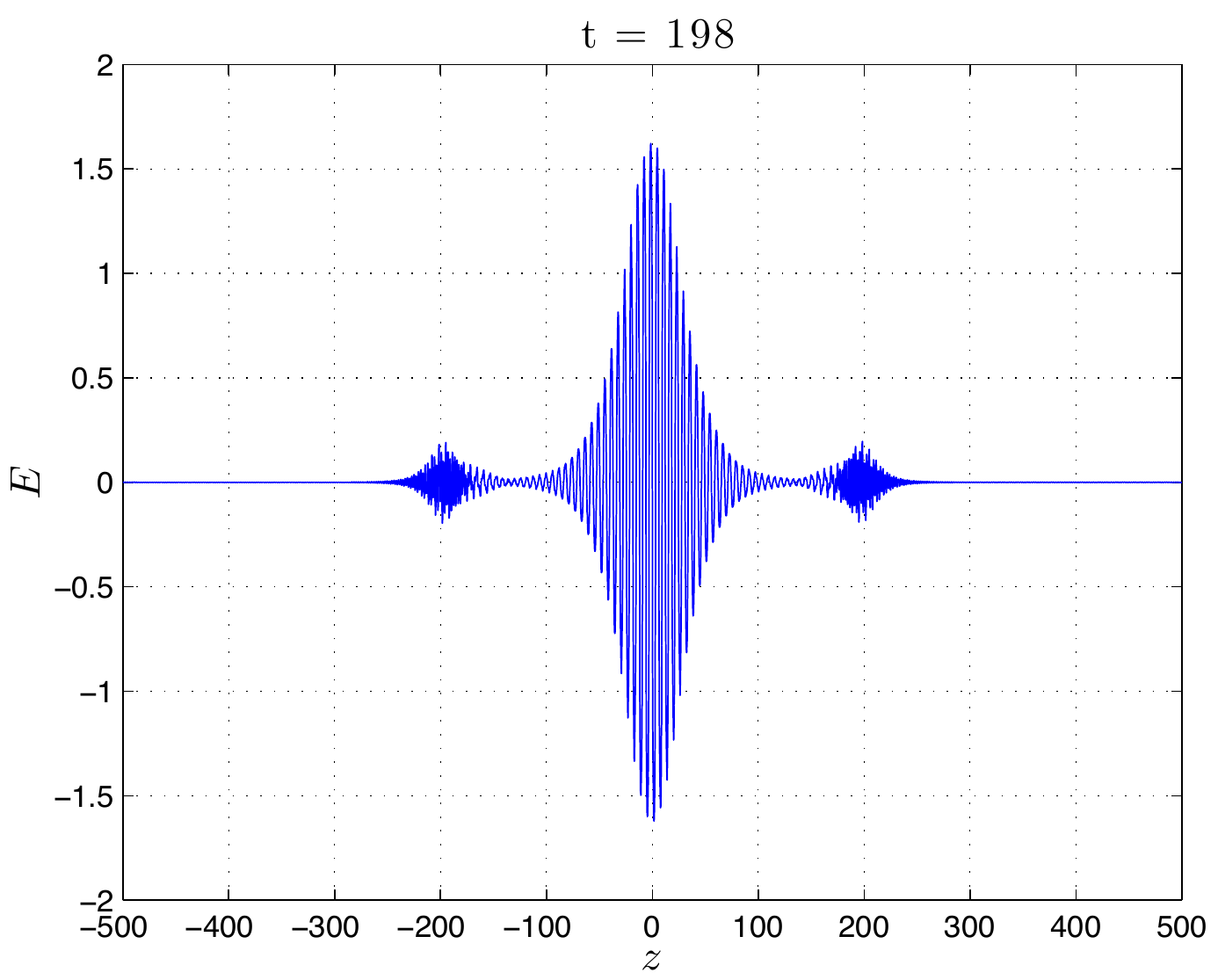}}
  \subfigure[]{\includegraphics[width=2.3in]{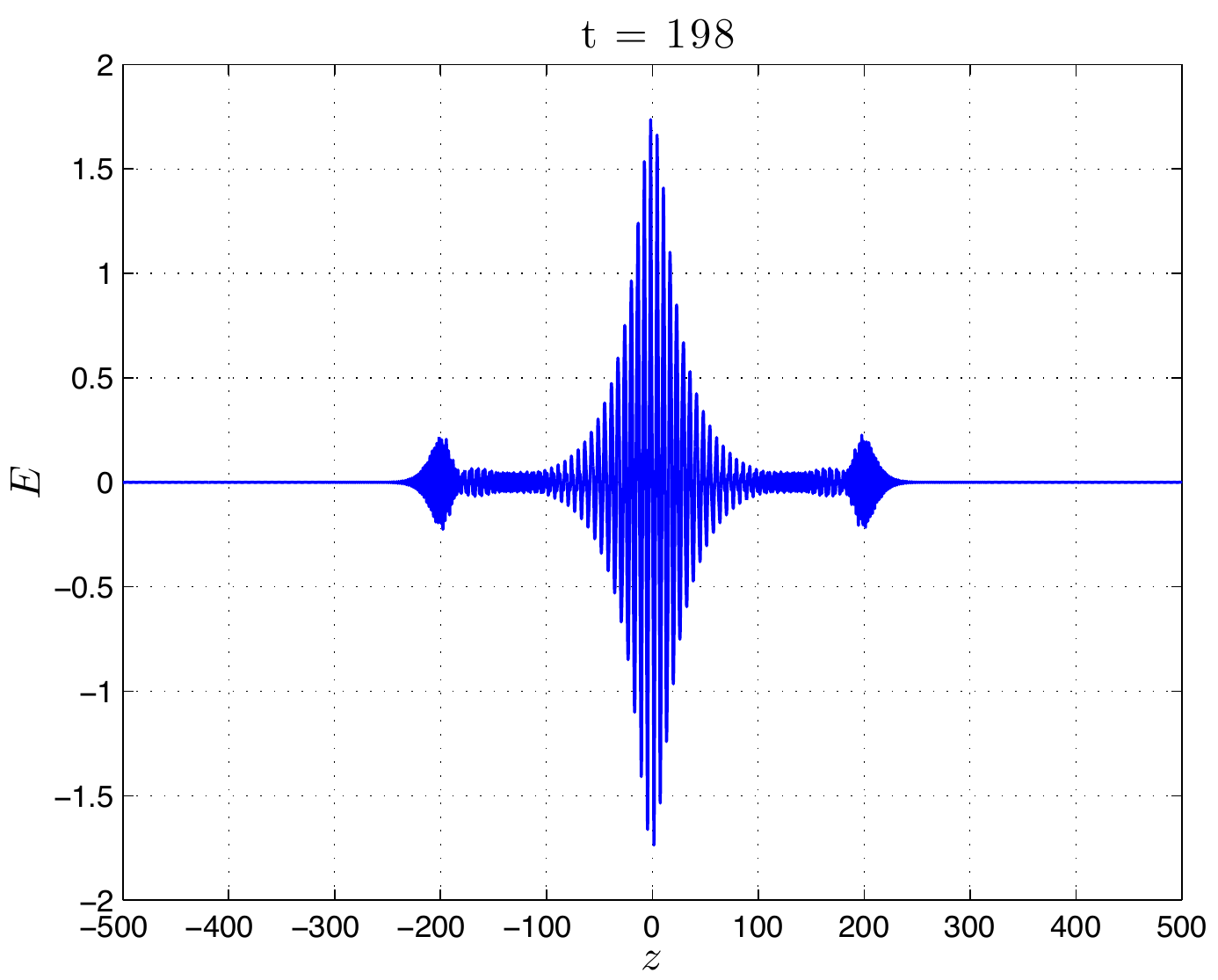}}
  \caption{On the left is a simulation of the Maxwell equations.  On
    the right is the simulation of a truncated asymptotic system,
    resolving the first and third harmonics.  Both simulations were
    initiated with the same initial conditions.  The two side pulses
    about the main wave appear to be the result of third harmonic
    generation.}
  \label{f:intro_third_harmonic}
\end{figure}

\item {\it On the microscopic scale of the carrier} there is nonlinear steepening
  and shock formation. Therefore, the solution does not evolve as a
  slowly varying envelope of a single frequency carrier wave.  The
  long-lived and spatially localized coherent structure
   which emerges has the character of a slowly
  varying envelope of a train of shocks. We call this an {\it envelope carrier-shock train}.
  Figure \ref{f:intro_carrier_shock} illustrates the shock-like small
  spatial scale behavior under slowly varying envelope.  

\begin{figure}
  \centering
  \subfigure[]{\includegraphics[width=2.4in]{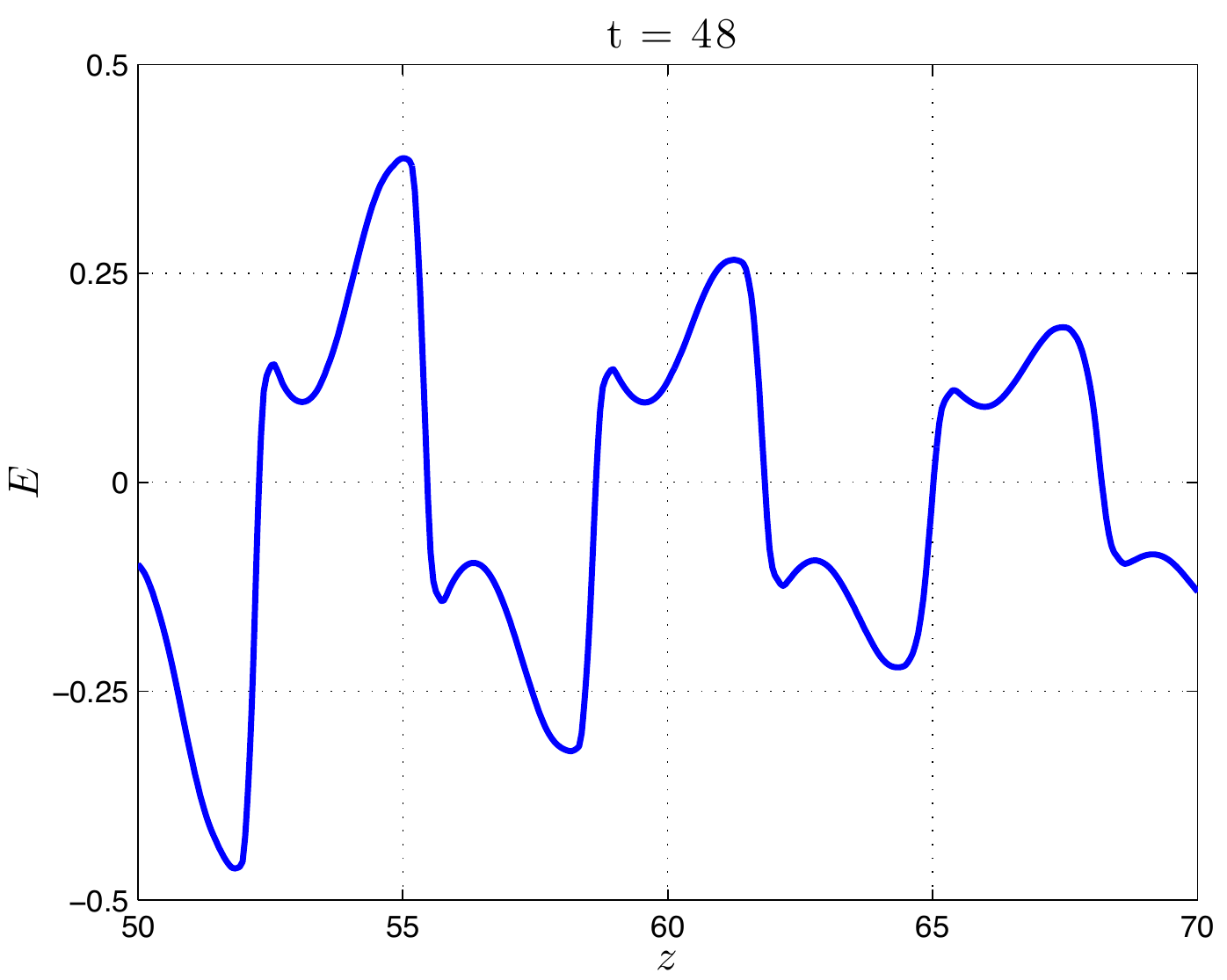}}
  \subfigure[]{\includegraphics[width=2.4in]{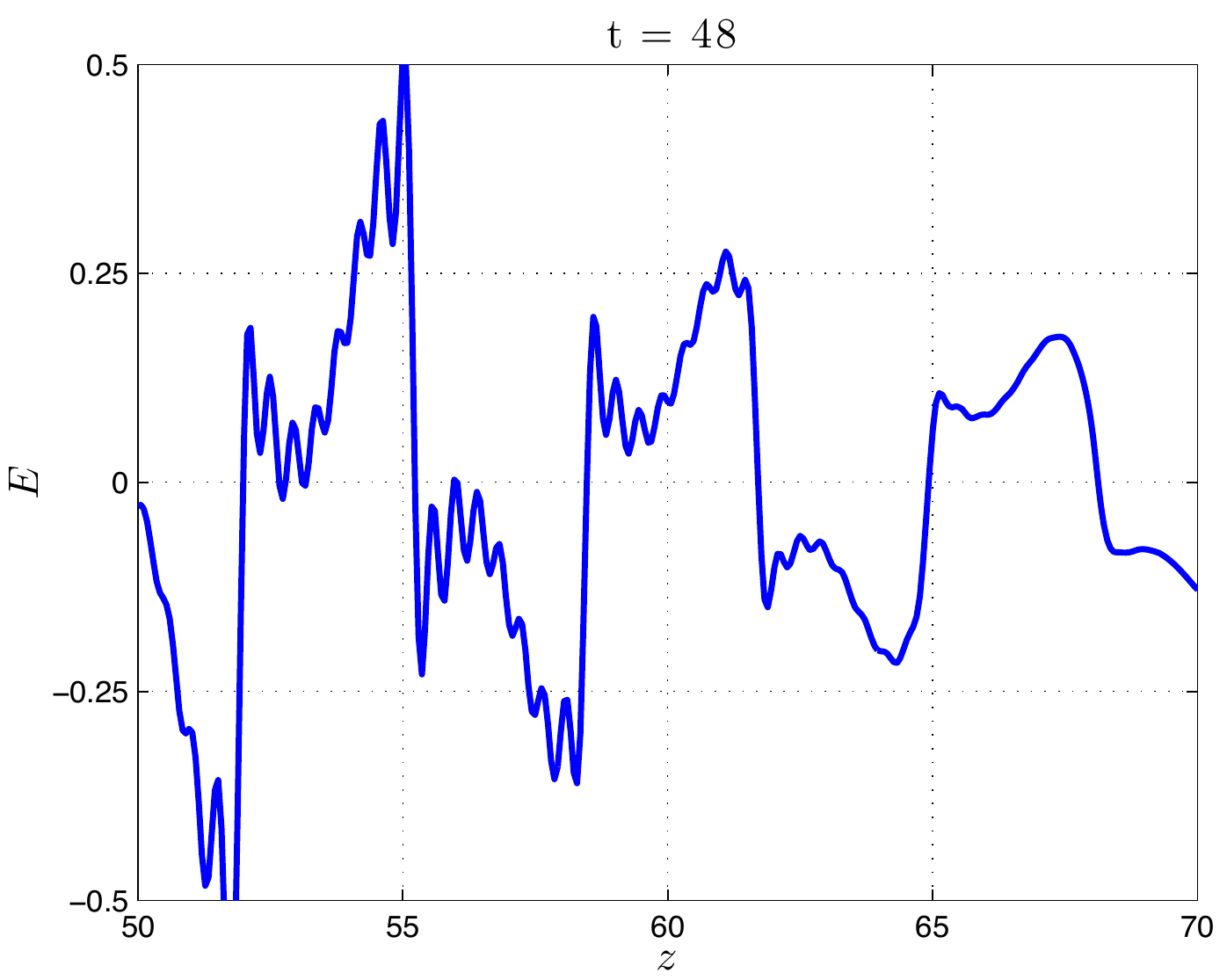}}
  \caption{On the left is a simulation of the Maxwell equations.  On
    the right is the simulation of a truncated asymptotic system.
    Both simulations were initiated with the same initial conditions.
    There is an indication of shock formation in the left.  On the
    right, we see that once sufficiently many harmonics are included,
    the Gibbs effect appears, confirming shock formation.}
  \label{f:intro_carrier_shock}
\end{figure}
\item Numerical solution of the nonlinear Maxwell equation
  \eqref{eq:maxwell1} is non-trivial due to the cubic nonlinearity.
  As a hyperbolic system, it is neither genuinely nonlinear nor
  linearly degenerate\cite{smoller1994swa,lax2006hpde,
    liu2000hyperbolic}.  To solve by finite volume methods, as we do,
  an explicit solution of the Riemann problem must be constructed.
  Details of this are given in Appendix \ref{sec:methods}.
  
 The  appropriate
  entropy condition could, in principle be derived from physical
  regularization mechanisms, which play the role of viscosity in gas
  dynamics. However, these mechanisms are not well understood.  
  However, such mechanisms and the appropriate notion of
  weak solution would respect thermodynamic principles,  which are built into our numerical scheme.


\item Using a nonlinear geometric optics expansion
  \cite{Hunter:1983vn,Hunter:1986kx,Majda:1984uq,Donnat:1997p97,Joly:1993p4514,Joly:1996p4513,Lannes:1998p4610},
  systematically keeping all resonances, we obtain nonlocal equations
  governing the interaction of all forward and backward propagating
  modes. Our asymptotic nonlocal system captures the slowly varying
  envelope of carrier-shock structures described above; see below.
  
  \bigskip

  Specifically, we introduce the general wave-form (much more general
  than a slowly varying envelope of a nearly monochromatic carrier
  plane wave), which includes all harmonics
  \begin{equation}
    E(z,t) = \eps^{1/2} \left(\ E^+(z - t, \eps z, \eps t ) + E^-(z +
      t, \eps z, \eps t)\  +\ \mathcal{O}(\epsilon)\ \right).
    \label{Eansatz}\end{equation}
  Let
  \begin{equation}
    \phi_\pm = z \mp  t,\ \ \eps t = T,\ \ {\rm and}\ \ \eps z = Z\ .
    \nn\end{equation}
  At leading order, the slow evolution of backward and forward
  components is governed by the coupled \emph{integro-differential} equations:
  \begin{subequations}
    \label{eq:integro_diff_intro}
    \begin{gather}
      \begin{split}
        \dT \Ep + \dZ \Ep&= \dphi  \mean{N( \phi_+ +   s) E^-( \phi_+ + 2  s,Z,T)}_s\\
        &\quad +\Gamma \dphi\left[
          \frac{1}{3}\paren{\Ep}^3+\Ep\mean{\paren{\Em}^2} \right],
      \end{split}
      \\
      \begin{split}
        \dT \Em - \dZ \Em&=-\dphi\mean{N(  \phi_- - s)E^+(  \phi_- -2 s, Z,T) }_s\\
        &\quad -\Gamma
        \dphi\left[\mean{\paren{\Ep}^2}\Em+\frac{1}{3}\paren{\Em}^3\right].
      \end{split}
    \end{gather}
  \end{subequations}
  Here $\mean{\cdot}$ is an averaging operation in the $\phi$
  argument;
  \begin{equation}
  \langle\ f\ \rangle\ \equiv\ \lim_{T\to\infty}\ \frac{1}{T}\ \int_0^T\ f(s)\ ds;
  \label{avg-def}
  \end{equation}
    see also section \ref{sec:asympt}. Equations
  \eqref{eq:integro_diff_intro} arise as a constraint on
  $E^\pm(\phi_\pm,Z,T)$ ensuring that the $\mathcal{O}(\epsilon)$
  error term \eqref{Eansatz} in remains small on large time scales:
  $T=\mathcal{O}(1)$ or equivalently $t=\mathcal{O}(\epsilon^{-1})$.
  
  Spatial variations in the refractive index, $N(z)$, give rise to a
  coupling of backward and forward waves. Indeed, if $N(z)\equiv0$ and
  one specifies data for the system \eqref{eq:integro_diff_intro} at
  $t=0$ with non-zero forward components ($E^+\ne0$) and, no backward
  components ($E^-=0$) then, formally, $E^-$ remains zero for all
  time, {\it i.e.} no backward waves are generated.  Continuing with
  this assumption of $N = 0$ and $E^-_0 = 0$, if we let $V(\phi, T) =
  E^+(\phi, Z_0 - T, T)$, with $Z_0$ arbitrary, then $V$ satisfies
  \[
  \partial_T V = \tfrac{\Gamma}{3} \partial_\phi(V^3).
  \]
  This generalized Burger's equation will gives rise to a finite time
  singularity.  We revisit this observation in the discussion, section
  \ref{sec:discussion}, when considering how singularities might
  appear when the linear coupling between backwards and forwards waves
  is restored, $N \neq 0$.

    


\item The nonlocal equations may also be written as an infinite system
  of coupled mode equations.  In the case where $E^\pm$ is $2\pi-$
  periodic in $\phi_\pm$, the integro-differential equation
  \eqref{eq:integro_diff_intro} reduces to an infinite system of
  coupled mode equations for the Fourier coefficients
  $\left\{E^\pm_p(Z,T) : p\in\mathbb{Z} \right\}$:
  \begin{subequations}\label{e:mode_intro}
    \begin{gather}
      \label{eq:mode_intro_p}
      \begin{split}
        \dT \Ep_p + \dZ \Ep_p = \mathrm{i}p N_{2p}{\Em_p} +
        \mathrm{i}p\frac{\Gamma}{3}&\left[\sum\Ep_q
          \Ep_r \Ep_{p-q-r} \right.\\
        &\quad\left.+3\paren{\sum \abs{\Em_q}^2} \Ep_p \right],
      \end{split}
      \\
      \label{eq:mode_intro_m}
      \begin{split}
        \dT \Em_p - \dZ \Em_p = \mathrm{i}p \bar{N}_{2p}{\Ep_p}
        +\mathrm{i}p\frac{\Gamma}{3} &\left[\sum \Em_q
          \Em_r \Em_{p-q-r} \right.\\
        &\quad \left.+3\paren{\sum \abs{\Ep_q}^2} \Em_p \right].
      \end{split}
    \end{gather}
  \end{subequations}
  We call this system the {\it extended nonlinear coupled mode
    equations} (xNLCME).
  xNLCME reduces to the classical NLCME if we neglect higher
  harmonics.
\item Simulations of successively higher dimensional mode truncations
of \eqref{e:mode_intro}  show improved resolution of the carrier shocks under a slowly
  varying envelope, whose scale is captured by a comparatively low
  order truncation.  Indeed, Figure \ref{f:intro_third_harmonic} (b)
  shows that inclusion of the third harmonic in the asymptotic system
  resolves the large scale feature, while inclusion of additional
  harmonics in Figure \ref{f:intro_carrier_shock} (b) shows the Gibbs
  effect, expected for a finite Fourier representation of a
  discontinuous function.  This demonstrates that our asymptotic
  analysis leads to equations capturing the essential features of
  nonlinear Maxwell.  However, if we consider how energy, initially
  only in the first harmonic, is redistributed in time, we see in
  Figure \ref{f:eng_intro} that most of the energy persists in the
  first harmonic.  This reflects the partial success of NLCME as a
  model for periodic nonlinear Maxwell.
\end{enumerate}

\begin{figure}
  \centering
  {\includegraphics[width=3in]{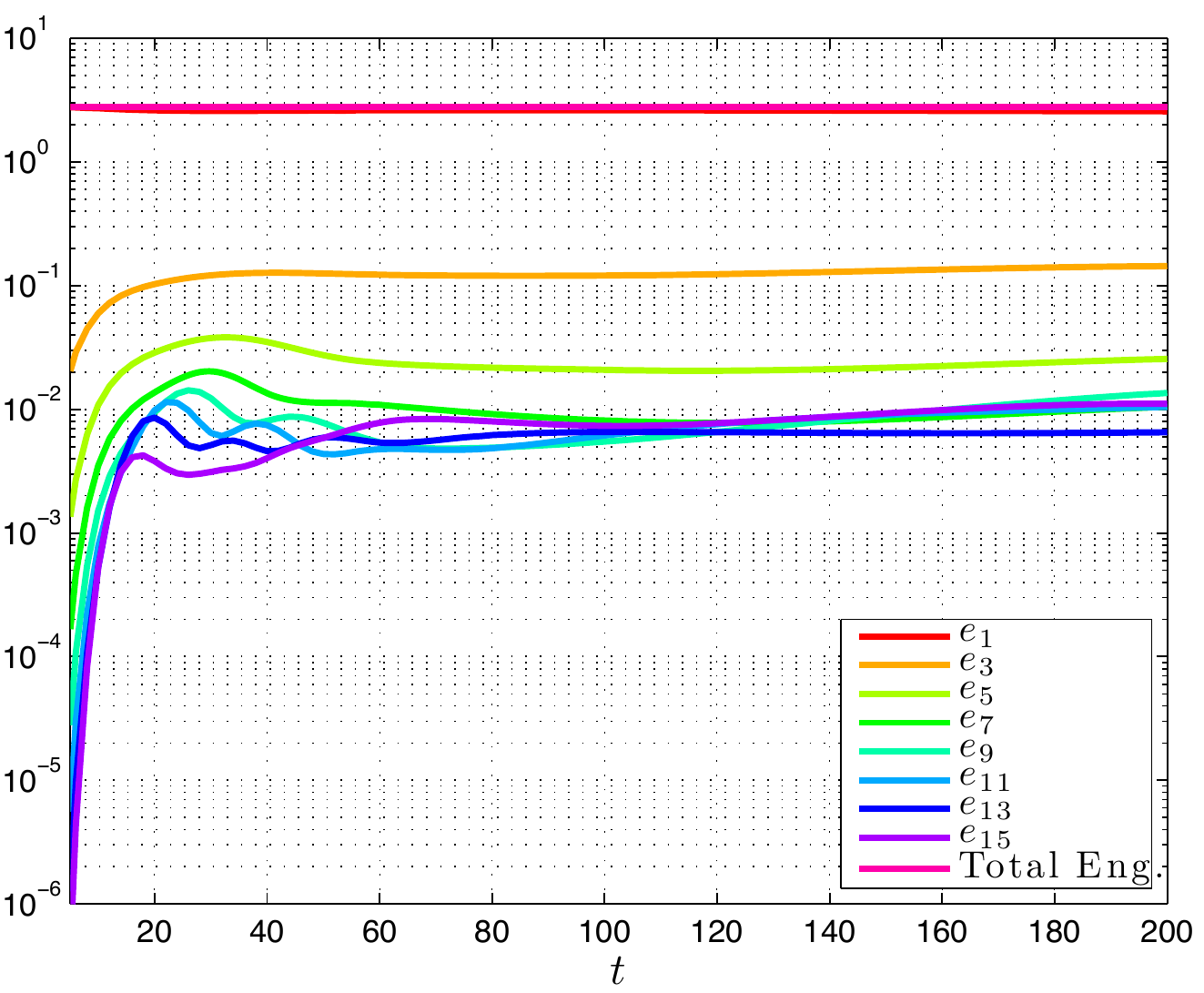}}
  \caption{Truncating \eqref{e:mode_intro} to odd harmonics
    $\abs{p}\leq 16$, we simulate the initial value problem an NLCME
    soliton in the first harmonic, and the others zero.  The above
    time series of the energy associated with each harmonic, $e_p$,
    shows that most of the energy continues to reside in the first
    harmonic.}
  \label{f:eng_intro}
\end{figure}are

\noindent{\bf Relation to previous work:}\  Some of the earliest
examinations on optical shocks can be found in Rosen,
\cite{Rosen:1965p6493}, and, DeMartini {\it et al.}
\cite{DeMartini:1967p6494}.  In these works, the authors applied the
method of characteristics to a unidicretional model.  
Kinsler and Kinsler {\it et al.} have continued to examine this
problem, and have developed an algorithm for detecting the onset of
shock formation, \cite{Kinsler:2007p6495,Kinsler:2007p6542}.  Carrier shocks were also
examined by Flesch, Moloney, \& Mlejnek, \cite{flesch1996cws}, for spatially homogeneous
Maxwell system with chromatic dispersion, modeled via a time-nonlocal
Lorentzian polarization response.  Ranka, Windeler, \& Stentz have found experimental evidence of
optical shocks, \cite{Ranka:2000p6690}.  In their work, a
  monochromatic pulse with sufficient power steepened and
  generated a broadband optical continuum.

Coherent structures in nonlinear and periodic media have also been
studied by LeVeque, LeVeque \& Yong, and Ketcheson
\cite{leveque2002fvm,leveque2003swl,Ketcheson:2009fk} in a model for
heterogeneous nonlinear {\it elastic} media.  They considered order
one solutions in high contrast, rapidly varying, periodic
structures. Their simulations yielded localized structures on the
scale of many periods with oscillations on the scale of the
period. For piecewise constant (discontinuous) periodic structures,
they have a discontinuous carrier shock-like character on the scale of
the period, though this is due to discontinuities in the medium, the
fluxes remain continuous. A two-scale (homogenization) expansion
yields a nonlinear dispersive equation, with solitary waves, similar
to the computed solution envelope.  In their physical regime, the
variations in the properties of the media and the nonlinearity are
$\mathcal{O}(1)$.  In contrast, we consider an asymptotic regime where
the constrast of the periodic structure and nonlinearity are of the
same order, $\mathcal{O}(\eps)$. Furthermore, the initial condition
has two scales (envelope and carrier scales), where the carrier wave
length is of the same order, indeed in resonance with, the periodic
structure.  These different scalings lead to different asymptotic
descriptions.  An early example of the interactions between nonlinearity and
a periodic structure was in atmospheric science, studied by Majda {\it
  et al.}, \cite{Majda:1999p2}.  In this work, a model of the
interaction of equatorial waves with topography gives rise to
nonsmooth profiles (in this case, solitary waves with corner
singularities).


Finally, systems of coupled modes have also been examined in prior
works, though the work is typically limited two just two harmonics,
such as a first and second harmonic system or a first and third
harmonic system. Such a system was studied by Tasgal, Band, \& Malomed
\cite{Tasgal:2005p6335}, who were able to find stable {\it polychromatic}
solitons in a first and third harmonic system.


%

\bigskip

An outline of this paper is as follows.  In Section \ref{sec:nlcme},
we review how NLCME arises as an approximation of nonlinear Maxwell.
Results of Maxwell Simulations, showing the coherent structures and
shocks, are given presented in Section \ref{sec:observations}.  We
then present our derivation of xNLCME in Section \ref{sec:asympt},
followed by simulations of this system in Section \ref{s:xnlcme_sims}.
We discuss all of these results in Section \ref{sec:discussion}.


{\bf Acknowledgements:} The authors would like to thank R.R. Rosales
for discussions during the early stages of this work on the use
nonlinear geometrical optics.  We also thank M. Pugh, D. Ketcheson, R.J. LeVeque,
and C. Sulem for helpful discussions. GS was supported in part by NSF-IGERT grant
DGE-02-21041, NSF-CMG grant DMS-05-30853, and NSERC. MIW was supported in part by
NSF grants DMS-07-07850 and DMS-10-08855. MIW would also like to
acknowledge the hospitality of the Courant Institute of Mathematical
Sciences, where he was on sabbatical during the preparation of this
article.

\section{Nonlinear Maxwell and NLCME}
\label{sec:nlcme}
In this section we briefly review how NLCME arises from nonlinear
Maxwell with a periodically varying index of refraction. We also
identify the mathematical inconsistency of NLCME as a description of the
wave-envelope.

First, we write the nonlinear Maxwell equation \eqref{eq:maxwell1} as
\begin{equation}
  \label{e:maxwell_scalar}
  \dt^2\paren{n(z)^2E + \chi E^3} = \dz^2 E
\end{equation}
with index of refraction
\begin{equation}
  n(z) = 1 + \eps N(z), \quad 0 < \epsilon \ll 1,
\end{equation}
where $N(z)$ is $2 \pi$ periodic with mean zero and Fourier series:
\begin{equation}
  N(z) = \sum_{p \in \mathbb{Z}\setminus\{0\}} N_p e^{\mathrm{i} p z}.
  \label{N-Fourier}
\end{equation}

We shall seek solutions which
incorporate \begin{inparaenum}[(i)] \item slow variations in time and
  space, due to the weak modulation about a constant refractive
  index; \item a scaling of the wave-field which seeks solutions in
  which the effects of dispersion and nonlinearity are in balance:
  \begin{equation}
    E^\epsilon(z,t)\ =\ \epsilon^{1\over2}\ \mathcal{E}^\epsilon(z,t;Z,T),\ \ Z=\epsilon z,\ \ T=\epsilon t.
    \label{Eassumptions}
  \end{equation}
\end{inparaenum}

Rewriting \eqref{e:maxwell_scalar} in terms of new variables dependent
$\mathcal{E}^\epsilon$ and independent $(z,t,Z,T)$ variables, we
obtain:
\begin{equation}
  \left(\partial_t^2-\partial_z^2\right) \mathcal{E}^\epsilon\ 
  +\ \epsilon\left(2\partial_t\partial_T \mathcal{E}^\epsilon-2\partial_z\partial_Z \mathcal{E}^\epsilon+2N(z) \mathcal{E}^\epsilon+\chi\left( \mathcal{E}^\epsilon\right)^3\ \right)\ +\ \mathcal{O}(\epsilon^3)\ =\ 0.
  \nn\end{equation}
Formally expanding $ \mathcal{E}^\epsilon$ as
\begin{equation}
  \mathcal{E}^\epsilon(z,t,Z,T)\ =\  \mathcal{E}_0(z,t,Z,T)\ +\ \epsilon\ \mathcal{E}_1(z,t,Z,T)\ +\ \dots
  \nn\end{equation}
we obtain the following hierarchy for $\mathcal{E}_j(z,t,Z,T)$, $j
\geq 0$:
\begin{equation}
  \begin{split}
    \mathcal{O}(\epsilon^0) & \left(\partial_t^2-\partial_z^2\right)
    \mathcal{E}_0=0 \\
    \mathcal{O}(\epsilon^1) &   \left(\partial_t^2-\partial_z^2\right) \mathcal{E}_1 = -2\partial_t\partial_T \mathcal{E}_0+2\partial_z\partial_Z \mathcal{E}_0-2N(z) \mathcal{E}_0-\chi\left( \mathcal{E}_0\right)^3 \\
    &\vdots\\
    \mathcal{O}(\epsilon^j) & \left(\partial_t^2-\partial_z^2\right)
    \mathcal{E}_j =\text{expressions in terms of}\ \mathcal{E}_l,\ \ 0\le l\le j-1 \\
    & \vdots
  \end{split}
  \label{hierarchy}
\end{equation}

Solving the $\mathcal{O}(\epsilon^0)$ equation yields:
\begin{equation}
  \mathcal{E}_0(z,t,Z,T) =
  \mathcal{E}^+(Z,T)e^{i(z- t)} + \mathcal{E}^-(Z, T)e^{-i(z+ t)} + \cc
  \label{svea}\end{equation}
Thus,   the leading order  consists of backward and forward propagating waves, 
modulated by the slow envelope amplitude functions
 $\mathcal{E}^\pm(Z,T)$, which are to be determined.
  
Substitution of \eqref{svea} into the $\mathcal{O}(\epsilon^1)$
equation for $\mathcal{E}_1$ yields the equation: {\begin{equation}
    \begin{split}
      & \left(\partial_t^2-\partial_z^2\right) \mathcal{E}_1\\
      &=\bracket{2i \partial_T \mathcal{E}^+ - 2i \partial_Z
        \mathcal{E}^+ - 2 N_2 \mathcal{E}^- - 3 \chi\paren{
          \abs{\mathcal{E}^+}^2 + 2 \abs{\mathcal{E}^-}^2} \mathcal{E}^+  } e^{i(z-   t)}\\
      &+\bracket{2i \partial_T \mathcal{E}^- - 2i \partial_Z
        \mathcal{E}^+ - 2 \bar{N}_2 \mathcal{E}^+ - 3 \chi\paren{
          \abs{\mathcal{E}^-}^2 + 2  \abs{\mathcal{E}^+}^2 } \mathcal{E}^-}e^{-i(z+   t)} \\
      &+\paren{\mathcal{E}^+}^3 e^{3i(z- t)} +\paren{\mathcal{E}^-}^3
      e^{-3i(z+ t)} + \cc + \text{non-resonant terms}
    \end{split}
    \label{E1eqn} 
  \end{equation}}
We have used that $N_0 = 0$ and 
\begin{equation}
  \begin{split}
    &N(z) (\mathcal{E}^+ e^{i(z- t)} + \mathcal{E}^- e^{-i(z+t)})\\
    &\quad = N_{-2} \mathcal{E}^+ e^{-i(z+ t)} + N_2 \mathcal{E}^-
    e^{i(z- t)} +\cc+ \text{non-resonant terms}.
  \end{split}
\end{equation}
Each term, explicitly written on the right hand side of \eqref{E1eqn}, is resonant with the
kernel of $\left(\partial_t^2-\partial_z^2\right)$ .  It follows that
the coefficients of {\bf all} harmonic plane waves: $e^{\pm i q (z-
  t)}$ and $e^{\pm i q (z+ t)},\ q\in\mathbb{Z}$ must vanish for
$\mathcal{E}_1$ to be bounded in $t$.  
 
The vanishing of the coefficients of $e^{i(z- t)}$ and $e^{-i(z+ t)}$
yields the nonlinear coupled mode equations (NLCME):
\begin{subequations}
  \label{eq:nlcme}
  \begin{align}
    \dT \mathcal{E}^+ +  \dZ \mathcal{E}^+ = i  N_2 \mathcal{E}^- + i\Gamma\paren{\abs{\mathcal{E}^+}^2 +  2 \abs{\mathcal{E}^-}^2 } \mathcal{E}^+,\\
    \dT \mathcal{E}^- - \dZ \mathcal{E}^- = i \bar{N}_{2}
    \mathcal{E}^+ + i \Gamma \paren{\abs{\mathcal{E}^-}^2 + 2
      \abs{\mathcal{E}^+}^2 } \mathcal{E}^-,
  \end{align}
\end{subequations}
where $ \Gamma \equiv \frac{3}{2}\chi$ and $\bar{N}_2=N_{-2}$.  The
initial value problem for \eqref{eq:nlcme} is well-posed
\cite{goodman01npl}.  NLCME also has explicit family of {\it
  gap-soliton} solutions; see Appendix \ref{s:nlcme_soliton}.

However, requiring $\mathcal{E}^\pm$ to satisfy \eqref{eq:nlcme} removes only
the lowest harmonic resonances. This is the approximation invoked in
the physics literature; see the survey \cite{desterke1994gs} and
references cited therein.

Note however that the remaining explicitly displayed terms on the right hand side of
\eqref{E1eqn} are resonant as well and induce linear in time
growth. If we choose to remove the resonant terms proportional to
$e^{3i(z- t)}$ and $e^{-3i(z+ t)}$ by including slow modulations of
these plane waves at $\mathcal{O}(\epsilon^0)$, nonlinearity and
parametric forcing through $N(z)$ will generate yet other resonant
harmonics.
 
{\it A leading order solution which does not generate resonant terms
  at higher order must contain \underline{all} harmonics. Thus, NLCME
  is mathematically inconsistent. In section \ref{sec:asympt} we derive
  an integro-differential equation, which consistently incorporates all
  resonances. As seen from our numerical
  and asymptotic studies,   this nonlocal nonlinear
  geometrical optics equation more accurately capture features on
  both small and large spatial scales, {\it e.g.} changes in the envelope
   due to higher harmonic generation, as well as carrier shock formation.  }



\section{Simulations of nonlinear periodic Maxwell}
\label{sec:observations}
In this section we discuss the results of numerical simulations, based
on the algorithms of Appendix \ref{sec:methods}, of the nonlinear and
periodic Maxwell equations \eqref{e:maxwell_scalar}.
\begin{itemize}
\item In section \ref{s:maxwell_solitons} we show that for Cauchy
  initial data derived from the classical NLCME soliton, there evolve
  spatially localized soliton-like states which persist on long time
  scales.  We discuss aspects of the large scale (envelope) structure
  of such states, which are consistent with the NLCME soliton, as well
  as significant deviations.
\item In section \ref{s:maxwell_shocks} we show that smoothness breaks
  down in finite time. In particular, we observe shock formation on
  the fast spatial scale of the carrier wave, while a slowly varying
  envelope evolves smoothly.
\end{itemize}

We begin by expressing \eqref{e:maxwell_scalar} as a first order
system:
\begin{equation}
  \dt \begin{pmatrix} n(z)^2 E +  \chi E^3 \\ B\end{pmatrix} +
  \dz \begin{pmatrix} -B\\-E\end{pmatrix}=0.
\end{equation}
We introduce the scaling $(E,B,D)^T = \epsilon^{1/2}
(\tilde{E},\tilde{B},\tilde{D})$, and expressing the equations in
terms of the variables: $(\tilde{D}, \tilde{B})$ coordinates.
Dropping tildes, this is
\begin{equation}
  \label{e:rescaled_maxwell}
  \dt \begin{pmatrix}D \\ B\end{pmatrix} +
  \dz \begin{pmatrix} -B\\-E(D,z)\end{pmatrix}=0. 
\end{equation}
where $E(D,z)$ is the unique real solution of
\begin{equation}
  \label{e:rescaled_closure}
  D = n(z)^2E + \eps \chi E^3
\end{equation}

\subsection{Soliton-like coherent structures}
\label{s:maxwell_solitons}

As is well known \cite{christodoulides-joseph:89,aceves1989sit} NLCME
has spatially localized gap soliton solutions. We use the analytical
expression for the gap soliton to generate Cauchy initial data,
$E(z,0),\ \partial_tE(z,0)$ for \eqref{e:maxwell_scalar} and
numerically simulate the evolution.
 
Using \eqref{svea} and the leading order approximation for the
magnetic field $B_1^\pm = \mp E_1^\pm$, NLCME soliton data (see
\eqref{eq:NLCME_soliton} in Appendix \ref{s:nlcme_soliton}) can be
seeded into Maxwell using
\begin{subequations}
  \begin{align}
    E &= \mathcal{E}^+(\eps z, \eps t)e^{i(z-t)} + \mathcal{E}^-(
    \eps z, \eps t) e^{-i(z+t)} + \cc,\\
    B &= - \mathcal{E}^+(\eps z, \eps t)e^{i(z-t)} +
    \mathcal{E}^+(\eps z, \eps t)e^{i(z-t)} + \cc
  \end{align}
\end{subequations}
We obtain $D$ via \eqref{e:rescaled_closure} and evaluate at $t=0$ to
get the initial condition.

For a spatially varying index of refraction, we take
\begin{equation}
  \label{e:index}
  N(z) = \tfrac{4}{\pi} \cos( 2 z), \quad {\it i.e.}\ N_2 = N_{-2} =\tfrac{2}{\pi},\ \ N_p=0,\ \ |p|\ne2,
\end{equation}
$\eps = 0.0625$, and $\chi=1$ ($\Gamma = \tfrac{3}{2}$). The results
of our simulations appear in a - d of Figures
\ref{f:traveling_soliton} and \ref{f:standing_soliton}.  While there
is attenuation in amplitude and some dispersive spreading of energy,
the solution remains spatially localized over long time intervals. Not
only is there a persistence of the localization (with the periodic
medium), but also there is good pointwise agreement with the NLCME
approximation; see Figure \ref{f:traveling_soliton_zoom}.

Frames e - h of Figures \ref{f:traveling_soliton} and
\ref{f:standing_soliton} display the corresponding results in the
absence of a periodic structure, {\it i.e.} $N(z) \equiv 0$.  The
delocalization, dispersive spreading and attenuation of the wave
amplitude is greatly enhanced. To understand this heuristically, note
that a gap soliton is a localized state whose frequency lies in the
spectral gap of the linearized PDE at the zero solution. A focusing
nonlinearity adds a (self-consistent) potential well, creating a
(nonlinear) defect mode with frequency lying in this spectral gap. If
$N(z)\equiv0$ then the linearization at the zero state has {\it no}
spectral gap. Thus, a oscillating with the gap soliton frequency would
couple to radiation modes and dispersively spread and attenuate.
This mechanism is discussed, for example, in \cite{Soffer-Weinstein:98}.


\begin{figure}
  \centering

    Simulations with varying refractive index, \eqref{e:index}:

  \subfigure[]{\includegraphics[width=2.2in]{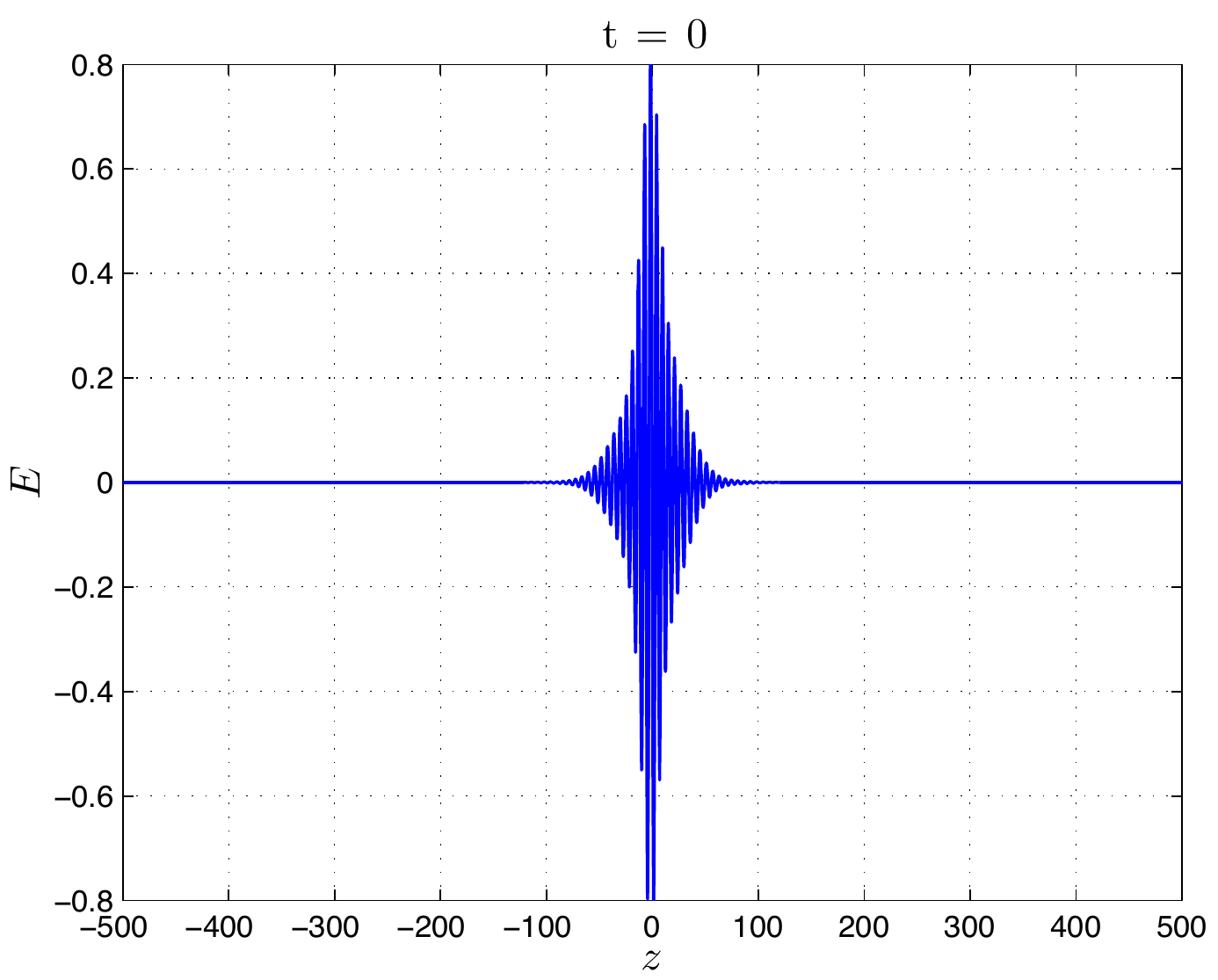}}
  \subfigure[]{\includegraphics[width=2.2in]{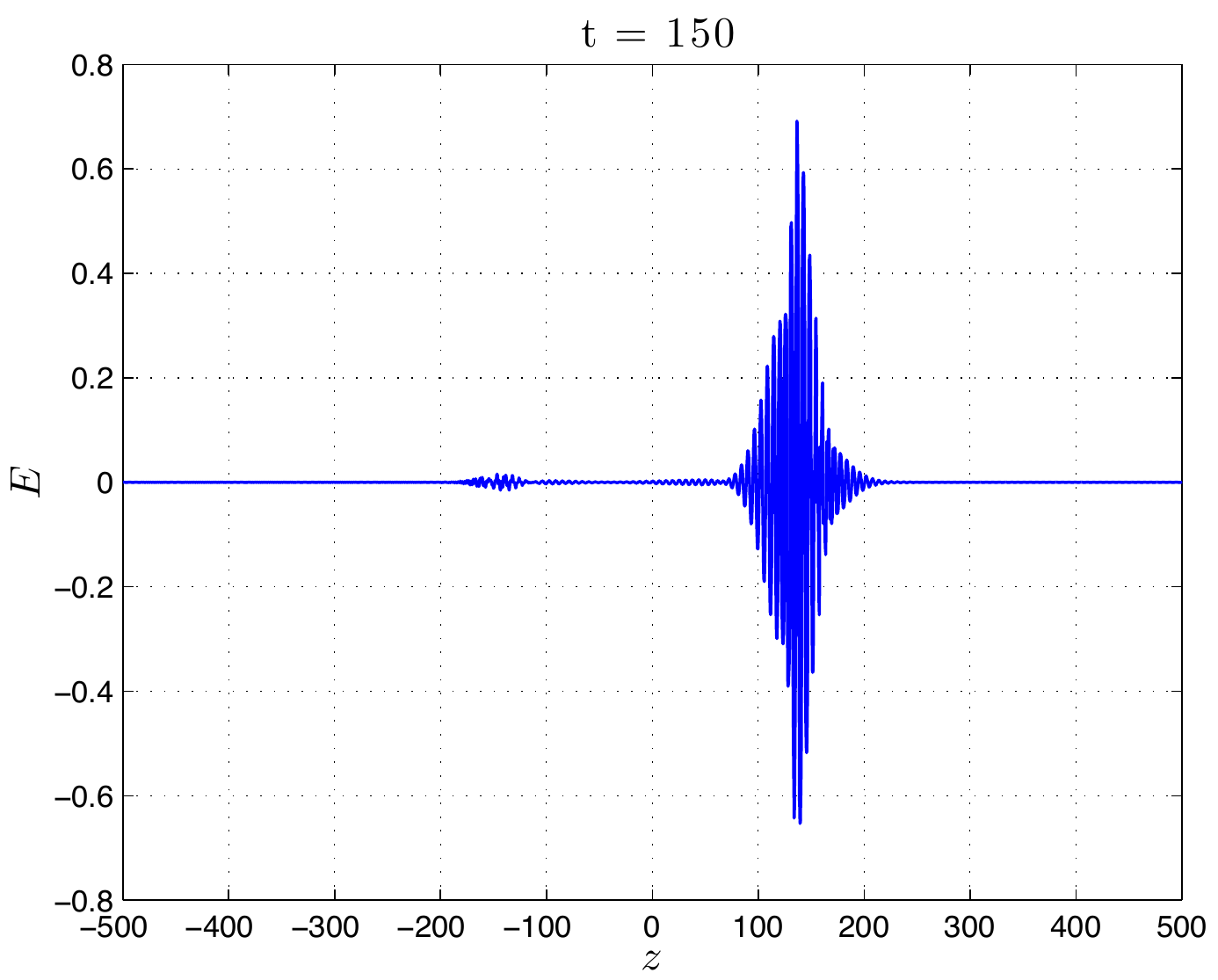}}

  \subfigure[]{\includegraphics[width=2.2in]{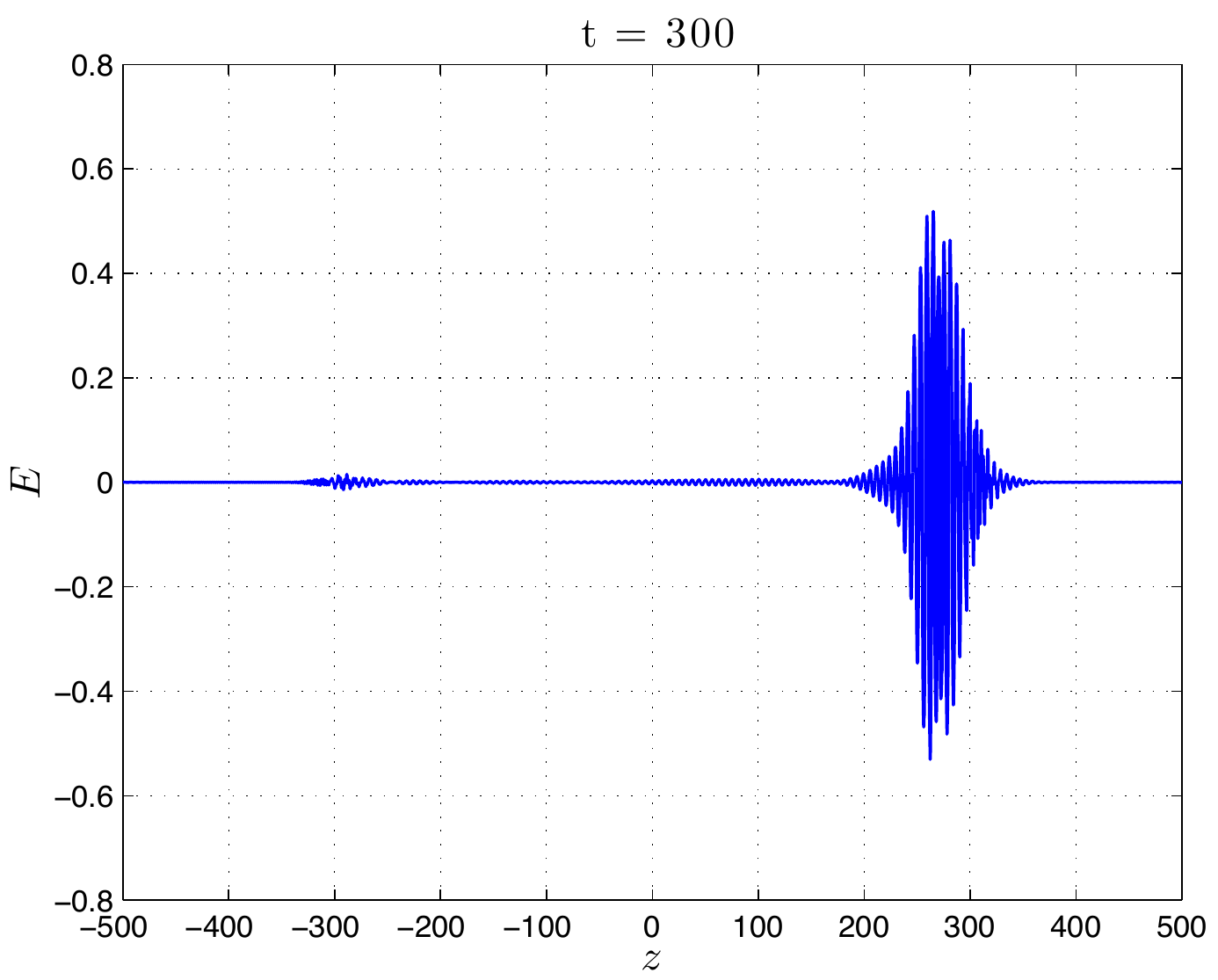}}
  \subfigure[]{\includegraphics[width=2.2in]{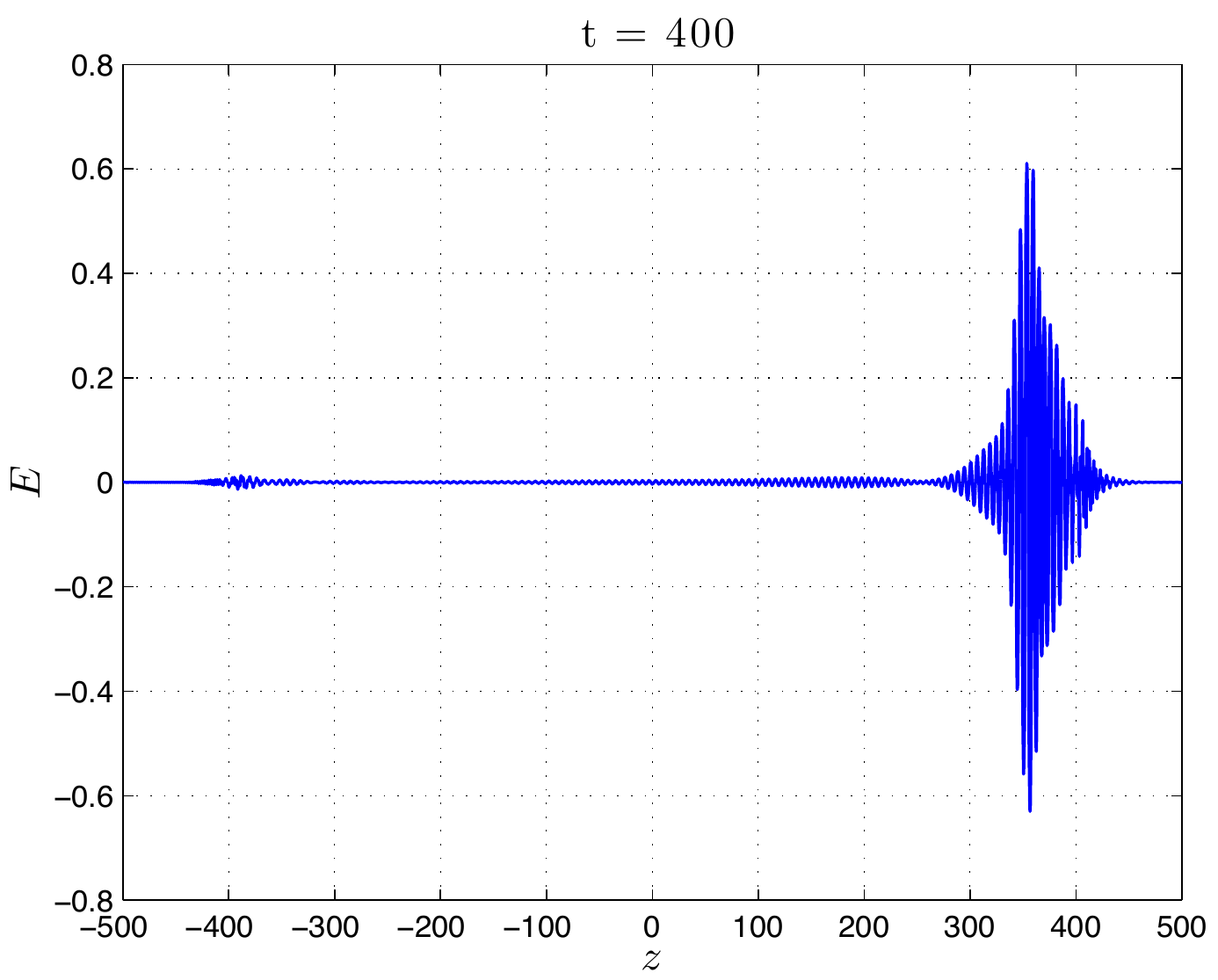}}

    {Simulations with constant refractive index, $N(z) = 0$:}

  \subfigure[]{\includegraphics[width=2.2in]{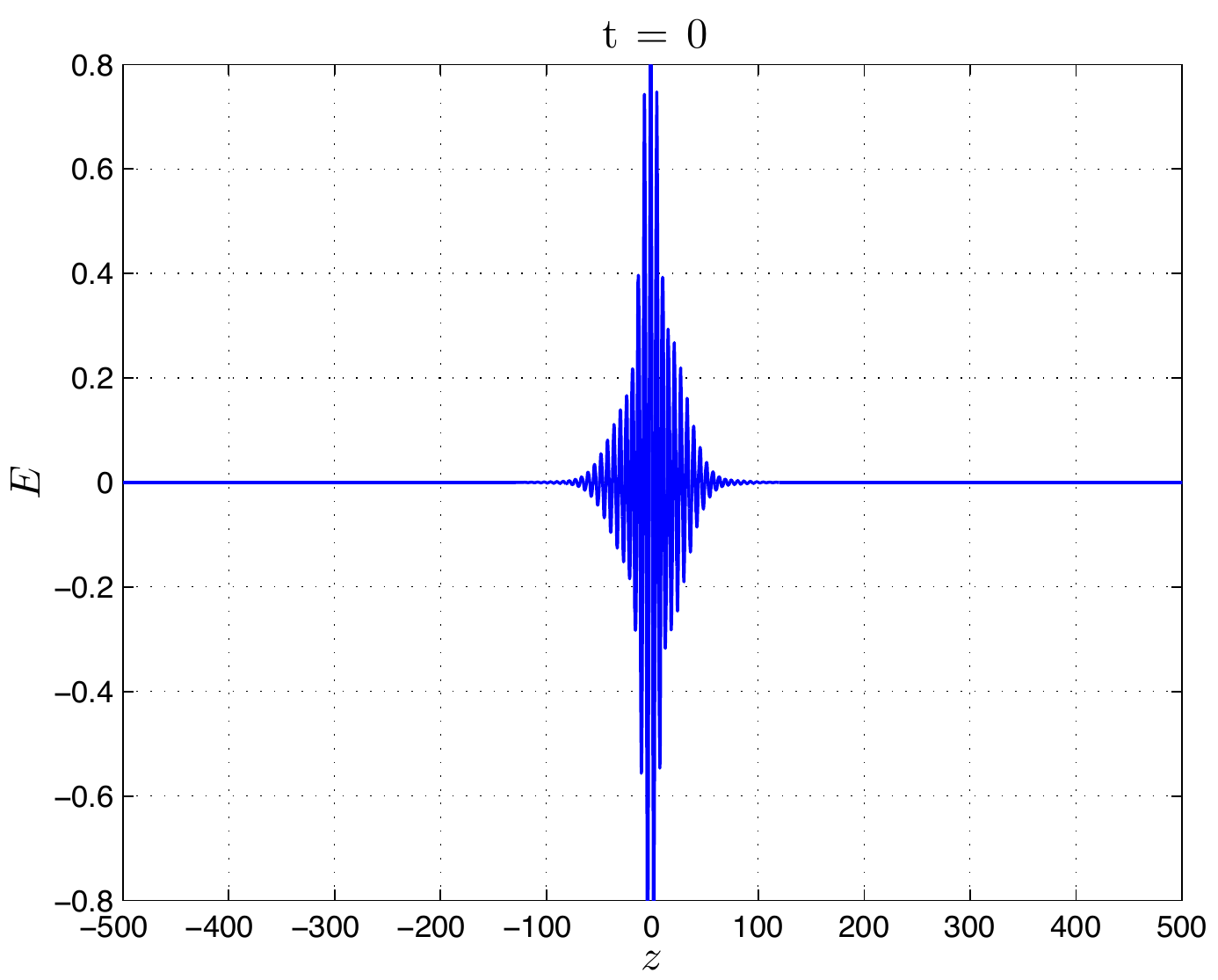}}
  \subfigure[]{\includegraphics[width=2.2in]{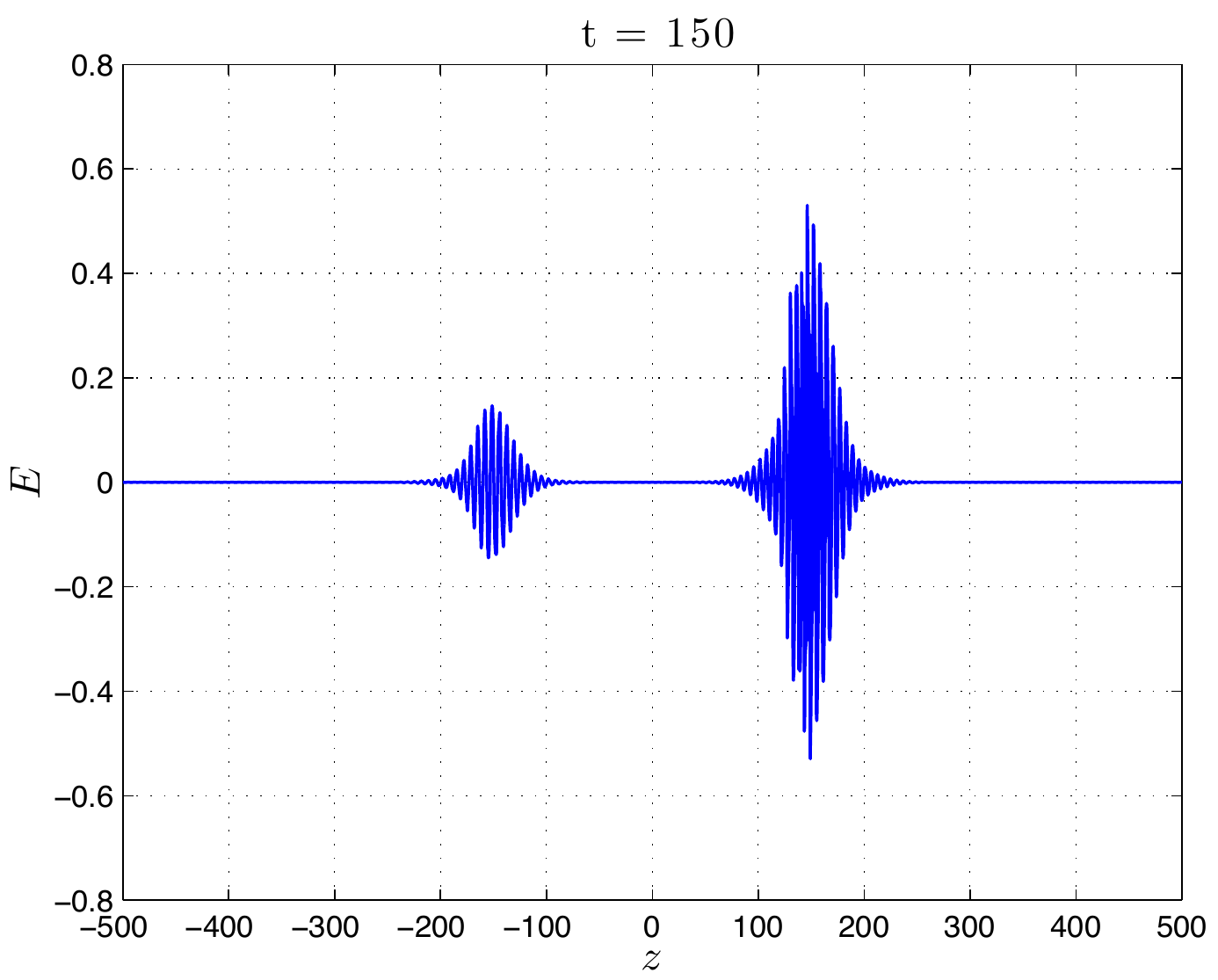}}

  \subfigure[]{\includegraphics[width=2.2in]{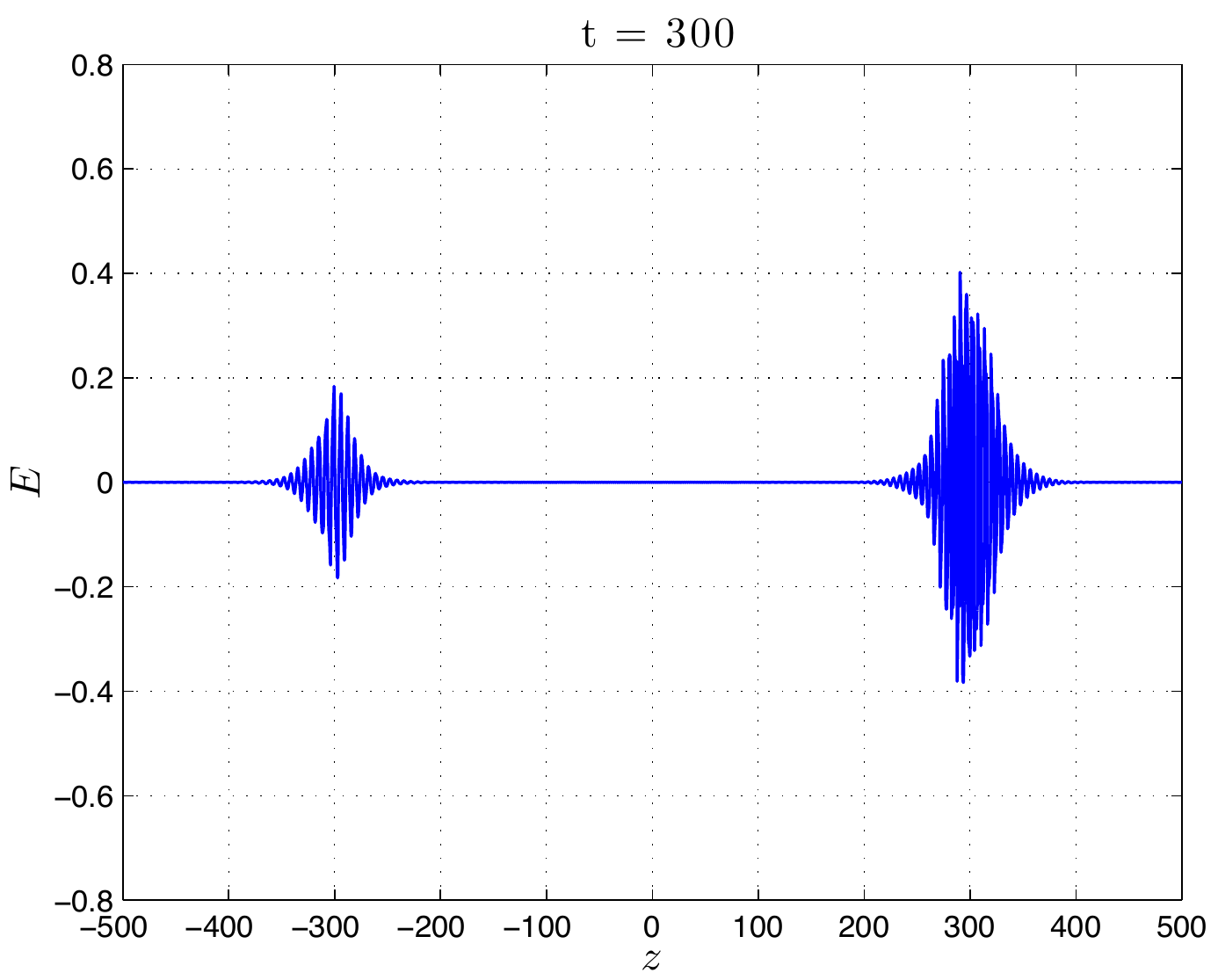}}
  \subfigure[]{\includegraphics[width=2.2in]{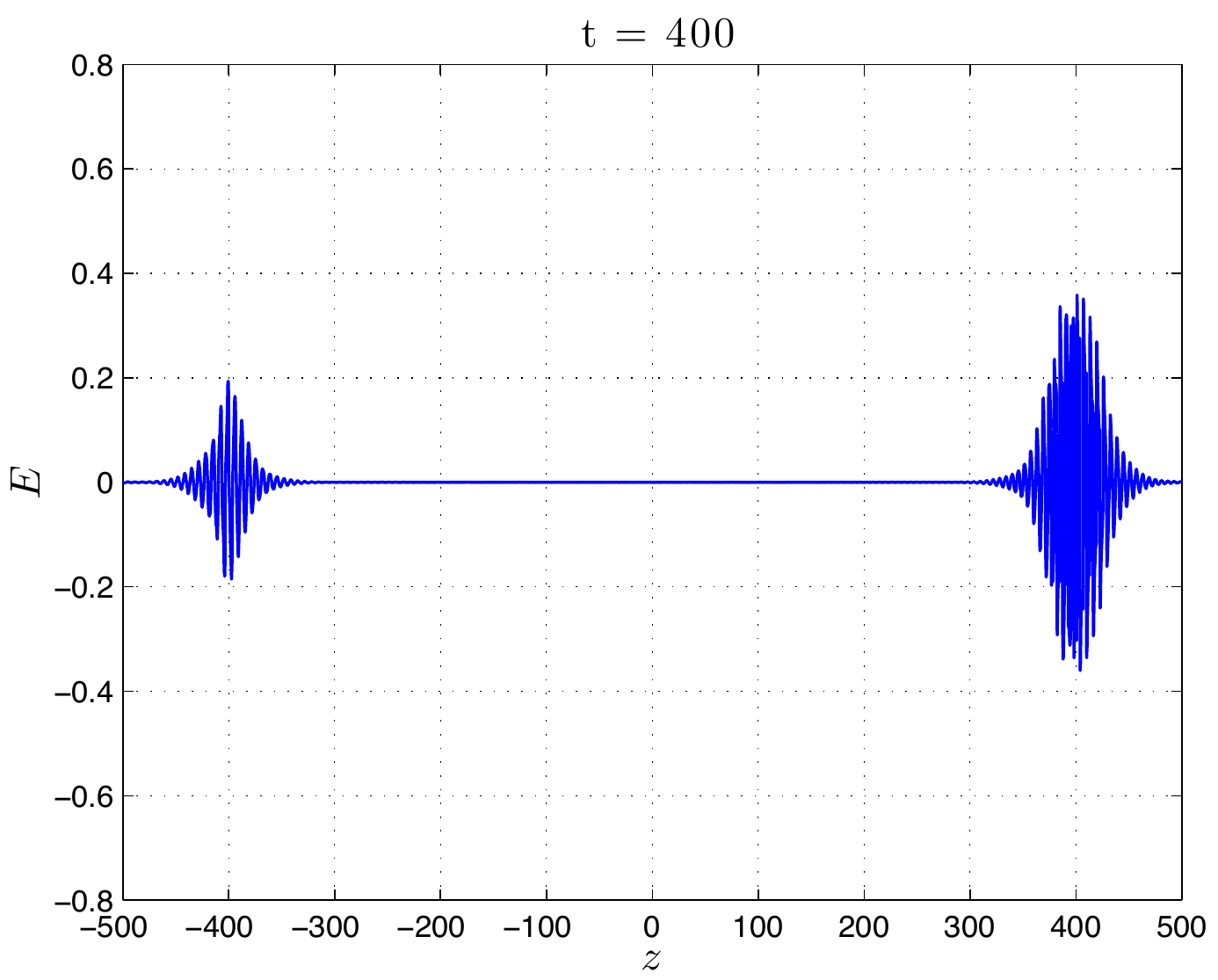}}

  \caption{Rescaled Maxwell equation, \eqref{e:rescaled_maxwell},
    time-evolution for data generated by the NLCME soliton with parameters
    $v=.9$ and $\delta = .9$; see \eqref{eq:NLCME_soliton}.  The
    solutions are computed with 20000 grid points on the domain
    $[-500,500]$. 
    }

  \label{f:traveling_soliton}

\end{figure}

\begin{figure}
  \centering

    {Simulations with varying refractive index, \eqref{e:index}:}

  \subfigure[]{\includegraphics[width=2.2in]{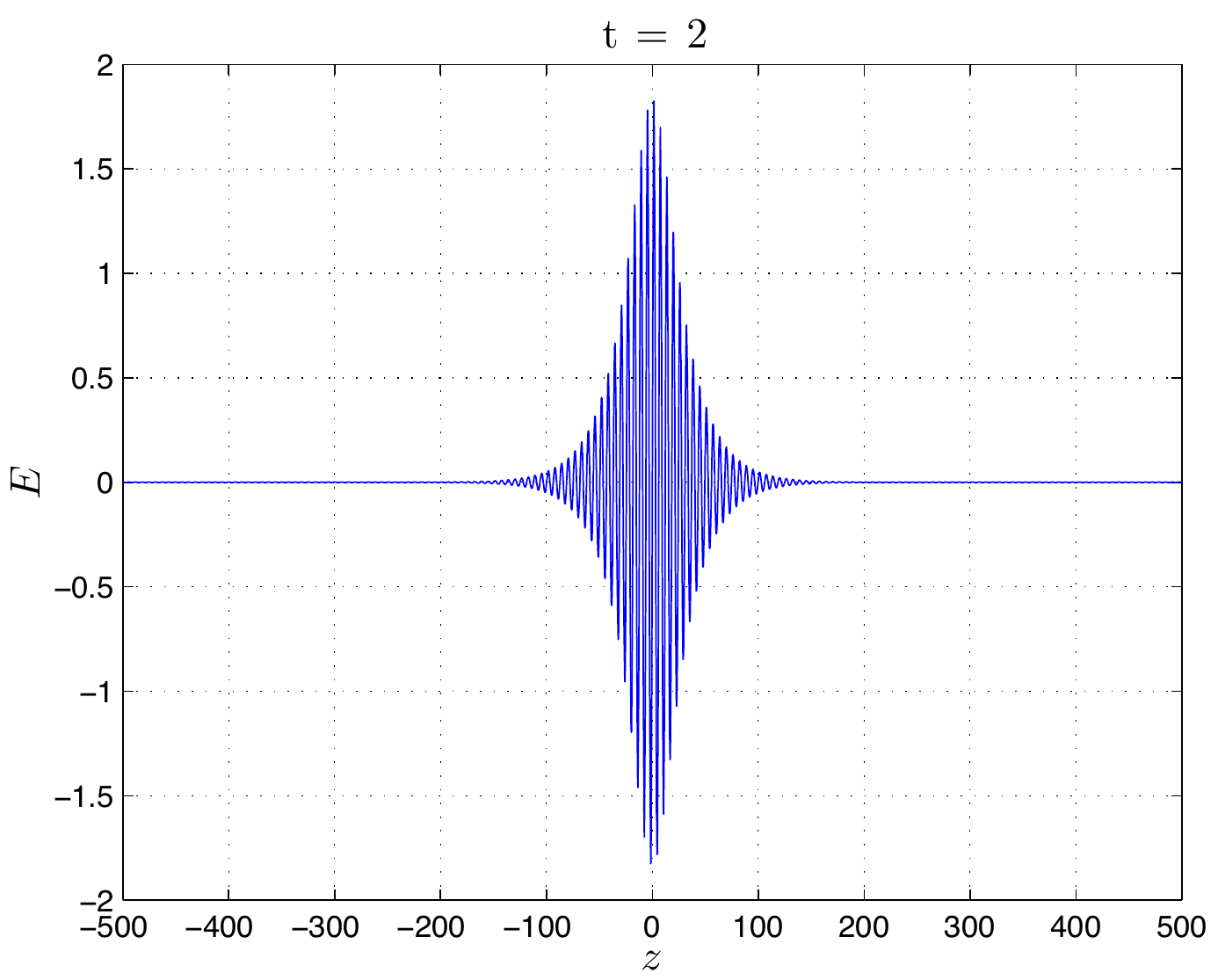}}
  \subfigure[]{\includegraphics[width=2.2in]{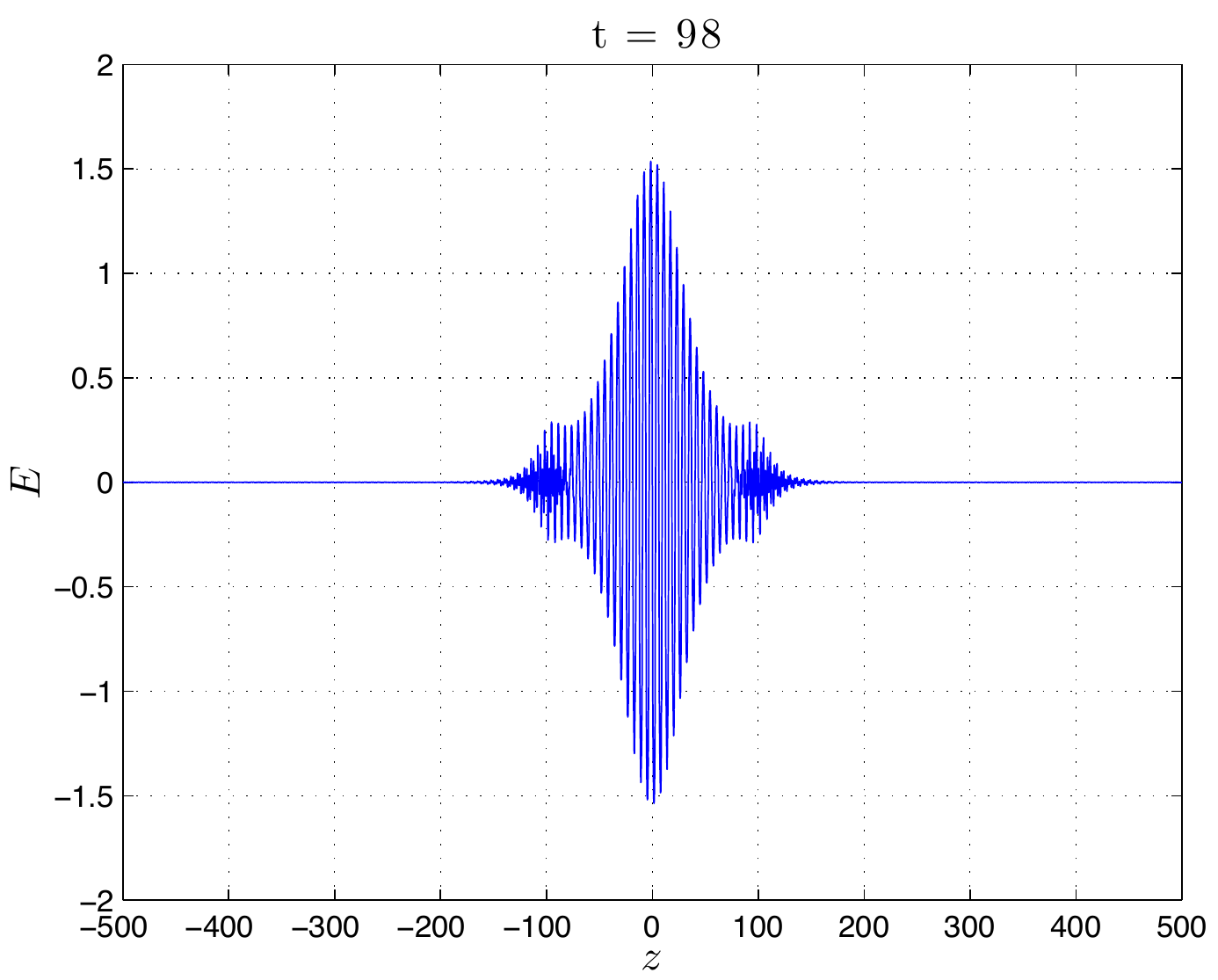}}

  \subfigure[]{\includegraphics[width=2.2in]{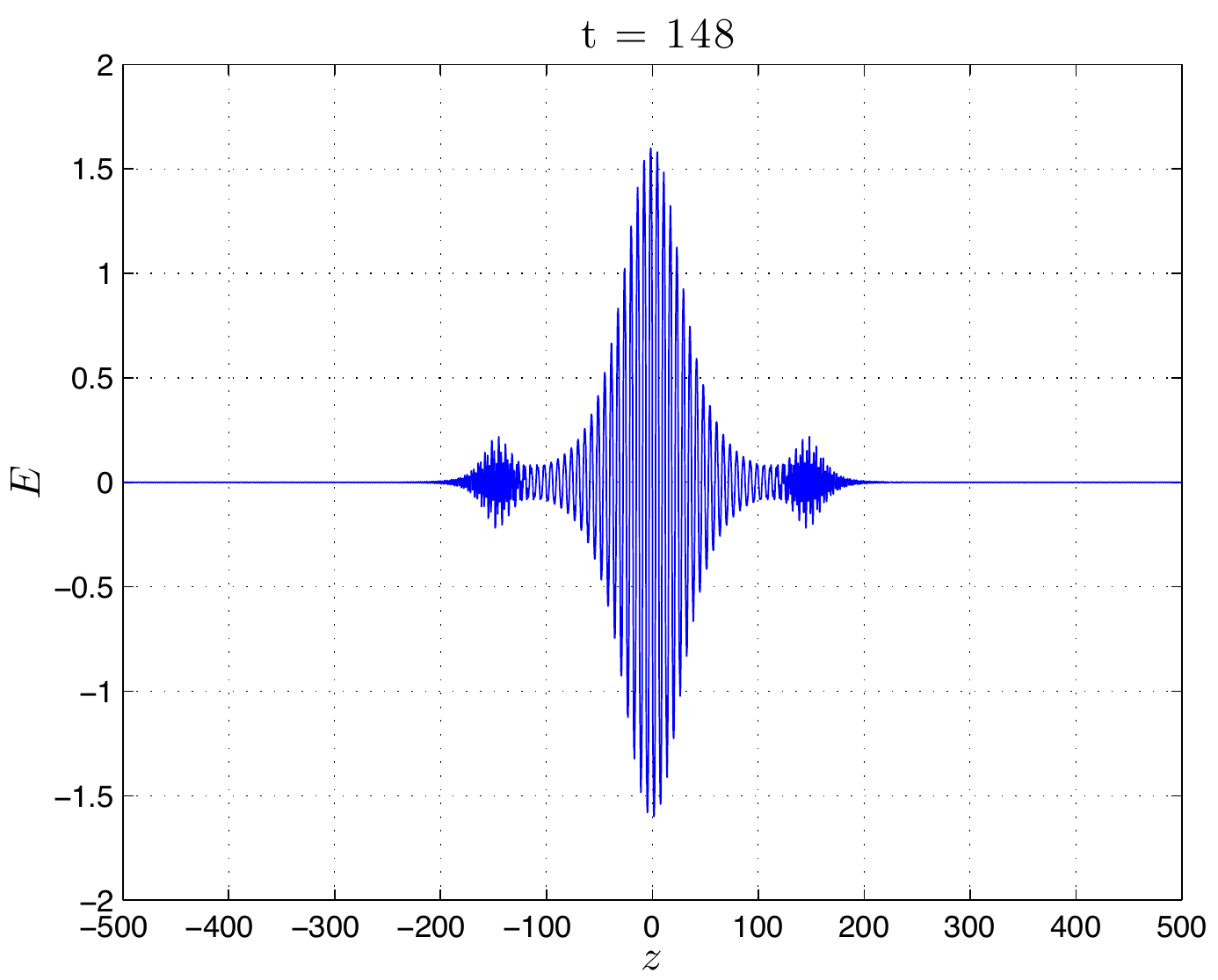}}
  \subfigure[]{\includegraphics[width=2.2in]{figs/_plots_soliton2ncos1_print_20000/matlabfig396}}

    {Simulations with constant refractive index, $N(z) = 0$:}

  \subfigure[]{\includegraphics[width=2.2in]{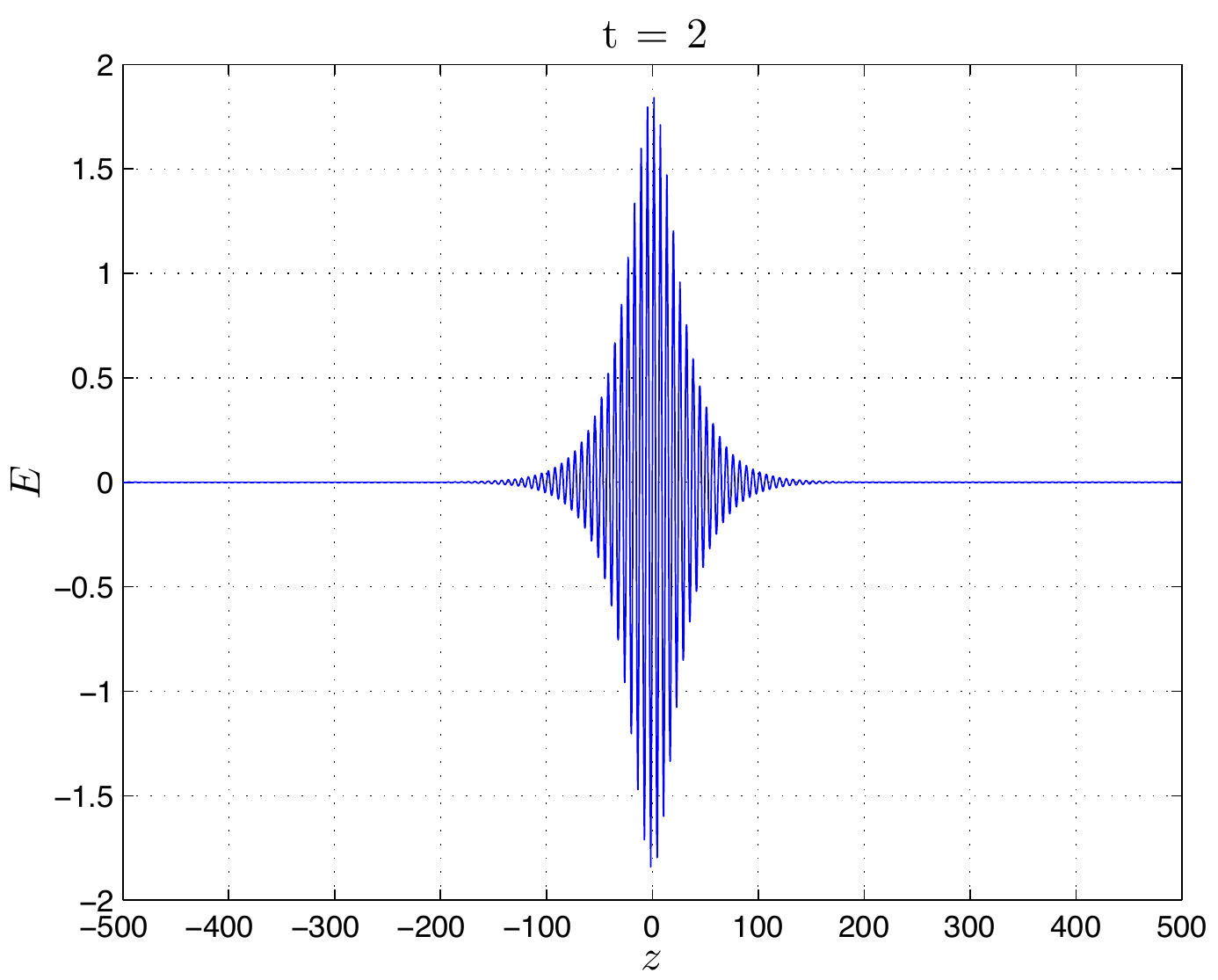}}
  \subfigure[]{\includegraphics[width=2.2in]{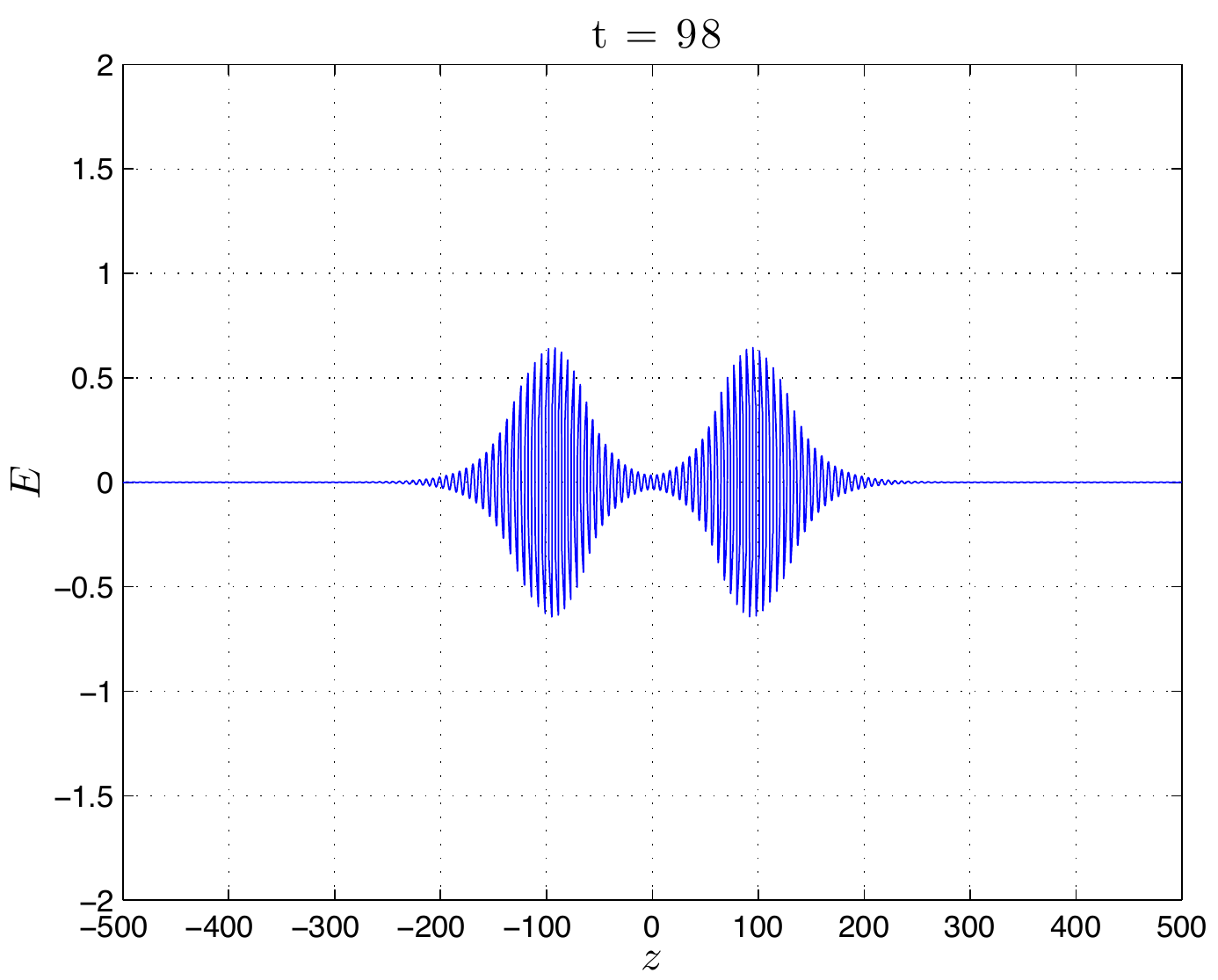}}

  \subfigure[]{\includegraphics[width=2.2in]{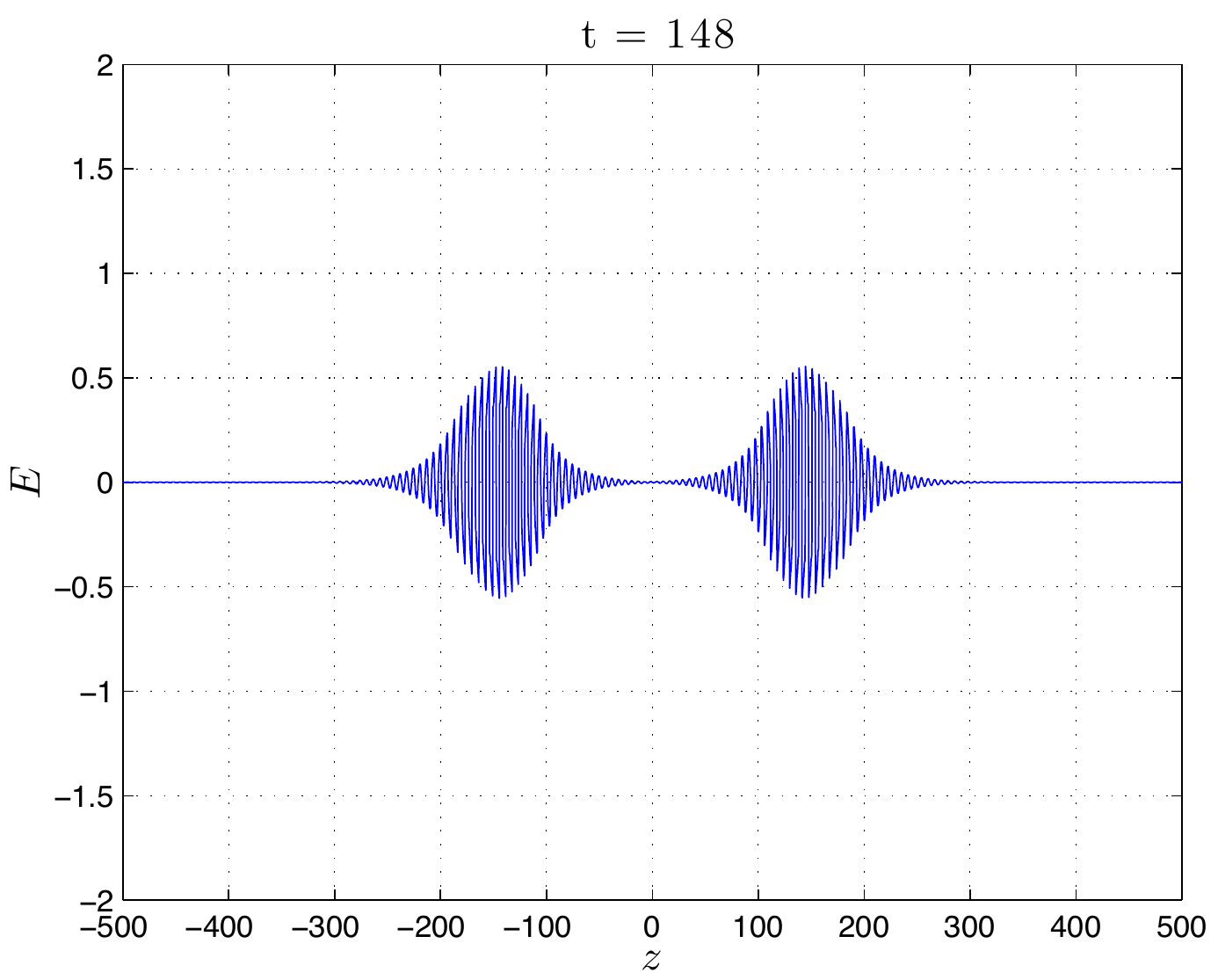}}
  \subfigure[]{\includegraphics[width=2.2in]{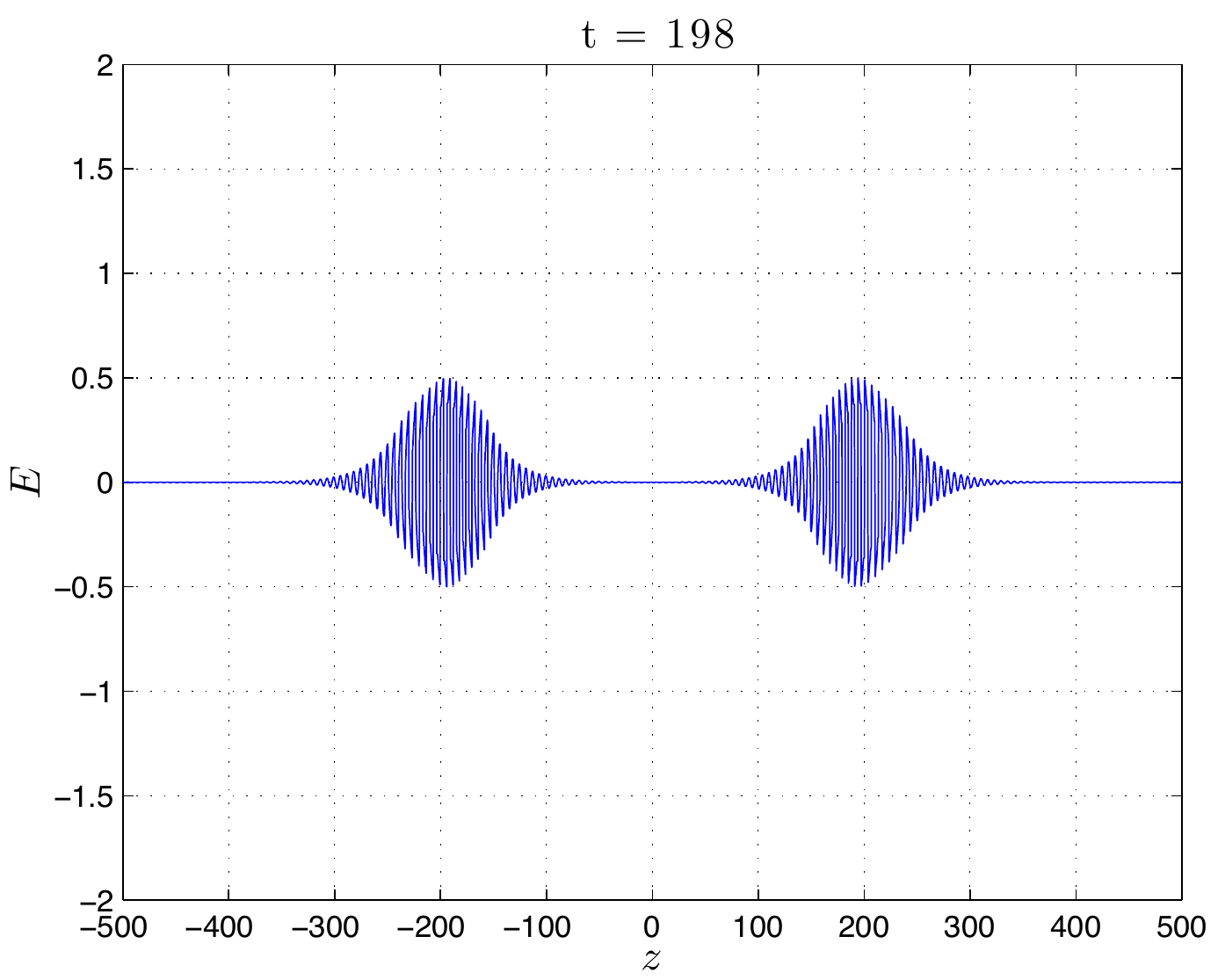}}

  \caption{Solution of rescaled nonlinear periodic Maxwell equation,
    \eqref{e:rescaled_maxwell}, for initial data generated by the
    NLCME soliton with parameters $v=0$ and $\delta =\pi/2$; see
    \eqref{eq:NLCME_soliton}.The solutions are computed with
    20000 grid points on the domain $[-500,500]$.
}
  \label{f:standing_soliton}
\end{figure}

\begin{figure}
  \centering
  \subfigure[]{\includegraphics[width=2.35in]{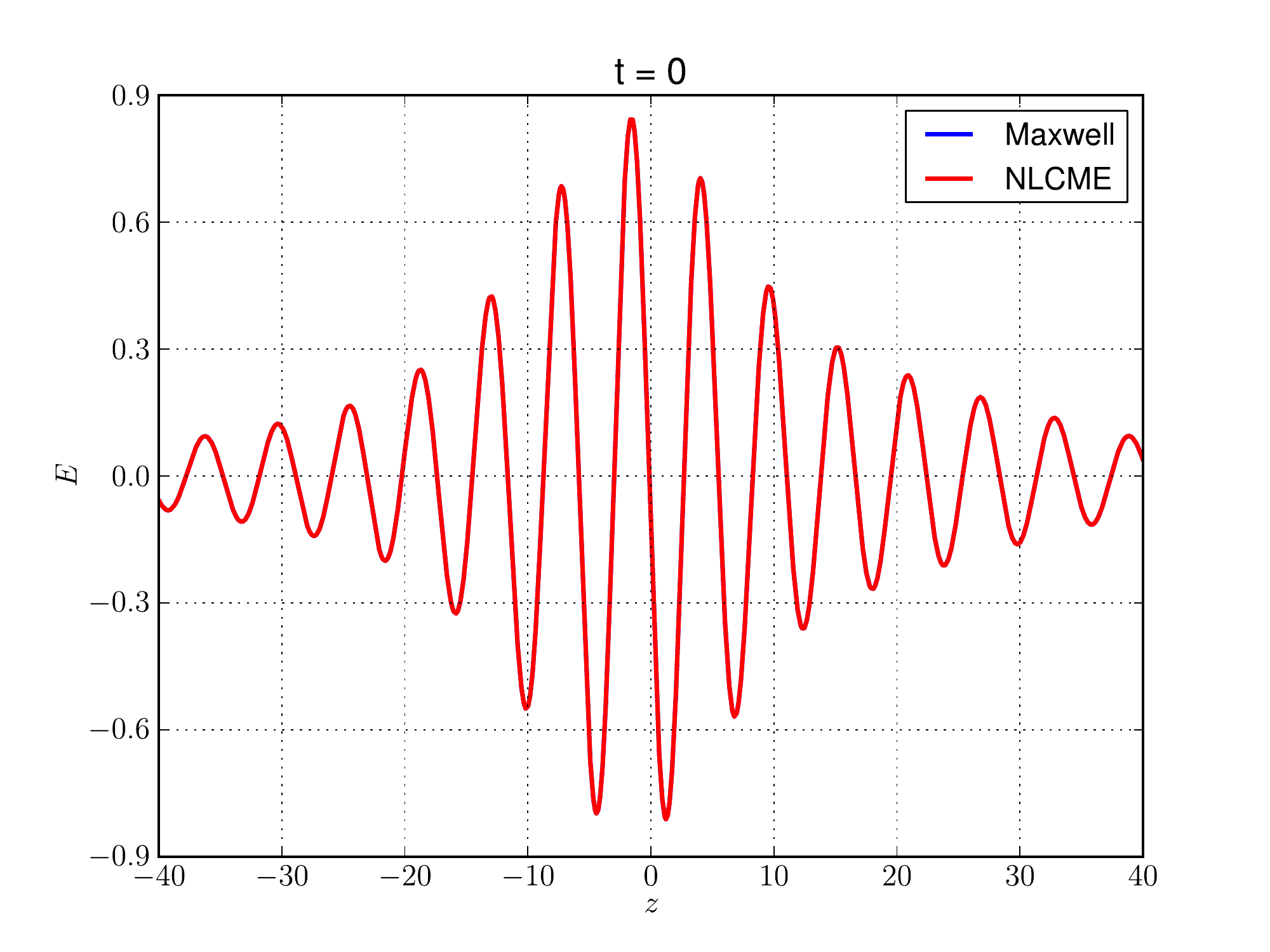}}
  \subfigure[]{\includegraphics[width=2.35in]{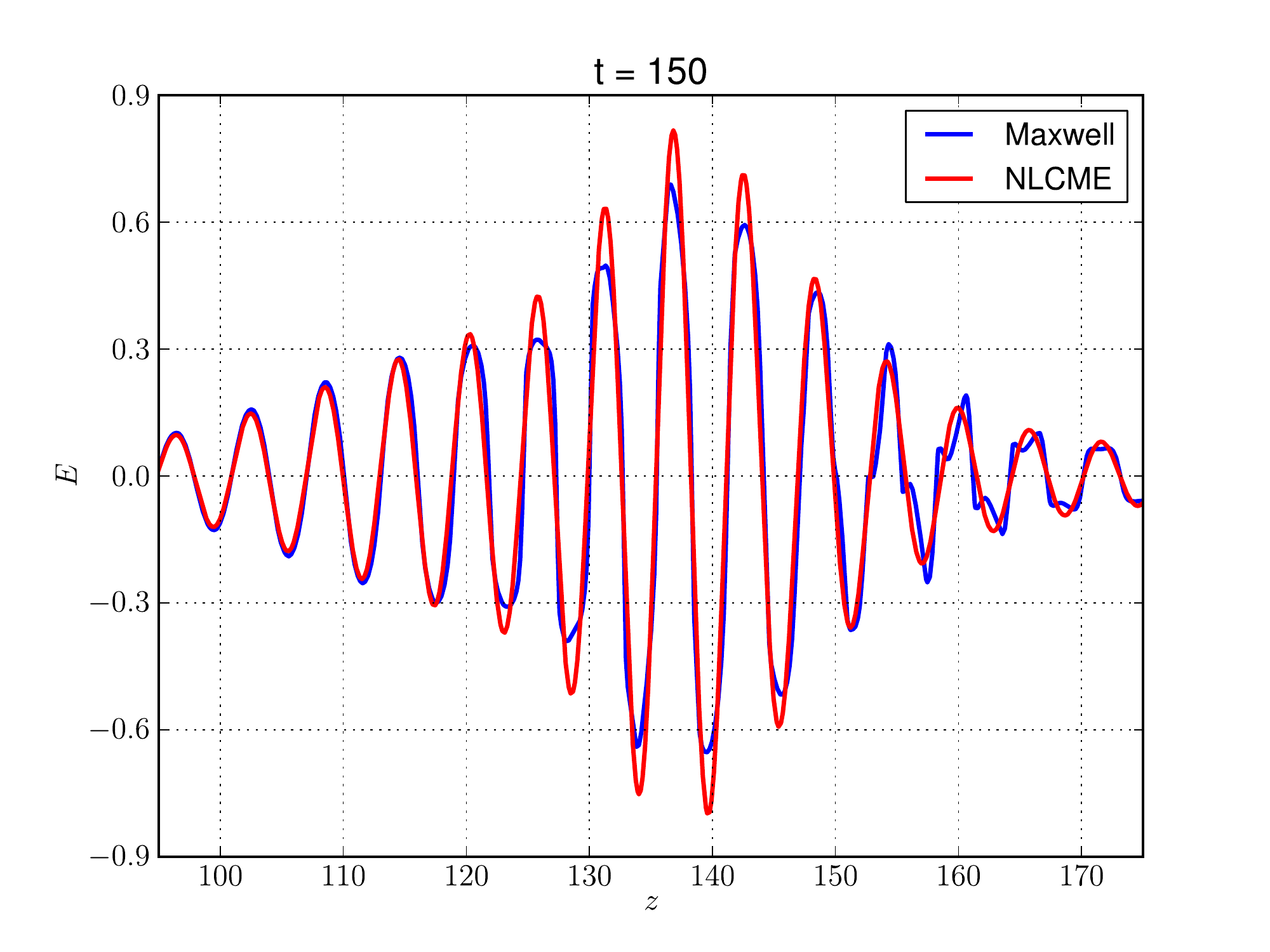}}

  \subfigure[]{\includegraphics[width=2.35in]{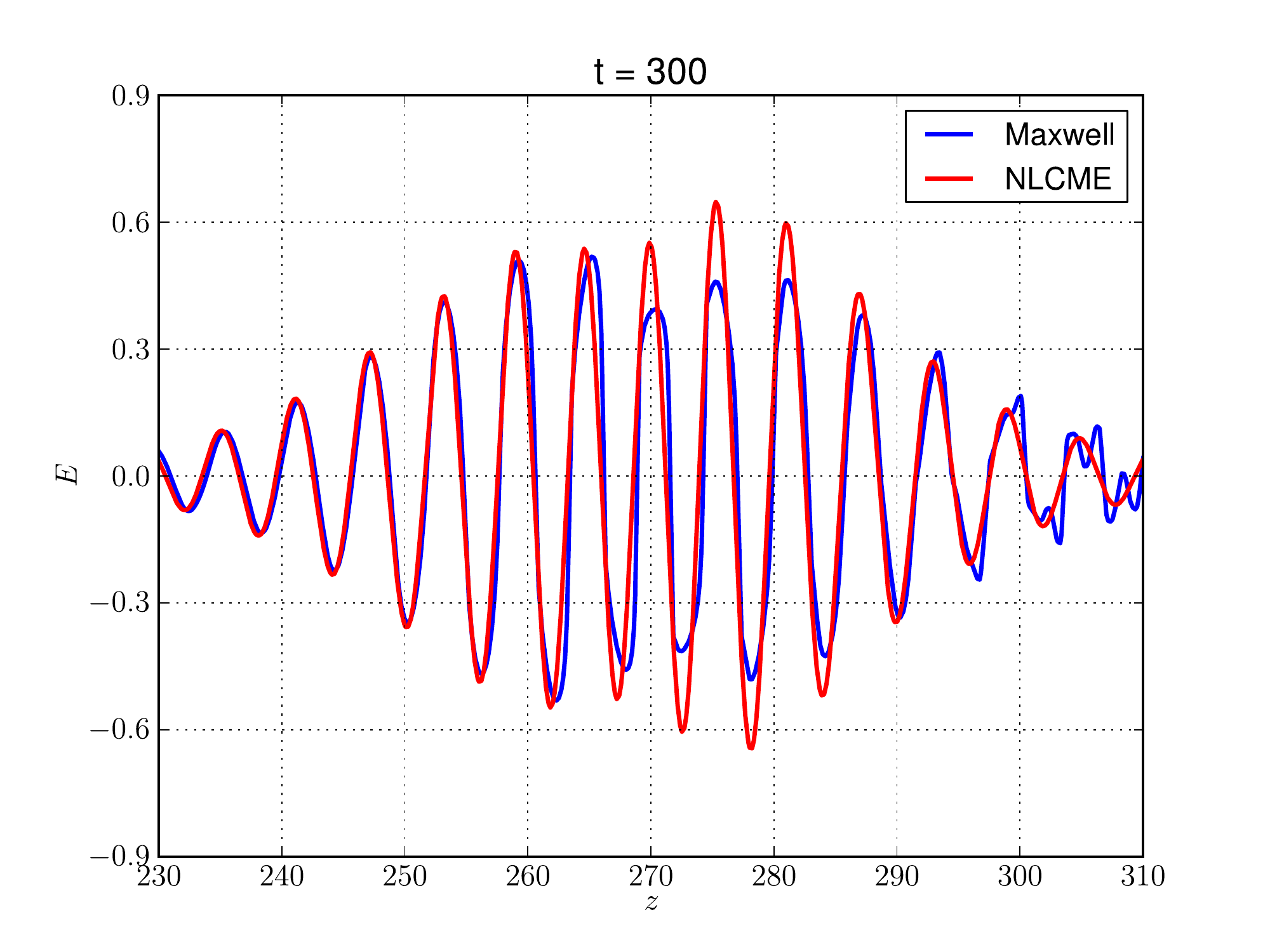}}
  \subfigure[]{\includegraphics[width=2.35in]{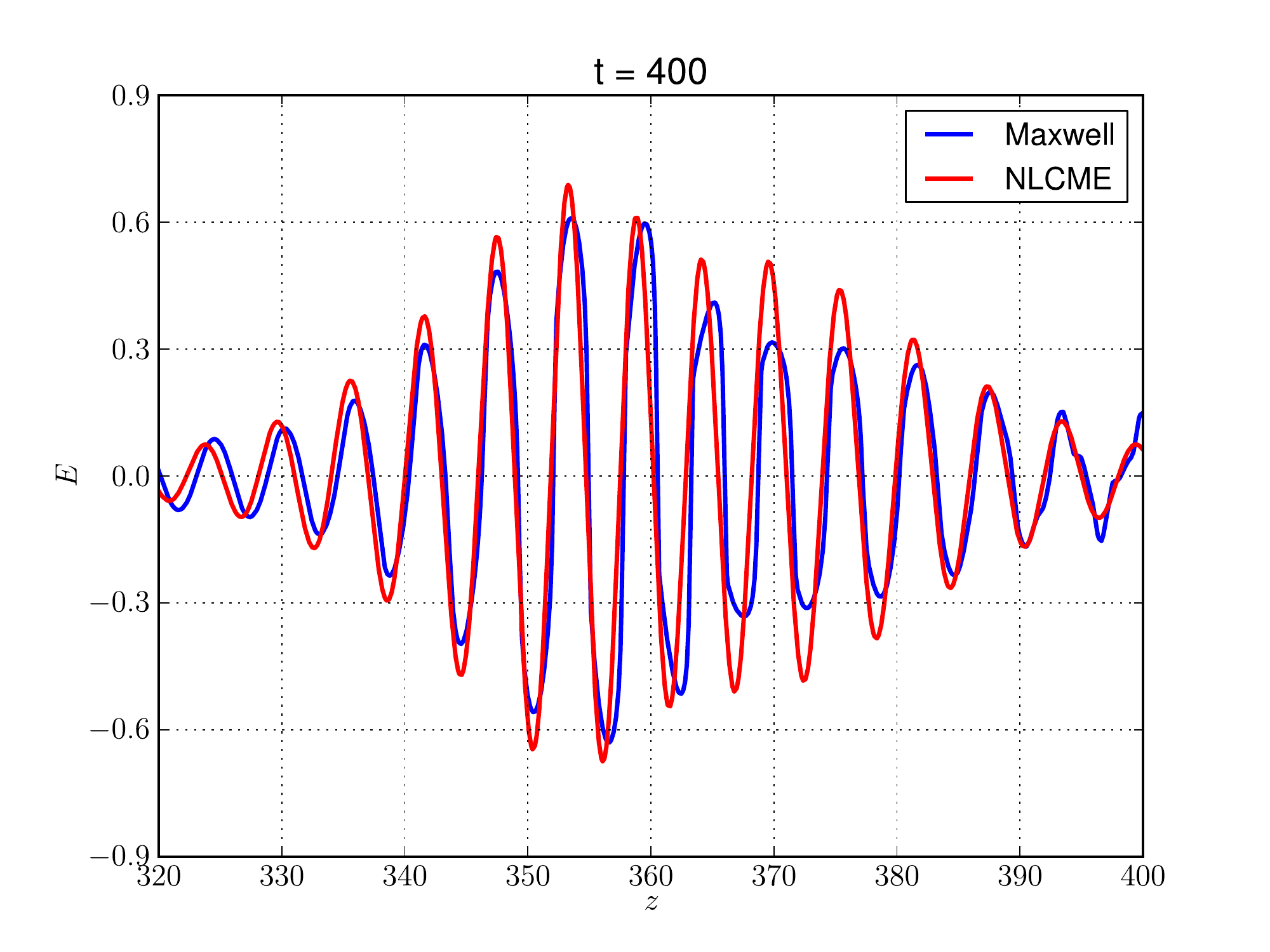}}
  \caption{Comparison of the solution appearing in Figure
    \ref{f:traveling_soliton} a - d, with the exact NLCME soliton.}
  \label{f:traveling_soliton_zoom}
\end{figure}

We note that it is also essential that the data be properly prepared to see a
persistence of localization.  For the initial condition
\begin{subequations}\label{e:sech_data}
  \begin{align}
    D &=  0.5  \cos(z) \sech(\eps z),\\
    B & = - D,
  \end{align}
\end{subequations}
we see in Figure \ref{fig:primitivesech} substantial spreading.  This
data mimics the gap soliton's amplitude, slowly varying envelope, and
carrier wave, but is apparently too far outside the basin of
attraction to converge to a localized state.  Similar results were
observed with Gaussian wave packet initial conditions.

\begin{figure}
  \centering
  \subfigure[]{\includegraphics[width=2.35in]{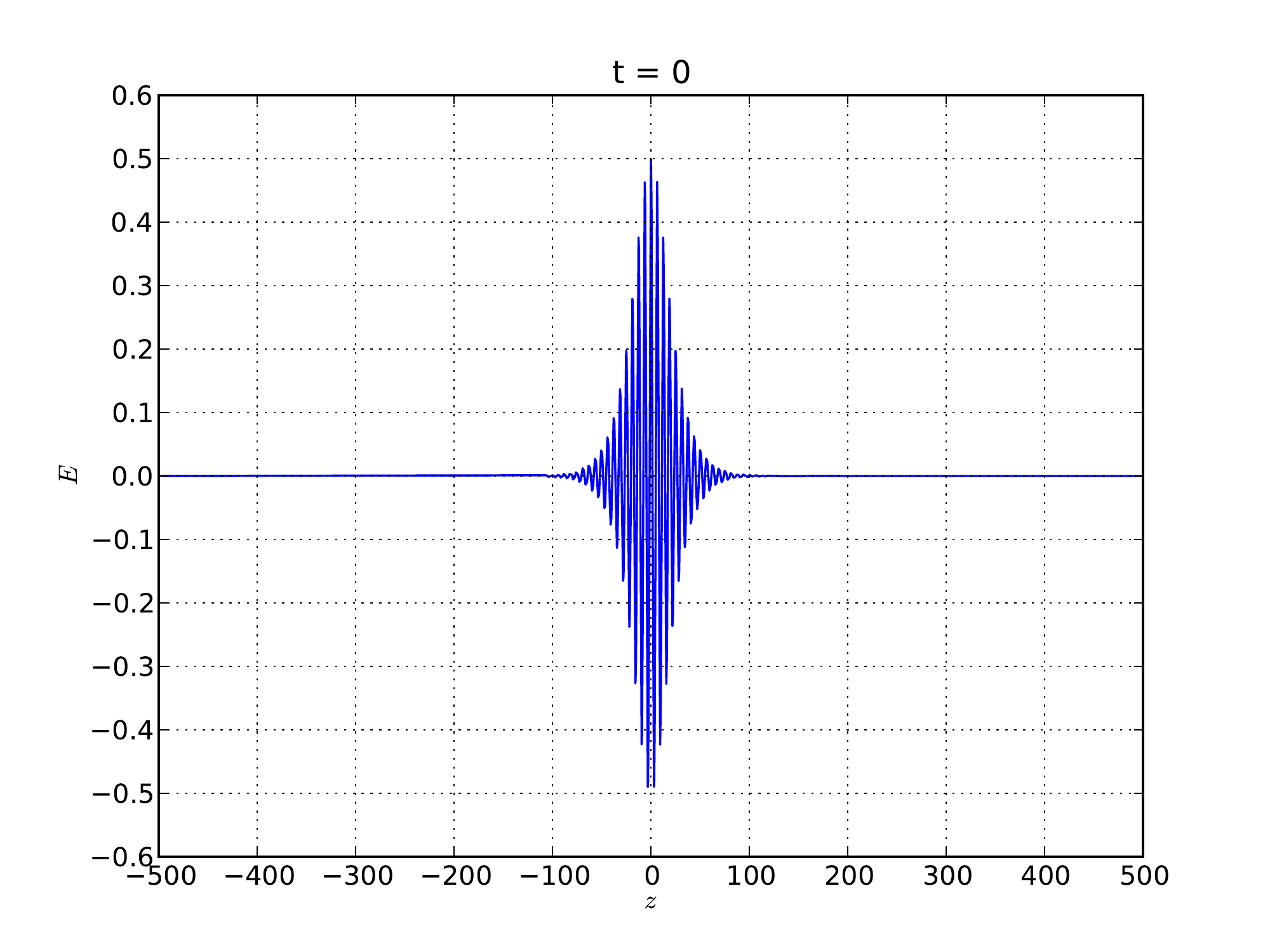}}
  \subfigure[]{\includegraphics[width=2.35in]{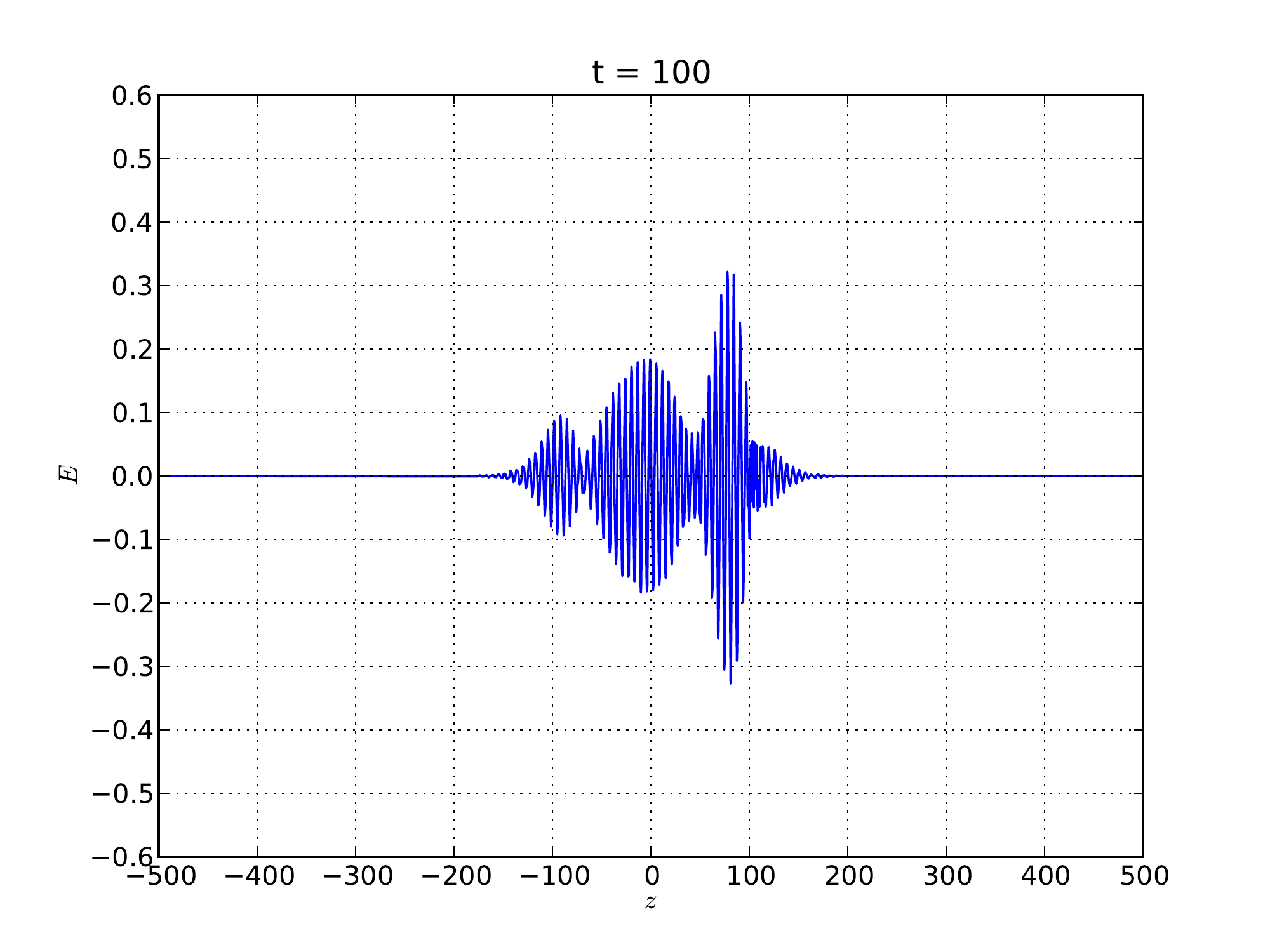}}

  \subfigure[]{\includegraphics[width=2.35in]{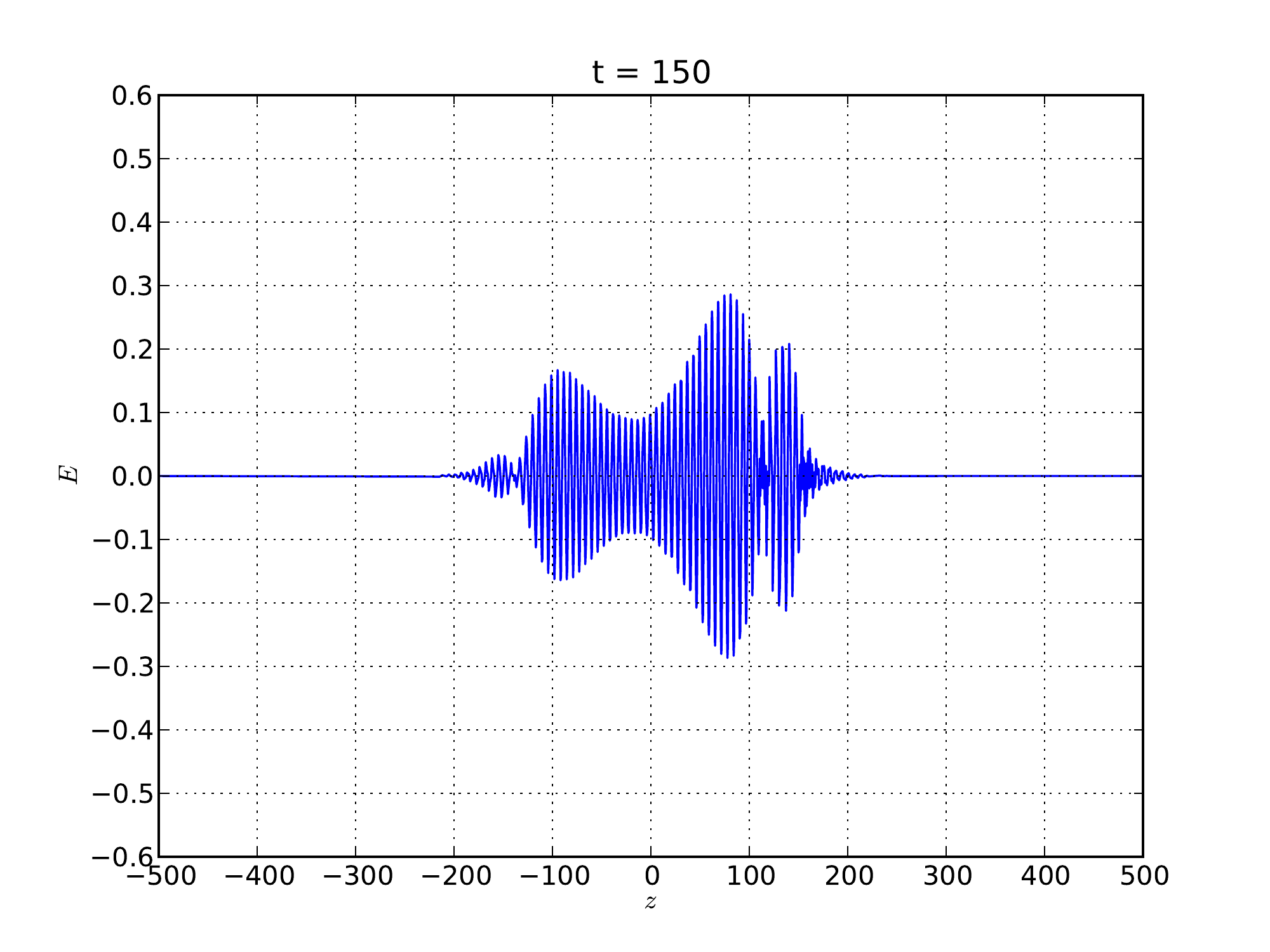}}
  \subfigure[]{\includegraphics[width=2.35in]{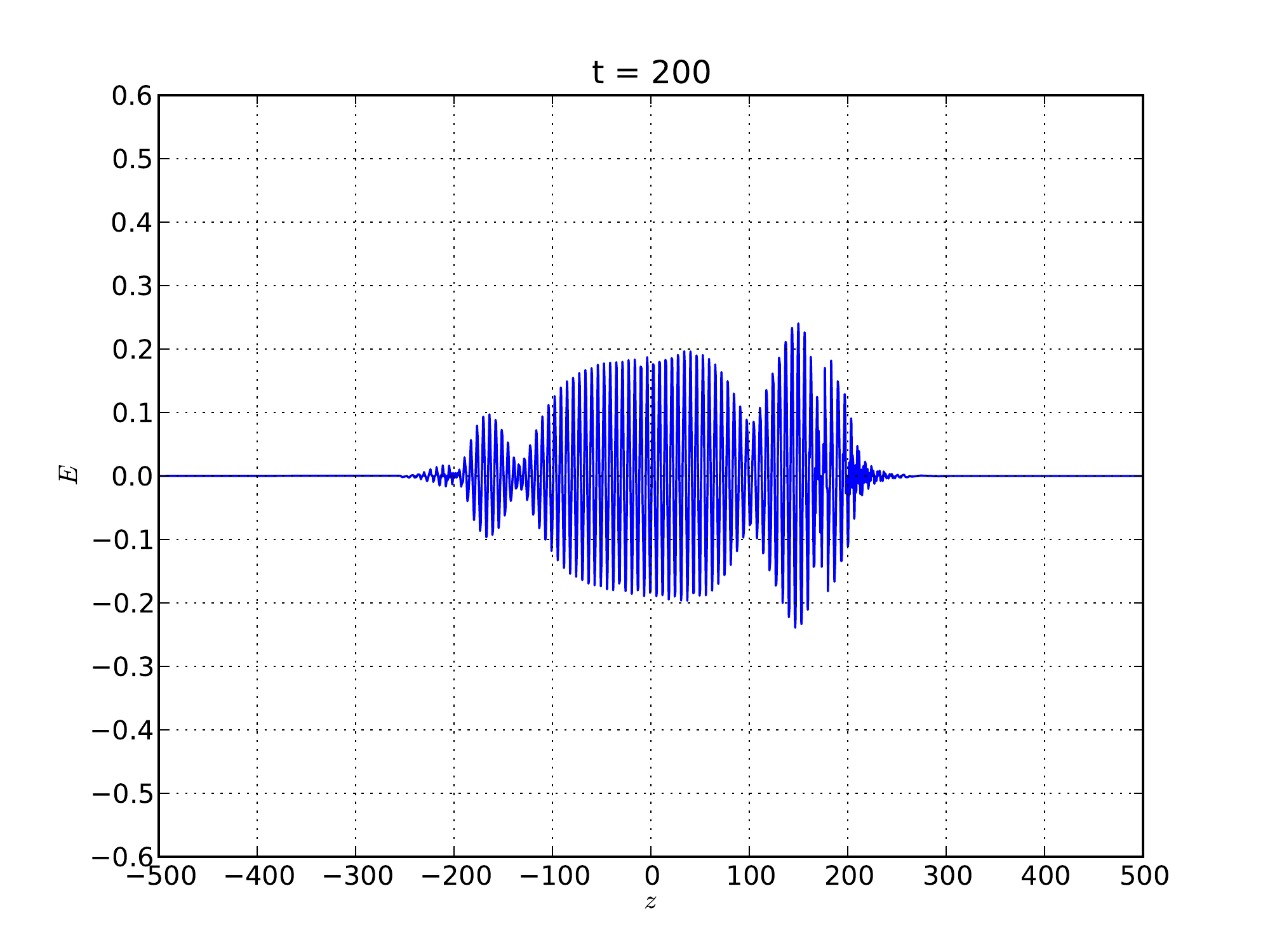}}

  \caption{Solution of rescaled nonlinear periodic Maxwell equation,
    \eqref{e:rescaled_maxwell} with periodic refractive index
    \eqref{e:index}, for initial data \eqref{e:sech_data} .  In
    contrast to the NLCME soliton data, the shape of the solution does
    not persist. The solution is computed with 20000 grid points on
    the domain $[-500,500]$.}
  \label{fig:primitivesech}

\end{figure}

\subsection{ Envelope carrier-shock trains}
\label{s:maxwell_shocks}

Although ther the slowly varying NLCME envelope
{\it shape} is robust, for the nonlinear Maxwell time-evolution, 
 there is evidence of nonlinear steepening and shock formation
on the short (carrier) microstructure spatial scale.  Thus, the nearly monochromatic
slowly varying envelope approximation of NLCME is violated.

 Figure \ref{f:shock_comparison} displays the time-evolution  for 
  (a) moving and (b) stationary  NLCME - gap soliton data.
  For each initial condition, the nonlinear Maxwell evolution is simulated 
  for different grid spacings. As we increase the number of
grid points, sharp features are better resolved by the shock capturing
algorithm.  One can also examine the Fourier transform of the output
to see that we obtain an algebraically decaying solution in wave
number, with peaks at the integer wave number values.

In summary, our observations support the emergence of an
{\it envelope carrier-shock train};
persistence of
coherent, slowly varying, wave envelope 
and shock formation on the carrier scale.

  \begin{figure}
    \centering

    \subfigure[$v=.9$, $\delta =
    .9$]{\includegraphics[width=2.4in]{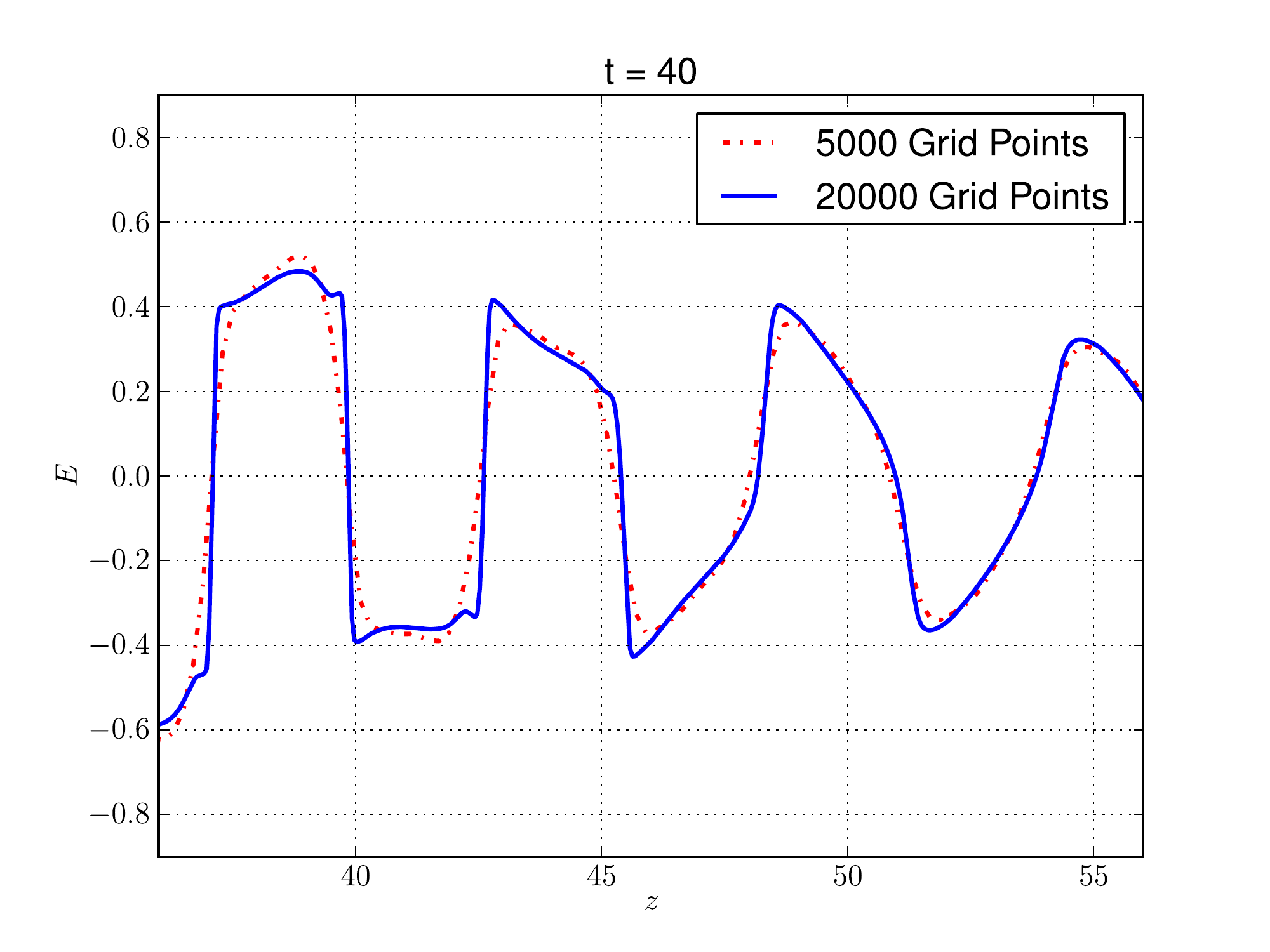}}
    \subfigure[$v=0$, $\delta =
    \pi/2$]{\includegraphics[width=2.4in]{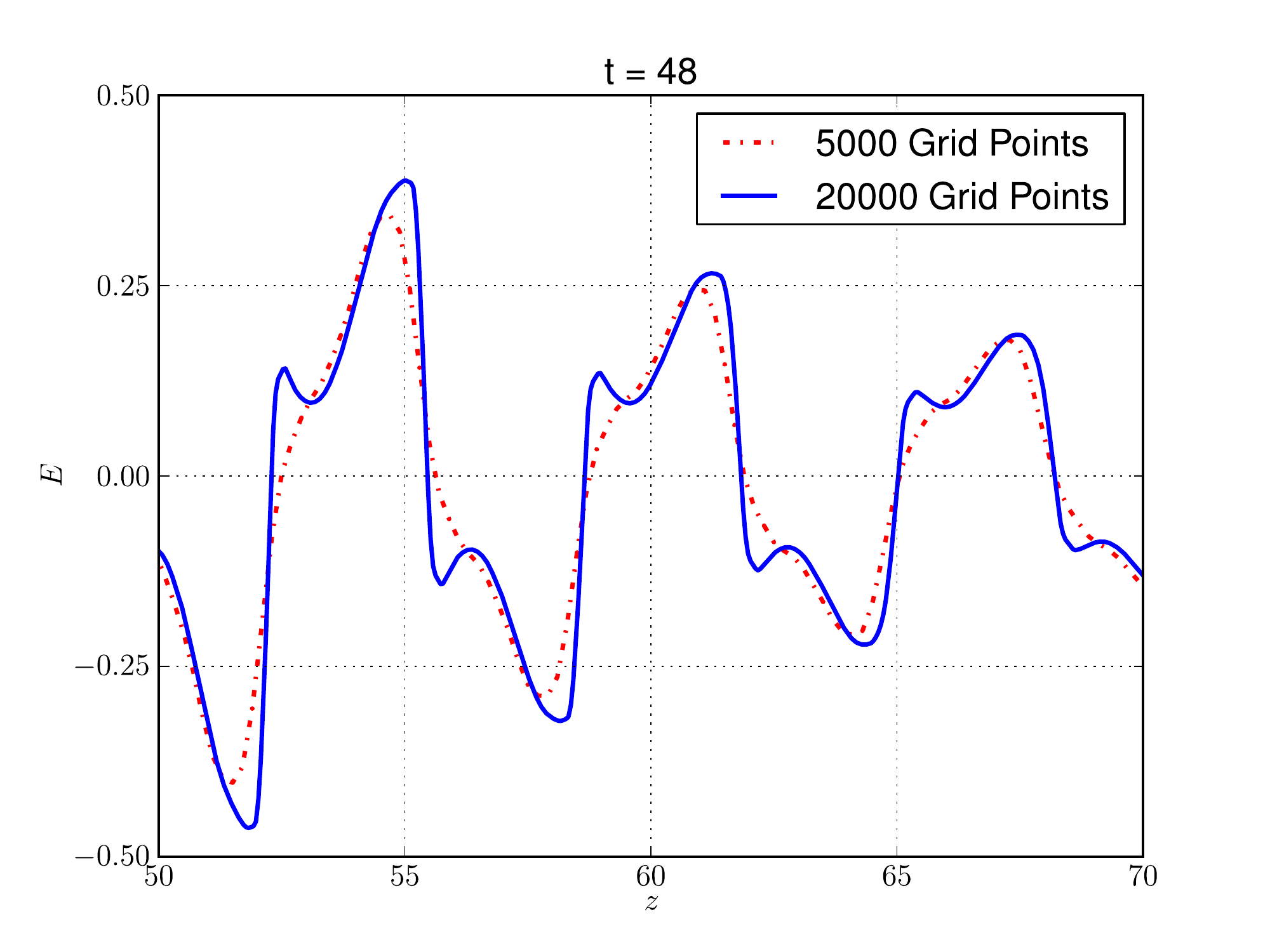}}

    \caption{Increasing the number of grids points better resolves the
      shocks in the carrier wave. For NLCME soliton data with the
      indicated $v$ and $\delta$ (see \eqref{eq:NLCME_soliton}), with
      index of refraction $N(z)$ given by \eqref{e:index}).}
    \label{f:shock_comparison}
  \end{figure}

  \section{Resonant nonlinear geometrical optics and nonlinear
    spatially inhomogeneous Maxwell equations}
  \label{sec:asympt}

   In this section we
  derive a system of equations,  which incorporates 
  all wave-resonances and which our numerical simulations show,
  captures the key features of the nonlinear Maxwell time-evolution,
   in particular, the presence of robust envelope carrier-shock train solutions.
  We derive this system, for general non-homogeneous media,  using a nonlinear geometrical optics expansion;
  see, for example,
  \cite{Hunter:1983vn,Hunter:1986kx,Majda:1984uq}. The equations
  obtained are the general integro-differential equations
  \eqref{eq:integro_diff_intro}. In the case of a periodic medium,
  they reduce to an infinite set of local equations, which we
  call the {\it extended nonlinear coupled mode equations}
  (xNLCME). If, in xNLCME, we neglect all but principle resonances,
   xNLCME
  reduces to NLCME.
 
  As we shall see, in our numerical simulations of increasing high
  dimensional truncations of xNLCME (section \ref{s:xnlcme_sims}),
  this theory appears to accommodate the observed carrier shocks and
  large scale coherent structures.

\subsection{Nonlinear geometric optics expansion}
In contrast to the ansatz of Section \ref{sec:nlcme}, we assume the
more general form
\begin{equation}
  \label{eq:expansion_ansatz}
  \mathbf{u}(z,t) = \mathbf{u}^{(0)}(z,t,Z,T) + \eps \mathbf{u}^{(1)}(z,t,Z,T)+\eps^2 \mathbf{u}^{(2)}(z,t,Z,T)+\ldots.
\end{equation}
where ${\bf u} = ( E, B)^T$ and $Z = \eps z$, $T = \eps t$.  Inserting
\eqref{eq:expansion_ansatz} into \eqref{e:rescaled_maxwell},
\eqref{e:rescaled_closure}, the first order system
\begin{equation*}
  \dt \begin{pmatrix} n(z)^2E + \eps \chi E^3 \\ B\end{pmatrix} +
  \dz \begin{pmatrix} -B\\-E\end{pmatrix}=0. 
\end{equation*}
we expand to get,
\begin{equation*}
  \begin{split}
    \paren{ \dt +B^{(0)}\dz}\bold{u}^{(0)}+
    \eps&\left[\paren{ \dt+B^{(0)}\dz}\bold{u}^{(1)} +\paren{ \dT+B^{(0)}\dZ}\bold{u}^{(0)}\right.\\
    &\left.\quad
      +A^{(1)}(z,\mathbf{u})\dt\bold{u}^{(0)}\right]=\bigo(\eps^2)
  \end{split}
\end{equation*}
with matrices
\begin{equation}
  B^{(0)}  = \begin{pmatrix} 0 & -1 \\ -1 & 0 \end{pmatrix}, \quad
  A^{(1)} = \begin{pmatrix} 2  N(z) + 3 \chi E^2 & 0 \\ 0 &
    0 \end{pmatrix}
\end{equation}

At $\mathcal{O}(\epsilon^0)$,
\begin{equation}
  \paren{\dt +B^{(0)}\dz}\bold{u}^{(0)} =0.
\end{equation}
Solving this as the generalized Eigenvalue problem,
\begin{equation*}
  \paren{B^{(0)} -\lambda I} \bold{r}=0
\end{equation*}
the solutions are:
\begin{equation}
  \lambda_{\pm} = \pm 1,\quad \bold{r}_{\pm} = \begin{pmatrix} 1 \\ \mp 1\end{pmatrix}.
\end{equation}
The corresponding left eigenvectors are
\begin{equation}
  \mathbf{l}_{\pm} = \tfrac{1}{2}\begin{pmatrix} 1 & \mp 1 \end{pmatrix}
\end{equation}
With this normalization, $\mathbf{l}_i \Azero \mathbf{r}_j =
\delta_{i,j}$.  The leading order fields are then
\begin{subequations}\label{eq:fields0}
  \begin{align}
    \mathbf{u}^{(0)} &= E^+(\phi_+,Z,T) \mathbf{r}_+ + E^-(\phi_-,Z,T) \mathbf{r}_-,\\
    E^{(0)} & =  E^+(\phi_+,Z,T) + E^-(\phi_-,Z,T),\\
    \phi_\pm& = z\mp t.
  \end{align}
\end{subequations}
This expression is much more general than \eqref{svea} used in the
derivation of NLCME.

At $\mathcal{O}(\eps)$, the equation is
\begin{equation}
  \label{eq:order1}
  \begin{split}
    \paren{\dt +B^{(0)}\dz} \mathbf{u}^{(1)} = - \paren{\dT
      +B^{(0)}\dZ} \mathbf{u}^{(0)} - A^{(1)}(z,\mathbf{u}^{(0)})\dt
    \mathbf{u}^{(0)}
  \end{split}
\end{equation}
If we assume
\begin{equation}
  \mathbf{u}^{(1)}(z,t) =m^+(z,t) \mathbf{r}_+ +m^-(z,t) \mathbf{r}_-,
\end{equation}
and substitute into \eqref{eq:order1}, then left multiply by
$\mathbf{l}_+$ and then by $\mathbf{l}_-$, we get the two equations
\begin{align}
  -\paren{\dt m^+ + \dz m^+}&= \dT E^+ + \dZ E^++ \mathbf{l}_+
  A^{(1)}(\mathbf{u}^{(0)}) \times\nn \\
  &\quad \paren{-\partial_{\phi_+ }E^+\bold{r}_+
    +\partial_{\phi_-}E^-\bold{r}_- },\label{mplus}\\
  &\nn\\
  -\paren{\dt m^- - \dz m^-}&= \dT E^- - \dZ E^-+ \mathbf{l}_-
  A^{(1)}(\mathbf{u}^{(0)}) \times \nn\\
  &\quad \paren{-\partial_{\phi_+ }E^+\bold{r}_+
    +\partial_{\phi_-}E^-\bold{r}_- }.
  \label{mminus}
\end{align}
The last term is the same in both equations,
\begin{equation}
  \begin{split}
    \bold{l}_\pm &A^{(1)}(\bold{u}^{(0)})\paren{-\partial_{\phi_+
      }E^+\bold{r}_+ +\partial_{\phi_-}E^-\bold{r}_-  }\\
    \quad &= \tfrac{1}{2}\paren{2 N(z) +
      3\chi{E^{(0)}}^2 } \paren{- \partial_{\phi_+ }E^+ +\partial_{\phi_-}E^-  }\\
    \quad   &={N(z)}\paren{-\partial_{\phi_+ } E^+ + \partial_{\phi_-} E^-} \\
    &\quad\quad + \tfrac{3}{2} \chi\paren{E^+ + E^-}^2 \paren
    {-\partial_{\phi_+ } E^+ + \partial_{\phi_-} E^-}.
  \end{split}
\end{equation}

Integration of \eqref{mplus} along the characteristic $\partial_t z_+
=1$ from $t=0$ to $t=L$, yields
\begin{equation}
  \begin{split}
    -&\paren{m^+(z_+(L),L)-m^+(z_+(0),0) } =\\
    &\int_{0}^L \dT E^+(Z,T, z_+(0)) + \dZ E^+(Z,T,z_+(0))ds\\
    &- \int_{0}^L {N(z_+(s))} \partial_{\phi_+ } E^+(Z,T,z_+(0))ds\\
    & + \int_{0}^L {N(z_+(s))}\partial_{\phi_- } E^-(Z,T,z_+(s)+s)ds\\
    &-\int_0^L \left[\tfrac{3}{2} \chi \paren{
        E^+(Z,T,z_+(0))-E^-(Z,T,z_+(s)+s)}^2\right. \\
    &\qquad \left. \times\partial_{\phi_+ } E^+(Z,T,z_+(0))\right]ds\\
    &+\int_0^L\left[ \tfrac{3}{2} \chi \paren{
        E^+(Z,T,z_+(0))-E^-(Z,T,z_+(s)+s)}^2 \right.\\
    &\qquad \left. \times\partial_{\phi_-} E^-(Z,T,z_+(s)+s)\right]ds.
  \end{split}
\end{equation}
Similarly, integration of \eqref{mminus} along the characteristic
$\partial_t z_- = -1$, yields
\begin{equation}
  \begin{split}
    -&\paren{m^-(z_-(L),L)-m^-(z_-(0),0) } = \\
    &\int_{0}^L \dT E^-(Z,T, z_+(0)) -\dZ E^-(Z,T,z_+(0))ds\\
    &  -\int_{0}^L {N(z_-(s))} \partial_{\phi_+ } E^+(Z,T,z_-(s)-s)ds\\
    &   + \int_{0}^L  {N(z_-(s))}\partial_{\phi_- } E^-(Z,T,z_-(0))ds\\
    &     -\int_0^L \left[\tfrac{3}{2} \chi \paren{ E^+(Z,T,z_-(s)-s)-E^-(Z,T,z_-(0))}^2\right.\\
    &\qquad \left.\times\partial_{\phi_+ } E^+(Z,T,z_-(s)- s)\right]ds\\
    & +\int_0^L \left[\tfrac{3}{2} \chi  \paren{E^+(Z,T,z_-(s)- s)-E^-(Z,T,z_-(0))}^2 \right.\\
    &\qquad \left.\times\partial_{\phi_-} E^-(Z,T,z_-(0))\right]ds.
  \end{split}
\end{equation}

Necessary conditions for $m_\pm$ to grow sublinearly in $t$ as $t\to
\infty$ are the solvability conditions:
\begin{subequations}
  \begin{gather}
    \begin{split}
      &\dT E^+(Z,T, z_+(0)) + \dZ E^+(Z,T,z_+(0))=\\
      &-\lim_{L\to\infty}\frac{1}{L}\int_{0}^L {N(z_+(s))}\partial_{z_+(0) } E^-(Z,T,z_+(s)+s)ds\\
      &+ \lim_{L\to\infty}\frac{1}{L}\int_0^L \left[\tfrac{3}{2}
        \chi \paren{ E^+(Z,T,z_+(0))
          +E^-(Z,T,z_+(s)+s)}^2\right.\\
      &\qquad \left. \times \partial_{z_+(0) }
        E^+(Z,T,z_+(0))\right]ds,
    \end{split}
    \\ \\
    \begin{split}
      &\dT E^-(Z,T, z_-(0)) -\dZ E^-(Z,T,z_-(0))=\\
      &\lim_{L\to\infty}\frac{1}{L}\int_{0}^t {N(z_-(s))} \partial_{z_-(0) } E^+(Z,T,z_-(s)-s)ds\\
      &- \lim_{L\to\infty}\frac{1}{L}\int_0^L \left[\tfrac{3}{2}
        \chi \paren{
          E^+(Z,T,z_-(s)-s)-E^-(Z,T,z_-(0))}^2\right.\\
      &\qquad\left.\partial_{z_-(0) } E^-(Z,T,z_-(0))\right]ds.
    \end{split}
  \end{gather}
\end{subequations}

Given $(z,t)$, $z_+(0) = z- t = \phi_+$ and $z_-(0) = z+ t =\phi_-$.
Defining
\begin{equation}
  \label{eq:mean_infty}
  \mean{f} = \lim_{L\to \infty} \frac{1}{L}\int_{0}^L f(s)ds
\end{equation}
the equations may be compactly expressed as:
\begin{subequations}
  \label{e:E_full}
  \begin{gather}
    \label{eq:Ep_full}
    \begin{split}
      &\dT E^+ + \dZ E^+=-\mean{N( \phi_+ +  s) \partial_\phi E^-(\phi_+ + 2  s}_s\\
      &\quad +\tfrac{3}{2} \chi\paren{\paren{E^+}^2 + 2 E^+ \mean{E^-}
        + \mean{\paren{E^-}^2}}\partial_\phi E^+,
    \end{split}
    \\
    \label{eq:Em_full}
    \begin{split}
      &\dT E^- - \dZ E^-=\mean{N(\phi_- -  s)\partial_\phi E^+(\phi_- - 2 s)}_s\\
      &\quad - \tfrac{3}{2}\chi\paren{\paren{E^-}^2 + 2 E^- \mean{E^+}
        + \mean{\paren{E^+}^2}} \partial_\phi E^-.
    \end{split}
  \end{gather}
\end{subequations}
It is important to recognize that the arguments of the fields in
\eqref{eq:Ep_full} are $\phi_+ = z - t$, $Z$, and $T$, while in
\eqref{eq:Em_full}, they are $\phi_- = z + t$, $Z$, and $T$.  As in
our derivation of NLCME in Section \ref{sec:nlcme}, $\Gamma \equiv
\tfrac{3}{2}\chi$.  With this notation, \eqref{e:E_full} can be
rewritten, after an integration by parts, in conservation law form,
\begin{subequations}\label{e:E_con}
  \begin{gather}
    \label{eq:Ep_con}
    \begin{split}
      &\dT \Ep + \dZ \Ep= \dphi \mean{n_1( \phi_+ + s) E^-(\phi_+ + 2
        s}_s\\
      &\quad +\Gamma \dphi\bracket{\frac{1}{3}\paren{\Ep}^3
        + \paren{\Ep}^2 \mean{\Em} + \Ep\mean{\paren{\Em}^2}},
    \end{split}
    \\
    \label{eq:Em_con}
    \begin{split}
      &\dT \Em -\dZ \Em=-\dphi\mean{n_1( \phi_- - s)E^+(
        \phi_- - 2   s) }_s\\
      &\quad -\Gamma \dphi\bracket{\mean{\paren{\Ep}^2
        }\Em+\mean{\Ep}\paren{\Em}^2+\frac{1}{3}\paren{\Em}^3}.
    \end{split}
  \end{gather}
\end{subequations}
Equations \eqref{e:E_con} corresponds to the integro-differential
equations of the introduction, if we omit the $\mean{E^\pm}$ terms.
Since $\mean{E^\pm}$ is time-invariant (see section \ref{s:hamiltonian_structure})
by choosing initial conditions for which $\mean{E^\pm}(T=0)=0$, 
these terms can be dropped from \eqref{e:E_con}.
  Finally, note that \eqref{e:E_full} are applicable to a {\it
  general heterogeneous dielectric material} with the appropriate
scalings.

\subsection{Periodic Media and xNLCME}
  
We now specialize to the periodic case.  Assume now that $N(z+2\pi)=N(z)$. Then
\eqref{e:E_con} is invariant under the discrete translation: $\phi\mapsto\phi+2\pi$, {\it i.e.}
\begin{subequations}
  \begin{align}
    \Ep(\phi, Z,T) &\mapsto \Ep(\phi + 2\pi, Z,T)\\
    \Em(\phi, Z,T) &\mapsto \Em(\phi + 2\pi, Z,T)\ .
  \end{align}
\end{subequations}

Thus, under the assumption of existence and uniqueness of solutions to
\eqref{e:E_con}, if the initial data are $2\pi$ in the $\phi$
argument, then the solutions remain $2 \pi $ periodic in $\phi$.  In
the periodic setting, the averaging operator, \eqref{eq:mean_infty},
simplifies to
\[
\mean{f} = \frac{1}{2\pi}\int_{0}^{2\pi} f(s)ds.
\]

We now expand $N(z)$ and $E^\pm$ in Fourier series,
\begin{align}
  N(z) &= \sum_{p \in \mathbb{Z}} N_p e^{\mathrm{i} p
    z},\\\label{e:efield_fourier} E^\pm(\phi ,Z,T) &= \sum_{p}
  E^{\pm}_p(Z,T) e^{\pm\mathrm{i} p \phi},
\end{align}
where $\bar{N}_p= N_{-p}$ and $\bar{E}^\pm_p= E^\pm_{-p}$ since $N$
and $E^\pm$ are real valued. In this case, the system of Fourier
coefficients $\{E^\pm_p(Z,T): p\in \mathbb{Z}\}$ satisfy the
infinite system of {\it extended nonlinear coupled mode equations}
(xNLCME):
\begin{subequations}\label{e:E_mode}
  \begin{gather}
    \label{eq:Ep_mode}
    \begin{split}
      \dT \Ep_p + \dZ \Ep_p &= \mathrm{i}p N_{2p}{\Em_p} +
      \mathrm{i}p\frac{\Gamma}{3}\left[\sum_{q,r} \Ep_q
        \Ep_r \Ep_{p-q-r}\right.\\
      &\quad \left. + 3 \Em_0 \sum_q \Ep_q \Ep_{p-q} +3\paren{\sum_q
          \abs{\Em_q}^2} \Ep_p\right],
    \end{split}
    \\
    \label{eq:Em_mode}
    \begin{split}
      \dT \Em_p - \dZ \Em_p &= \mathrm{i}p
      \bar{N}_{2p}{\Ep_p}+\mathrm{i}p\frac{\Gamma}{3} \left[\sum_{q,r} \Em_q
        \Em_r \Em_{p-q-r}\right.\\
      &\quad \left. + 3 \Ep_0 \sum_q \Em_q \Em_{p-q}+3\paren{\sum_q
          \abs{\Ep_q}^2} \Em_p\right].
    \end{split}
  \end{gather}
\end{subequations}

\subsection{Conservation Laws and Hamiltonian Structure}
\label{s:hamiltonian_structure}

Equation \eqref{e:E_con}, and alternatively \eqref{e:E_mode}, have  two
conservation laws:
\begin{prop}
  Assume that $E^\pm$ is a sufficiently smooth and sufficiently $Z-$ decaying solution of \eqref{e:E_con}
   and that $\{E_p(Z,T)\}_{p\in\mathbb{Z}}$ is the corresponding solution of xNLCME. Then, 
  \begin{subequations}
    \begin{gather}
     \frac{d}{dT}\ \int \mean{E^+(\cdot,T)}\ dZ = \ \frac{d}{dT}\ \int E_0^+(\cdot,T)\ dZ\ =\ 0\\
      \frac{d}{dT}\ \int \mean{E^+(\cdot,T)}\ dZ =   \frac{d}{dT}\ \int E_0^-(\cdot,T)\ dZ\ =\ 0\\
       \frac{d}{dT}\ \int \mean{(\Ep)^2(\cdot,T)}\ +\  \mean{(\Em)^2(\cdot,T)}\ dZ\\
        =\  
       \frac{d}{dT}\ \sum_p \int \abs{\Ep_p(\cdot,T)}^2 +
      \abs{\Em_p(\cdot,T)}^2\ dZ\ =\ 0\ . 
    \end{gather}
  \end{subequations}
\end{prop}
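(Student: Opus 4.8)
The plan is to exploit two structural features of \eqref{e:E_con}: its right-hand sides are pure $\phi$-derivatives, and the linear inter-mode coupling through the index fluctuation respects a conjugate symmetry. I would handle the two mean laws and the quadratic law separately, and pass freely between the integro-differential form \eqref{e:E_con} and the mode form \eqref{e:E_mode}, the two expressions in each conserved quantity being equal by Parseval's identity applied to \eqref{e:efield_fourier}.

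For conservation of $\int\langle E^\pm\rangle\,dZ$ I would simply apply the averaging operator $\langle\cdot\rangle$ to \eqref{eq:Ep_con} and \eqref{eq:Em_con}. Since the $\phi$-average of $\partial_\phi$ of any $2\pi$-periodic function vanishes, every term on the right-hand side drops out, leaving the scalar transport equations $\partial_T\langle E^+\rangle + \partial_Z\langle E^+\rangle = 0$ and $\partial_T\langle E^-\rangle - \partial_Z\langle E^-\rangle = 0$. Integrating in $Z$ and using the assumed decay to kill the $\partial_Z$ flux gives $\frac{d}{dT}\int\langle E^\pm\rangle\,dZ = 0$, and the identification $\langle E^\pm\rangle = E_0^\pm$ is immediate from \eqref{e:efield_fourier}.

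For the quadratic law I would work in the mode representation \eqref{e:E_mode}. Multiplying \eqref{eq:Ep_mode} by $\bar E_p^+$ and \eqref{eq:Em_mode} by $\bar E_p^-$, taking twice the real part (so that $2\,\mathrm{Re}(\bar E_p^\pm\partial_T E_p^\pm)=\partial_T|E_p^\pm|^2$), summing over $p\in\mathbb{Z}$, and integrating in $Z$ converts the transport terms into $\partial_T(|E_p^+|^2+|E_p^-|^2)$ plus $\partial_Z$-fluxes that integrate to zero. What remains is to show the coupling and cubic contributions have vanishing real part. The cubic terms are, for each $p$, Fourier coefficients of $\Gamma$ times $\phi$-derivatives of real quantities such as $\tfrac{1}{3}(E^\pm)^3$, $(E^\pm)^2\langle E^\mp\rangle$, and $E^\pm\langle(E^\mp)^2\rangle$; by Parseval $\sum_p\bar E_p^\pm\,(\text{coefficient})$ equals a physical-space average $\langle E^\pm\partial_\phi(\cdots)\rangle$, which collapses to $\langle\partial_\phi[\cdots]\rangle=0$ after writing $E^+\partial_\phi(E^+)^3=\tfrac{3}{4}\partial_\phi(E^+)^4$ and handling the mixed terms by pulling the $Z,T$-only factors $\langle E^\mp\rangle$ and $\langle(E^\mp)^2\rangle$ outside the average.

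The crux, and the only place I expect genuine algebra, is the linear coupling. For fixed $p$ the two contributions are $2\,\mathrm{Re}(\mathrm{i}pN_{2p}\bar E_p^+E_p^-)$ and $2\,\mathrm{Re}(\mathrm{i}p\bar N_{2p}\bar E_p^-E_p^+)$. Setting $w=\mathrm{i}pN_{2p}\bar E_p^+E_p^-$ and using $\bar N_{2p}=N_{-2p}$ together with the reality of $N$, the second contribution is exactly $-\bar w$, so their sum is $w-\bar w$, which is purely imaginary and has vanishing real part term-by-term in $p$. Hence every right-hand-side contribution integrates to zero and $\frac{d}{dT}\sum_p\int(|E_p^+|^2+|E_p^-|^2)\,dZ=0$, with equality to $\frac{d}{dT}\int\langle(E^+)^2\rangle+\langle(E^-)^2\rangle\,dZ$ supplied once more by Parseval. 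The main obstacle is therefore bookkeeping rather than analysis: I must verify that expanding the nonlocal kernel $\langle N(\phi+s)E^\mp(\phi\pm 2s)\rangle_s$ of \eqref{e:E_con} in Fourier modes produces precisely this antisymmetric pairing, so that the physical-space and mode-space statements are one identity and not two separate computations.
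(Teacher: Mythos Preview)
Your proposal is correct and follows the same overall architecture as the paper's proof: set $p=0$ (equivalently, average in $\phi$) for the mean laws, and for the quadratic law multiply \eqref{eq:Ep_mode}--\eqref{eq:Em_mode} by $\bar E_p^\pm$, add the conjugate, sum in $p$, and integrate in $Z$. The treatment of the linear coupling is also the same in substance: the paper writes the $E^+$ and $E^-$ balance equations separately and observes that summing them cancels the $N_{2p}$ terms, which is your observation that $w+(-\bar w)$ is purely imaginary, repackaged.

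The one genuine difference is in how you dispose of the cubic contributions. The paper stays entirely in Fourier space and uses a symmetrization identity: writing $\sum_{p,q,r} p\,E_q^+E_r^+E_{-p}^+E_{p-q-r}^+$ as $\sum_{k_1+\cdots+k_4=0} k_1\,E_{k_1}^+\cdots E_{k_4}^+$ and noting that by relabeling each $k_j$ gives the same value, the sum equals $\tfrac{1}{4}\sum (k_1+\cdots+k_4)\,(\cdots)=0$. You instead invoke Parseval to pull the sum back to physical space and recognize $E^+\partial_\phi(E^+)^3=\tfrac{3}{4}\partial_\phi(E^+)^4$ (and analogues) as exact $\phi$-derivatives with zero mean. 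Your route is arguably more transparent, since it makes the conservation-law structure of \eqref{e:E_con} do the work directly rather than rediscovering it combinatorially; the paper's route has the virtue of never leaving the mode variables and would generalize more readily to truncations. Either way the computation closes.
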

\begin{proof}
  Setting $p=0$ in \eqref{e:E_mode},
  \begin{align*}
    \dT \Ep_0 + \dZ \Ep_0&=0,\\
    \dT \Em_0 - \dZ \Em_0&=0.
  \end{align*}
  Integrating in $Z$ establishes the first two conservation laws in
  terms of the Fourier modes.  Integrating \eqref{e:efield_fourier} in
  $\phi$ over $[0, 2\pi)$ relates $\mean{E^\pm}$ to $E_0^\pm$.

\bigskip

  Multiplying \eqref{eq:Ep_mode} by $\bar{E}^+_p$, summing over $p$,
  and adding its complex conjugate,
  \begin{equation*}
    \begin{split}
      \sum \partial_T \abs{E_p^+}^2 + \partial_Z \abs{E_p^+}^2 &= \sum_p
      ip N_{2p} E_p^- \bar{E}_p^+ +\ \frac{\Gamma}{3}\ \sum_p\  ip \left[\sum_{q,r}
        \Ep_q
        \Ep_r \Ep_{-p}\Ep_{p-q-r}\right.\\
      &\quad + 3 \Em_0 \sum_q \Ep_q \Ep_{p-q}\Ep_{-p} \\
      &\quad \left.+3\paren{\sum_q
          \abs{\Em_q}^2} \ \abs{\Ep_p}^2\ \right] + \cc
    \end{split}
  \end{equation*}
  The quartic terms will all vanish.  Consider the first quartic term,
  and note that
\begin{equation}
\begin{split}
\sum_{p,q,r} p \Ep_q\Ep_r \Ep_{-p}\Ep_{p-q-r} &= \sum_{k_1 + k_2 + k_3
  + k_4 = 0} k_1 \Ep_{k_1}\Ep_{k_2} \Ep_{k_3}\Ep_{k_4}\\
&=\sum_{k_1 + k_2 + k_3
  + k_4 = 0} k_2 \Ep_{k_1}\Ep_{k_2} \Ep_{k_3}\Ep_{k_4}\\
&=\sum_{k_1 + k_2 + k_3
  + k_4 = 0} k_3 \Ep_{k_1}\Ep_{k_2} \Ep_{k_3}\Ep_{k_4}\\
&=\sum_{k_1 + k_2 + k_3
  + k_4 = 0} k_4 \Ep_{k_1}\Ep_{k_2} \Ep_{k_3}\Ep_{k_4}
\end{split}
\end{equation}
Hence,
\begin{equation}
\begin{split}
&\sum_{p,q,r} p \Ep_q\Ep_r \Ep_{-p}\Ep_{p-q-r}  \\
&\quad= \frac{1}{4}\sum_{k_1 + k_2 + k_3
  + k_4 = 0} ({k_1 + k_2 + k_3
  + k_4 }) \Ep_{k_1}\Ep_{k_2} \Ep_{k_3}\Ep_{k_4}=0
\end{split}
\end{equation}
The second quartic term vanishes using a similar analysis.  The last
quartic term,
\begin{equation}
\sum_p p \paren{\sum_q \abs{\Em_q}^2}\abs{\Ep_p}^2
\end{equation}
will vanish because the $p$ and $-p$ terms will cancel one another.
Similar analysis holds for \eqref{eq:Em_mode}, leaving
  us with the two equations
  \begin{align}
    \sum \partial_T \abs{E_p^+}^2 + \partial_Z \abs{E_p^+}^2 &= \sum
    ip N_{2p}
    E_p^- \bar{E}_p^+ - ip \bar{N}_{2p} \bar{E}_p^- E_p^+,\\
    \sum \partial_T \abs{E_p^-}^2 - \partial_Z \abs{E_p^-}^2 &= \sum
    ip \bar{N}_{2p} \bar{E}_p^- {E}_p^+ - ip {N}_{2p} {E}_p^-
    \bar{E}_p^+.
  \end{align}
    Summing these two, and integrating in $Z$ gives the $L^2$
  conservation law.
\end{proof} {\bf To simplify our analysis we assume $E^\pm_0$ are
  initially zero from here on}.  The equations reduce to
\begin{subequations}\label{e:E_con_zeromean}
  \begin{gather}
    \label{eq:Ep_con1}
    \begin{split}
      \dT \Ep + \dZ \Ep&= \dphi  \mean{N( \phi_+ +   s) E^-(\phi_+ + 2s}_s\\
      &\quad +\Gamma \dphi\bracket{\frac{1}{3}\paren{\Ep}^3 +
        \Ep\mean{\paren{\Em}^2}},
    \end{split}
    \\
    \label{eq:Em_con1}
    \begin{split}
      \dT \Em - \dZ \Em&=-\dphi\mean{N( \phi_- -   s)E^+( \phi_- - 2 s) }_s\\
      &\quad -\Gamma \dphi\bracket{\mean{\paren{\Ep}^2
        }\Em+\frac{1}{3}\paren{\Em}^3}.
    \end{split}
  \end{gather}
\end{subequations}
and
\begin{subequations}\label{e:E_mode_zeromean}
  \begin{gather}
    \label{eq:Ep_mode1}
    \begin{split}
      \dT \Ep_p + \dZ \Ep_p = \mathrm{i}p N_{2p}{\Em_p} +
      \mathrm{i}p\frac{\Gamma}{3}&\left[\sum\Ep_q
        \Ep_r \Ep_{p-q-r} \right.\\
      &\quad\left.+3\paren{\sum \abs{\Em_q}^2} \Ep_p \right],
    \end{split}
    \\
    \label{eq:Em_mode1}
    \begin{split}
      \dT \Em_p - \dZ \Em_p = \mathrm{i}p \bar{N}_{2p}{\Ep_p}
      +\mathrm{i}p\frac{\Gamma}{3} &\left[\sum \Em_q
        \Em_r \Em_{p-q-r} \right.\\
      &\quad \left.+3\paren{\sum \abs{\Ep_q}^2} \Em_p \right].
    \end{split}
  \end{gather}
\end{subequations}
These are equations \eqref{eq:integro_diff_intro} and
\eqref{e:mode_intro} from the introduction.  Truncating
\eqref{e:E_mode_zeromean} to just mode $E^\pm_{\pm 1}$, recovers the
NLCME, subject to the identification of $\mathcal{E}^\pm $ with
$E^\pm_{1}$.

Another time-invariant functional is a consequence of the 
Hamiltonian structure given in the following result, which is straightforward to verify:
\begin{prop}
  The system \eqref{e:E_mode_zeromean} is a Hamiltonian system:
  \begin{equation}
    \dT \Ep_p = - \mathrm{i} p \frac{\delta H}{\delta \overline{E}^+_p},
    \quad \dT \Em_p = - \mathrm{i} p \frac{\delta H}{\delta \overline{E}^-_p},
  \end{equation}
   where with time-invariant Hamiltonian
   \begin{equation}
   H[E^\pm,\overline{E^\pm}] = \int\ \mathcal{H}(\cdot,T)
   \nn\end{equation}
    and Hamiltonian density
  \begin{equation}
    \begin{split}
      \mathcal{H}(Z,T) &= \frac{\mathrm{i}}{2} \sum_{p_1 =1}^\infty
      \frac{1}{p_1}\paren{\Ep_{p_1} \dZ {\bar{E}^+}_{p_1} -
        \Em_{p_1} \dZ{\bar{E}^-}_{p_1}}- \sum_{p_1=1}^\infty N_{2 p_1} \bar{E}^+_{p_1}{\Em_{p_1}}\\
      &\quad- \frac{\Gamma}{3}
      \frac{1}{2}\frac{1}{4}\sum_{p_1+p_2+p_3+p_4=0}
      \Ep_{p_1}\Ep_{p_2}\Ep_{p_3}\Ep_{p_4}+\Em_{p_1}\Em_{p_2}\Em_{p_3}\Em_{p_4}\\
      &\quad - \Gamma
      \frac{1}{2}\frac{1}{2}\paren{\sum_{p_1}\abs{\Ep_{p_1}}^2}\paren{\sum_{p_1}\abs{\Em_{p_1}}^2}+
      \cc\ \ \ .
    \end{split}
  \end{equation}
\end{prop}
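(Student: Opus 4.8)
The plan is to verify the asserted Hamiltonian form by a direct computation of the variational derivatives $\delta H/\delta \overline{E}^\pm_p$ and checking that, after multiplication by $-\mathrm{i}p$, they reproduce exactly the right-hand sides of \eqref{eq:Ep_mode1} and \eqref{eq:Em_mode1}. First I would fix conventions: since $E^\pm$ are real, I treat $\{E^\pm_p,\overline{E}^\pm_p : p\geq 1\}$ as the independent (Wirtinger) variables, with $E^\pm_0=0$ and the reality relations $E^\pm_{-p}=\overline{E}^\pm_p$ used to eliminate negative indices. Because $H=\int \mathcal{H}\,dZ$ and $\mathcal{H}$ depends on the $E^\pm_p(Z,T)$ and on their $Z$-derivatives only through the kinetic term, the functional derivative is the Euler--Lagrange expression
\[
\frac{\delta H}{\delta \overline{E}^+_p} = \frac{\partial \mathcal{H}}{\partial \overline{E}^+_p} - \partial_Z\frac{\partial \mathcal{H}}{\partial(\partial_Z\overline{E}^+_p)},
\]
and likewise for $\overline{E}^-_p$; the integration by parts in $Z$ produces no boundary contribution by the assumed $Z$-decay. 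It then suffices to treat the forward equation, since the backward equation follows by the $+\leftrightarrow -$ symmetry of $\mathcal{H}$, and the equations for $p<0$ follow by conjugation once $H$ is seen to be real.

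Second, I would dispatch the two linear terms. For the kinetic term the only $Z$-derivative dependence sits in $\tfrac{\mathrm{i}}{2}\sum_{p_1}\tfrac{1}{p_1}E^+_{p_1}\partial_Z\overline{E}^+_{p_1}$ together with its complex conjugate; differentiating and including the Euler--Lagrange correction gives $\delta H/\delta\overline{E}^+_p = -\tfrac{\mathrm{i}}{p}\partial_Z E^+_p$ from this term, so that $-\mathrm{i}p\,\delta H/\delta\overline{E}^+_p$ produces precisely the transport term $-\partial_Z E^+_p$. This is where the weight $1/p_1$ in $\mathcal{H}$ earns its keep: it cancels the factor $p$ coming from the $p$-weighted symplectic structure, leaving a clean first-order derivative. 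The coupling term $-\sum_{p_1}N_{2p_1}\overline{E}^+_{p_1}E^-_{p_1}$ contributes $-N_{2p}E^-_p$, and $-\mathrm{i}p$ turns this into the $\mathrm{i}p\,N_{2p}E^-_p$ appearing in \eqref{eq:Ep_mode1}.

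The crux is the nonlinear part. For the self-interaction $-\tfrac{\Gamma}{3}\tfrac{1}{2}\tfrac{1}{4}\sum_{p_1+p_2+p_3+p_4=0}E^+_{p_1}E^+_{p_2}E^+_{p_3}E^+_{p_4}+\cc$, I would use that the quartic sum is symmetric under permuting its four indices, so differentiating with respect to $\overline{E}^+_p=E^+_{-p}$ selects any one of the four slots and yields a factor $4$, while the displayed term and its conjugate each contribute equally for a further factor $2$. Together with the prefactor $\tfrac{1}{2}\tfrac{1}{4}$ this gives $-\tfrac{\Gamma}{3}\sum_{q,r}E^+_qE^+_rE^+_{p-q-r}$, i.e.\ exactly the cubic convolution in \eqref{eq:Ep_mode1} after the $-\mathrm{i}p$ factor. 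For the cross term $-\Gamma\tfrac12\tfrac12\paren{\sum\abs{E^+_{p_1}}^2}\paren{\sum\abs{E^-_{p_1}}^2}+\cc$, differentiation leaves the $E^-$ factor untouched, and the $\overline{E}^+_p$ derivative of $\sum\abs{E^+_q}^2$ picks up contributions from both $q=p$ and $q=-p$ under the reality relation, producing the factor that, combined with the prefactor and the $\cc$ doubling, reconstitutes the $3\cdot\tfrac{\Gamma}{3}\paren{\sum\abs{E^-_q}^2}E^+_p$ term. The identical bookkeeping reproduces \eqref{eq:Em_mode1}.

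I expect the main obstacle to be precisely this bookkeeping of the reality constraint: keeping track of which sums run over all of $\mathbb{Z}$ versus $p\geq 1$, and correctly counting the doubled $q=\pm p$ contributions, so that the numerical prefactors $\tfrac12\tfrac14$ and $\tfrac12\tfrac12$ come out exactly right rather than off by a factor of two. A secondary point to verify is that $\mathcal{H}$ is genuinely real, which makes the $\overline{E}^\pm_p$-equations the complex conjugates of the $E^\pm_p$-equations and hence guarantees consistency of the whole infinite system; the index symmetry I invoke for the quartic is the same one already used to establish the $L^2$ conservation law in the preceding proposition, so no genuinely new cancellation is required here.
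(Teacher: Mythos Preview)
Your proposal is correct and is precisely the direct verification that the paper has in mind: the paper does not give a proof at all, stating only that the result ``is straightforward to verify,'' and your term-by-term computation of $\delta H/\delta\overline{E}^\pm_p$ (kinetic, coupling, quartic self-interaction, cross term) is exactly that verification. Your bookkeeping of the symmetry factors and the reality constraint $E^\pm_{-p}=\overline{E}^\pm_p$ is the right way to account for the prefactors $\tfrac12\tfrac14$ and $\tfrac12\tfrac12$.
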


\section{Simulations of the Truncated xNLCME}
\label{s:xnlcme_sims}

In this section we simulate truncations of the infinite dimensional
xNLCME system, performed pseudo-spectrally with fourth order
Runge-Kutta time stepping. These simulations suggest that
\begin{itemize}
\item xNLCME has its own localized soliton-like structures which
  better capture the dynamics of the nonlinear periodic Maxwell
  equation for our class of initial conditions than NLCME and
\item xNLCME has singular solutions, $\{E^\pm_p(Z,T)\}$ with a cascade of energy to higher wave numbers, $p$.
The physical electric field 
\begin{align}
E(z,t)\ &\ \approx \epsilon^{1\over2}\left( E^+(z-t,,\epsilon z,\epsilon t) + E^-(z+t,,\epsilon z,\epsilon t)\ \right)\nn\\
&\ =\ \epsilon^{1\over2}\  \sum_{p\in\mathbb{Z}\setminus{0}}\ \left(\ E^+_p(Z,T)e^{ip(Z- T)/\eps} +
  E^-_p(Z,T)e^{-ip(Z+  T)/\eps}\ \right)\nn\\
  &\ \ \ \ \ \ \ \ \  + \cc\ \nn
\end{align}
develops a carrier-shock train structure.
\end{itemize}

As we saw in Section \ref{s:maxwell_solitons}, particularly Figure
\ref{f:standing_soliton}, though the NLCME soliton data appeared
robust, there was some escape of energy.  This can be accounted for in
the xNLCME through the inclusion of additional modes.

Starting with the same initial conditions, we simulate the NLCME
soliton of $E_{\pm 1 }^\pm$ with soliton parameters $v=0$ and $\delta
= \tfrac{\pi}{2}$, and material parameters
\[
\Gamma = 1, \quad N_{\pm 2}=\tfrac{2}{\pi}, \quad N_{j \neq \pm 2} = 0
\]
in \eqref{e:E_mode_zeromean} resolving only a finite number of
harmonics.  The primitive electric field is reconstructed from these
simulations as
\begin{equation}
  E = \sum_{p=-p_{\max}}^{p_{\max}} E^+_p(Z,T)e^{ip(Z- T)/\eps} +
  E^-_p(Z,T)e^{-ip(Z+  T)/\eps} + \cc
\end{equation}
$E$ is plotted in Figures \ref{f:standing_soliton3}, and
\ref{f:standing_soliton15}, which resolve odd modes up to 3 and 15,
respectively.  Comparing with Figure \ref{f:standing_soliton}, we
infer that the two smaller pulses symmetrically expelled from the main
wave were transferred into $E^\pm_{\pm 3}$, since these clearly appear
in Figure \ref{f:standing_soliton3}.  This addresses the {\it
  macroscopic} discrepancy between NLCME and Maxwell.

  \begin{figure}
    \centering
    \subfigure{\includegraphics[width=2.3in]{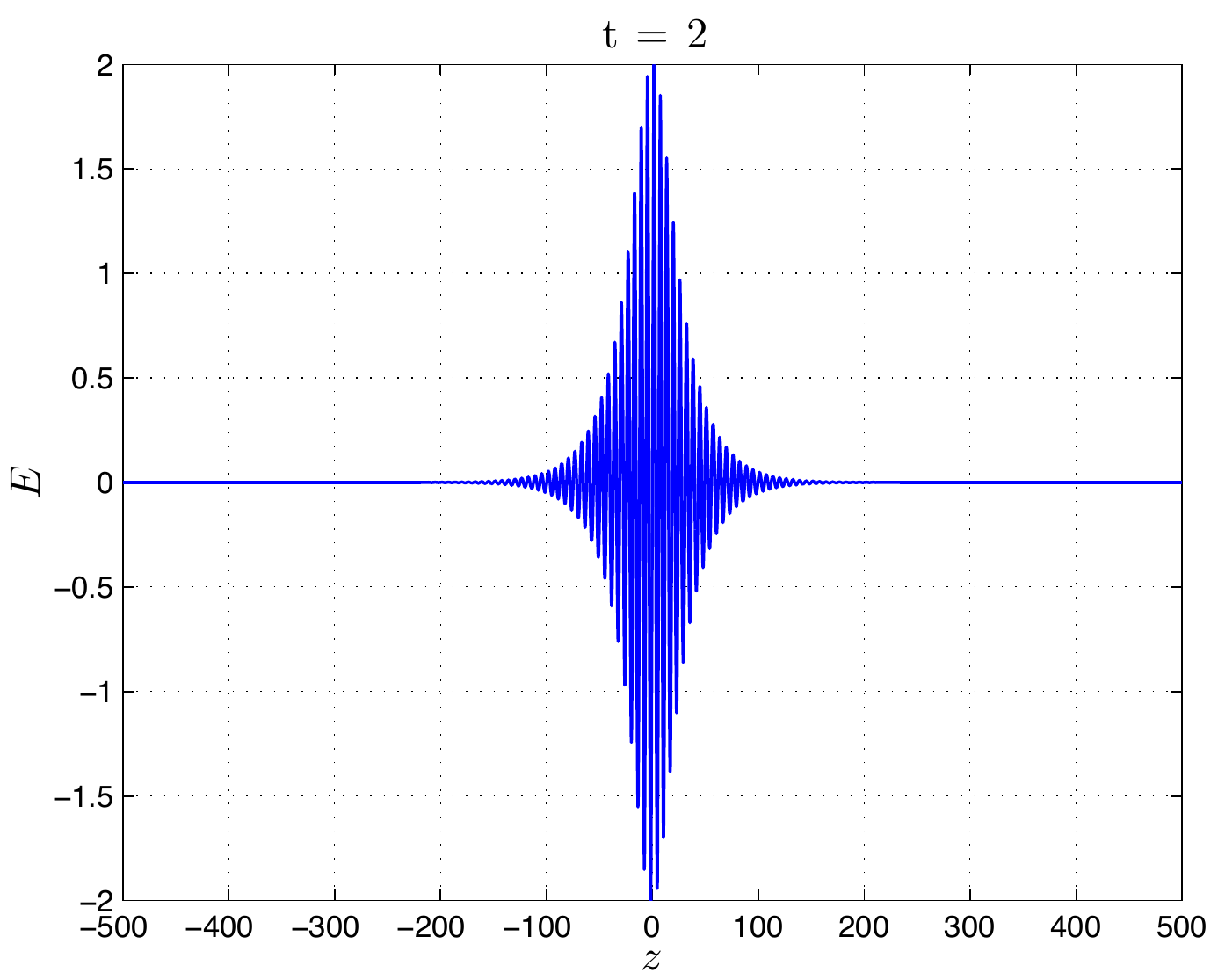}}
    \subfigure{\includegraphics[width=2.3in]{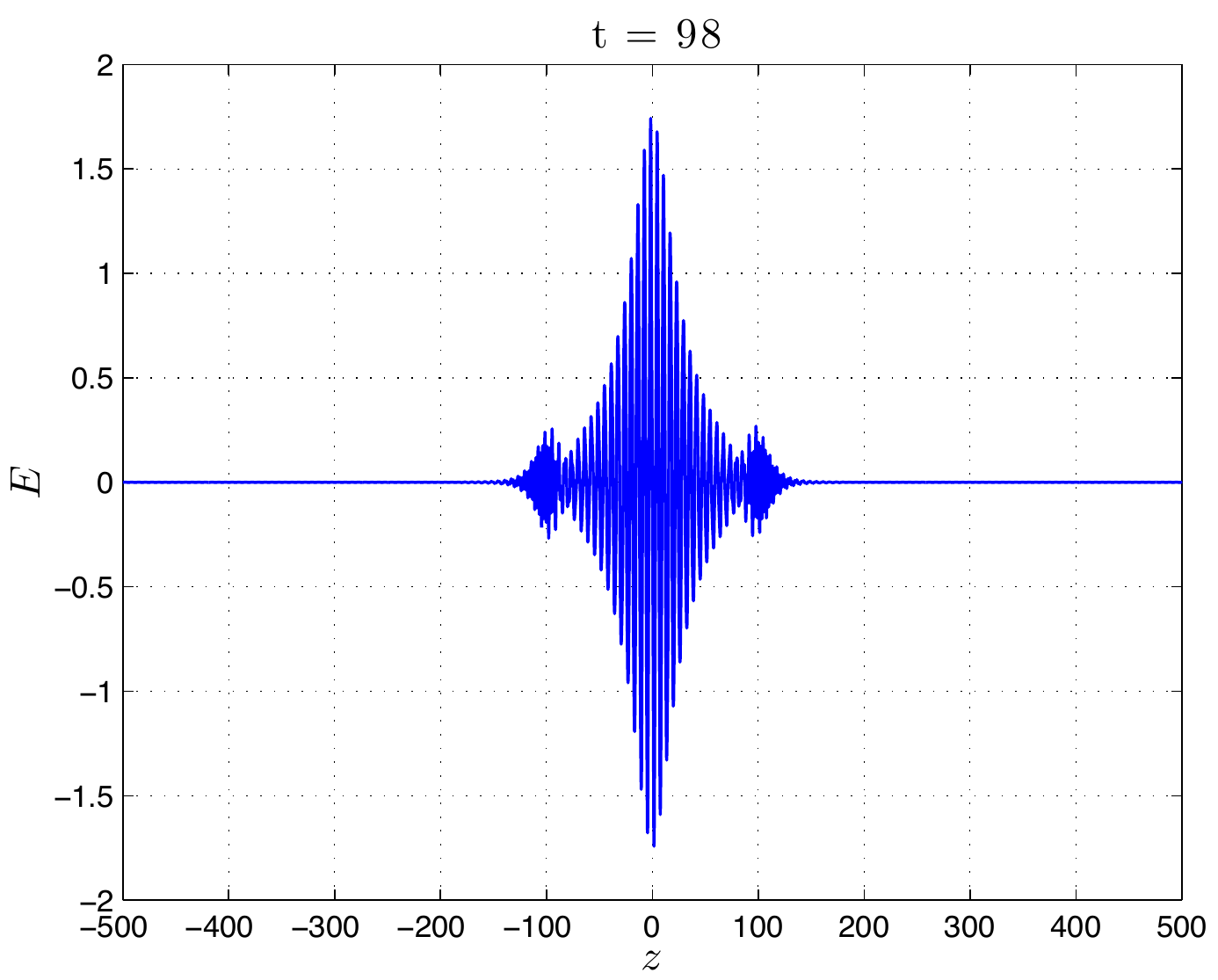}}

    \subfigure{\includegraphics[width=2.3in]{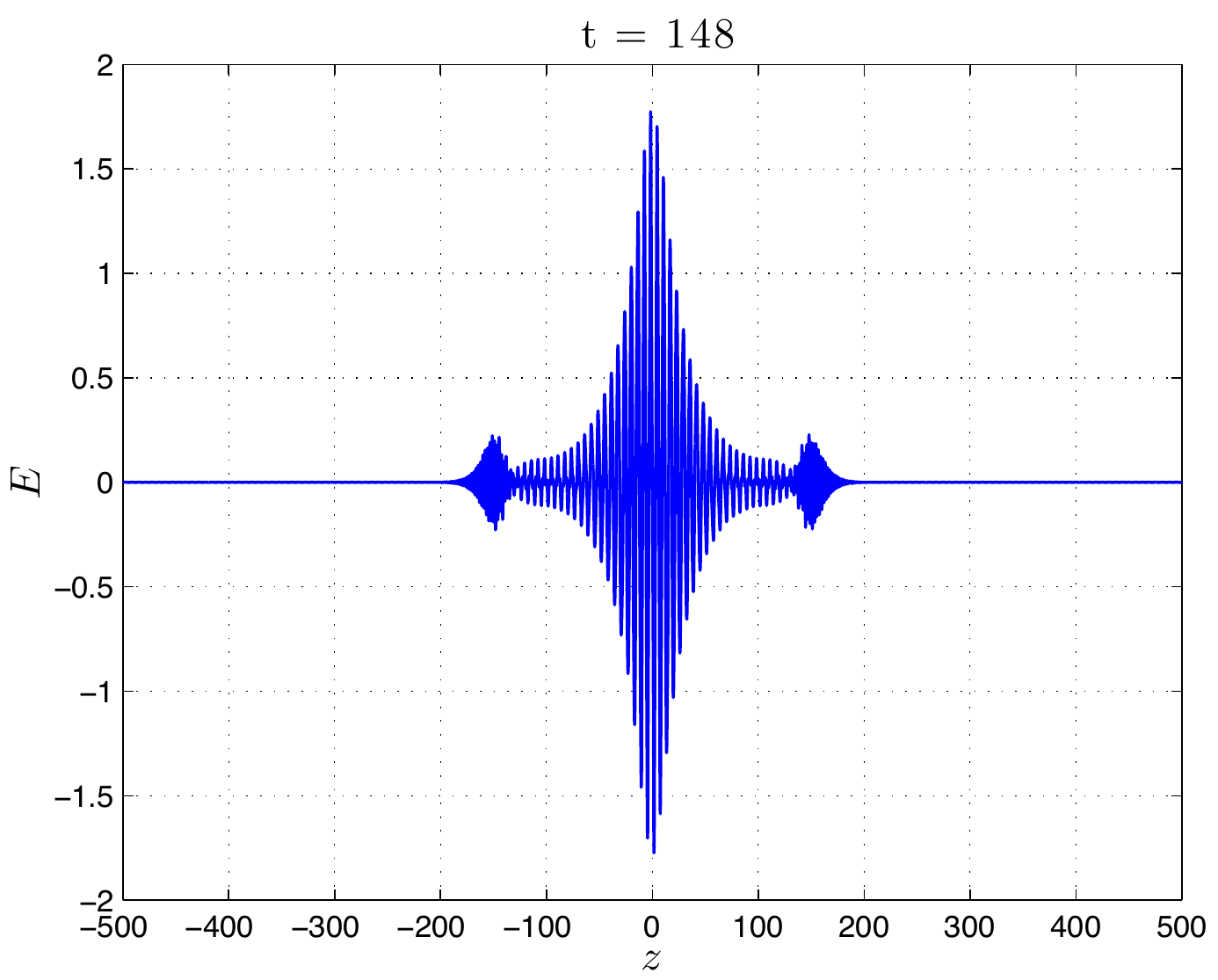}}
    \subfigure{\includegraphics[width=2.3in]{figs/plots_print_standing_wave_M16_N4096_Zmax32/fig100}}

    \caption{Evolution of an NLCME soliton in the xNLCME, resolving
      odd modes $\abs{p}\leq 4$. Computed with $4096$ grid points in
      the $Z$ coordinate. Compare with Figure
      \ref{f:standing_soliton}.}
    \label{f:standing_soliton3}
  \end{figure}

  \begin{figure}
    \centering
    \subfigure{\includegraphics[width=2.3in]{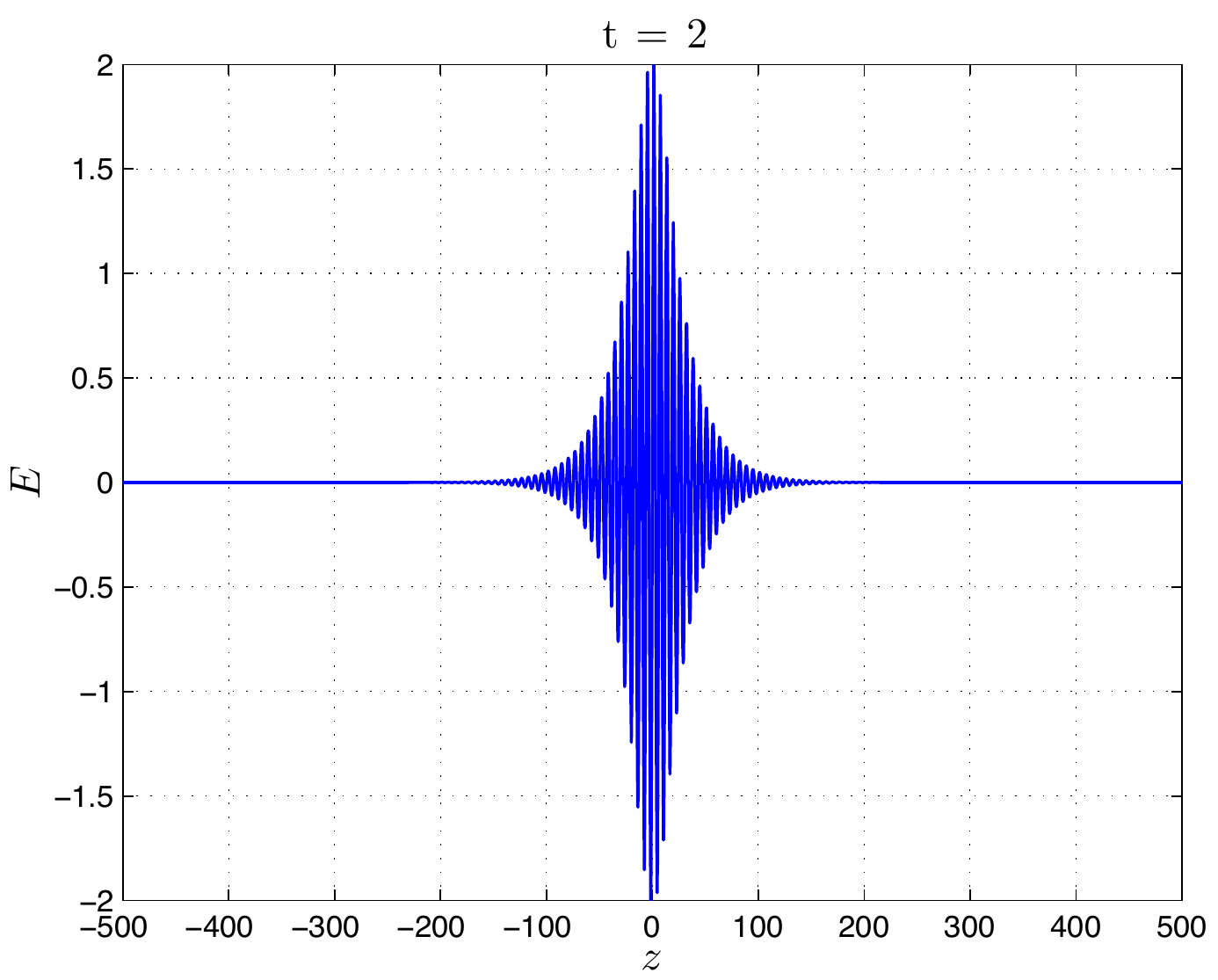}}
    \subfigure{\includegraphics[width=2.3in]{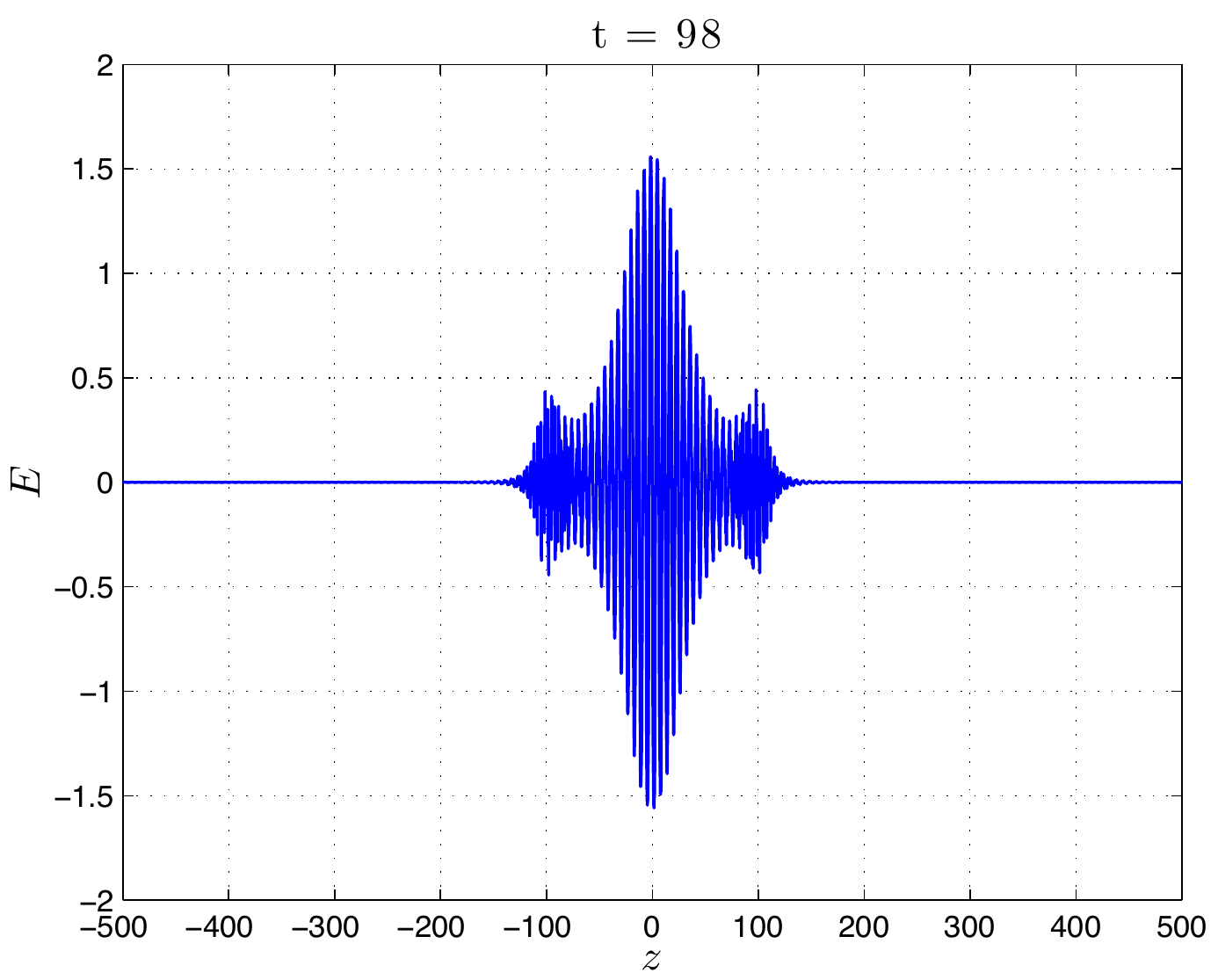}}

    \subfigure{\includegraphics[width=2.3in]{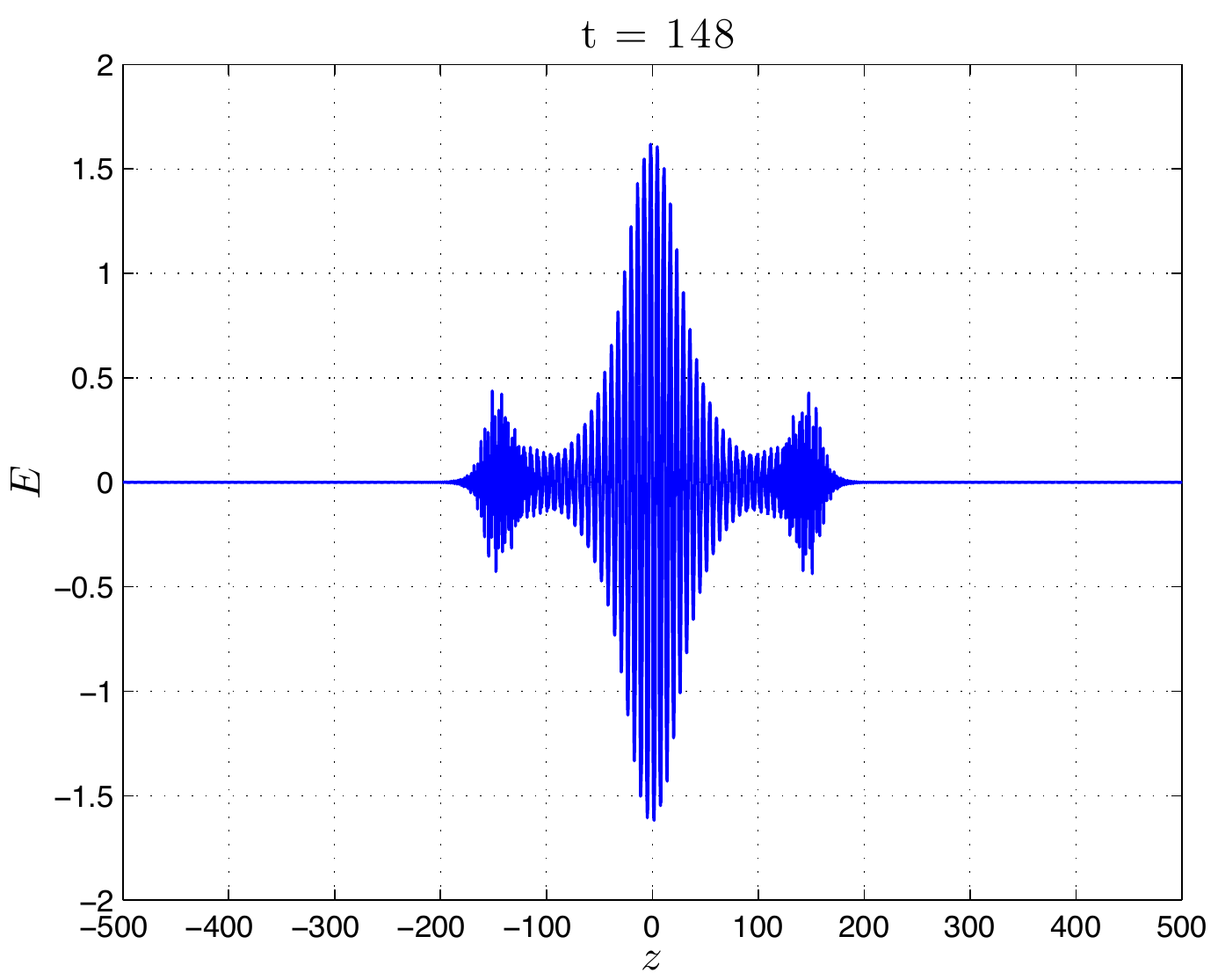}}
    \subfigure{\includegraphics[width=2.3in]{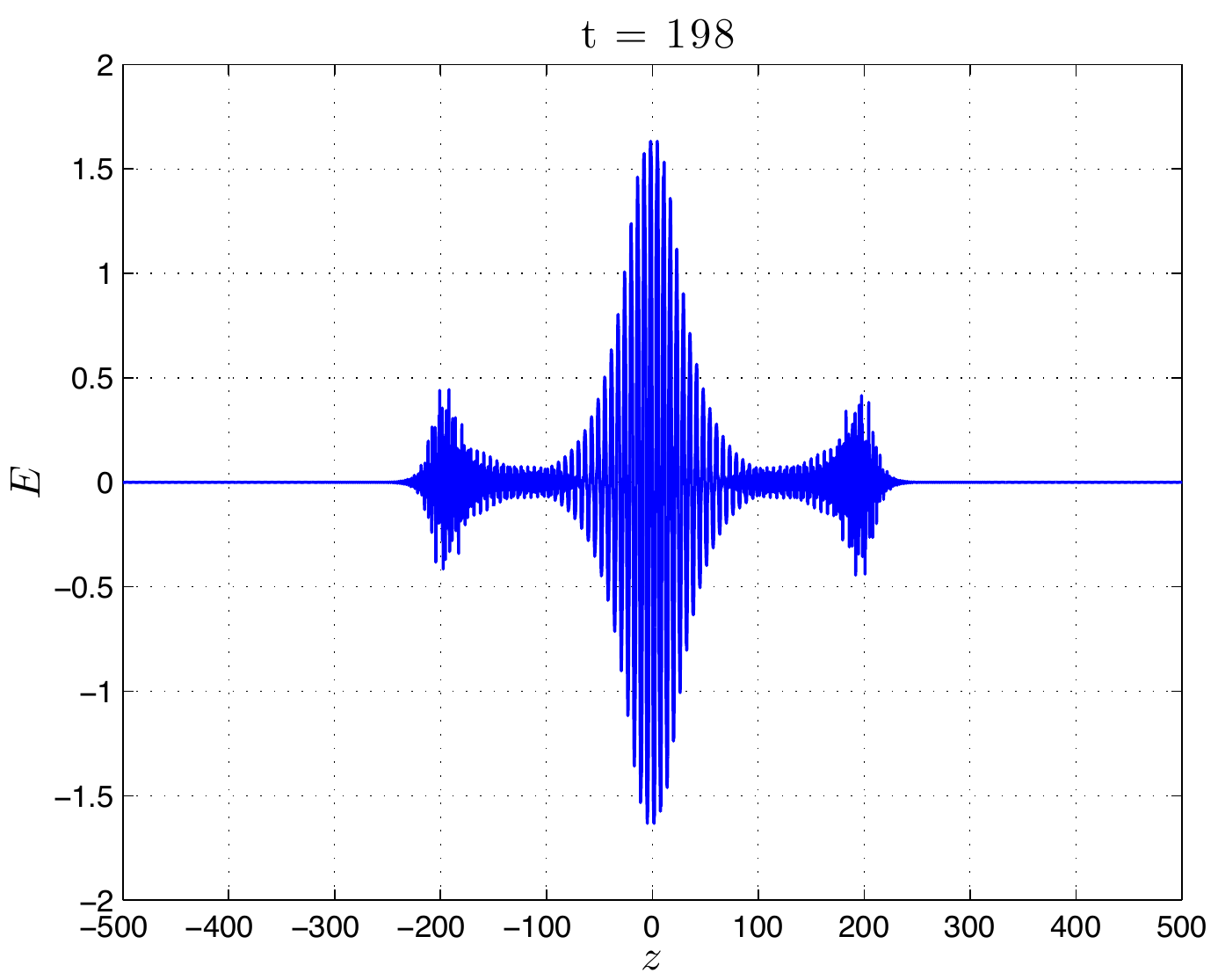}}

    \caption{Evolution of an NLCME solition in the xNLCME, resolving
      odd modes $\abs{p}\leq 16$.  Computed with $16384$ grid points
      in the $Z$ coordinate.  Compare with Figure
      \ref{f:standing_soliton}.}
    \label{f:standing_soliton15}
  \end{figure}

  Including the additional modes also suggests shock formation by
  re-examining Figure \ref{f:shock_comparison}.  The sharper, shock
  like features, can only be resolved by the inclusion of the the
  higher harmonics.  The contrast between different truncations is
  shown in Figure \ref{f:shock_harmonics}.  Indeed, we see the Gibbs
  phenomenon that would be expected from taking a truncated Fourier
  representation of a discontinuous function.

\begin{figure}
  \centering \subfigure[Resolves odd harmonics $\abs{p}\leq
  2$]{\includegraphics[width=2.35in]{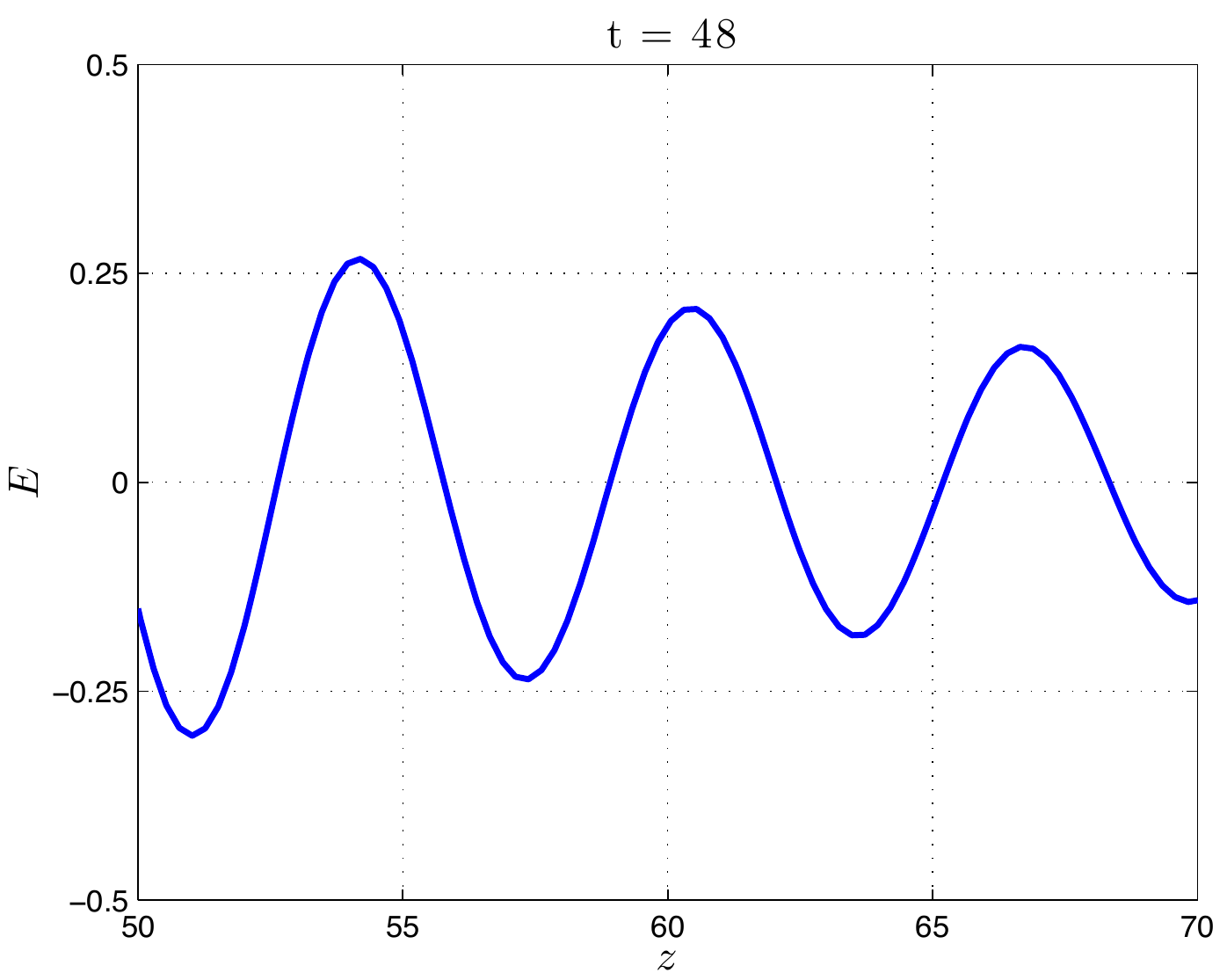}}
  \subfigure[Resolves odd harmonics $\abs{p}\leq
  4$]{\includegraphics[width=2.35in]{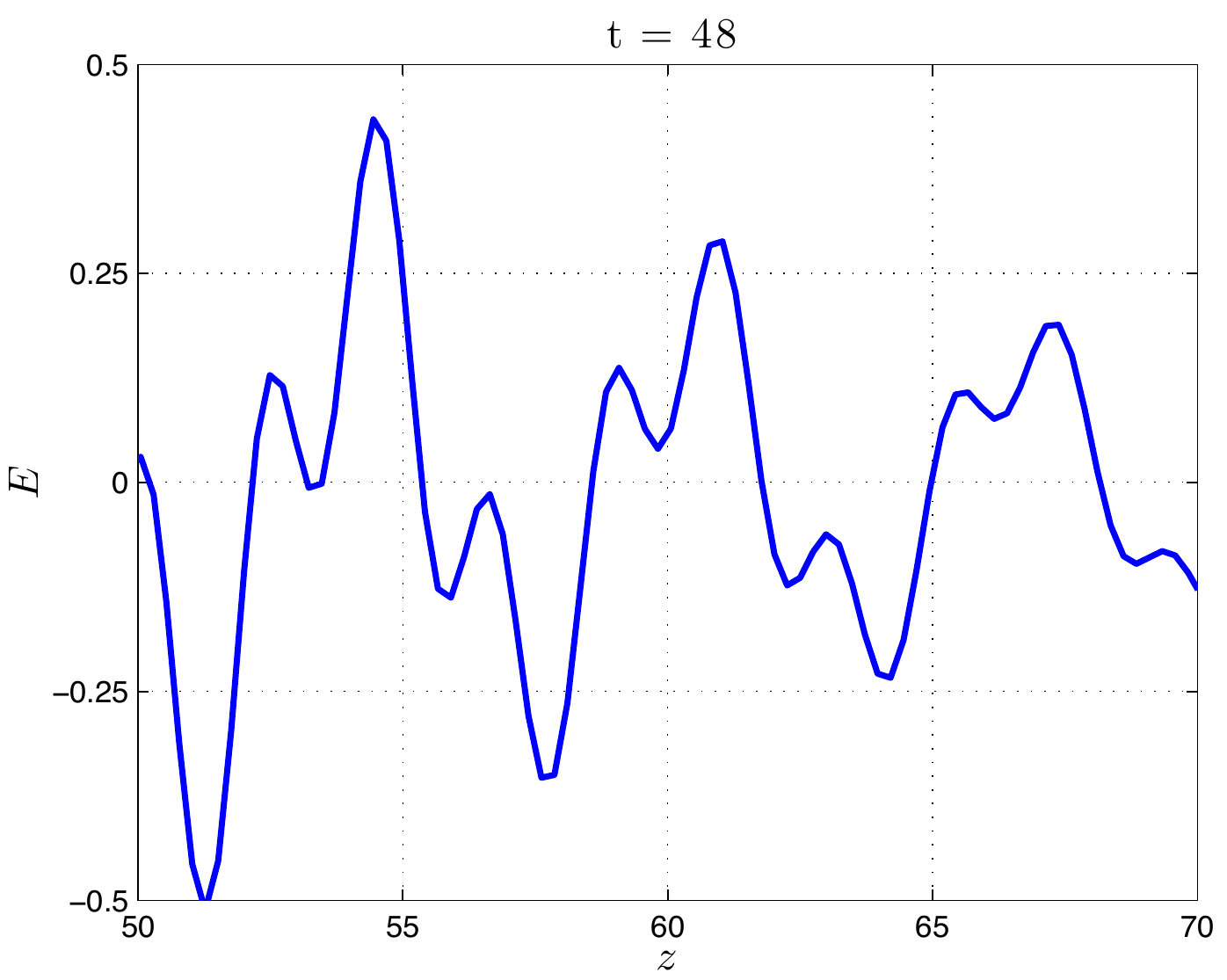}}
  \subfigure[Resolves odd harmonics $\abs{p}\leq
  8$]{\includegraphics[width=2.35in]{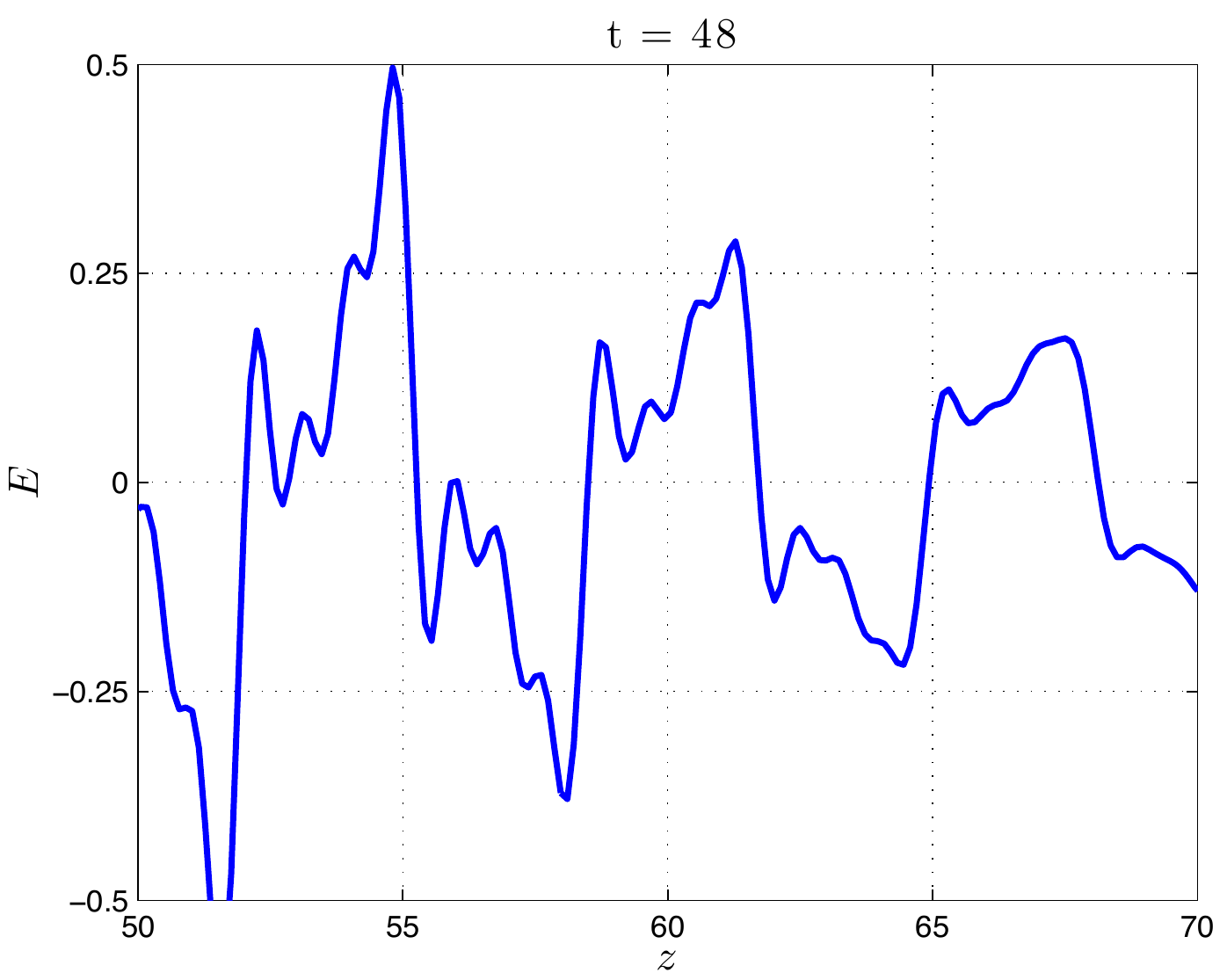}}
  \subfigure[Resolves odd harmonics $\abs{p}\leq
  16$]{\includegraphics[width=2.35in]{figs/plots_print_standing_wave_M64_N16384_Zmax32/shock_fig25}}
  \caption{Comparison of the features that develop on the scale of the
    medium in different truncations of the equations.  Including
    additional harmonics better captures the shocks seen in Figure
    \ref{f:shock_comparison}.  }
  \label{f:shock_harmonics}
\end{figure}

Despite this, NLCME still gets certain leading order effects correct,
such as the main structure in the Maxwell simulations.  The robustness
of NLCME can also be seen by exploring how energy is partitioned
amongst the harmonics.  Let
\begin{equation}
  \label{e:energy_density}
  e_p \equiv \int \paren{\abs{E^+_p}^2 + \abs{E^+_{-p}}^2+\abs{E^-_p}^2
    + \abs{E^-_{-p}}^2}dZ, \quad p = 1,3,\ldots p_{\max}.
\end{equation}
This is the energy  associated with mode $p$.  Their sum is
conserved.  Plotting this for the above simulations in Figure
\ref{f:eng_dist}, we see that most of the energy remains in mode one,
some migrates into mode three, and less in the subsequent modes.


\begin{figure}
  \centering \subfigure[Resolves odd harmonics $\abs{p}\leq
  4$]{\includegraphics[width=2.35in]{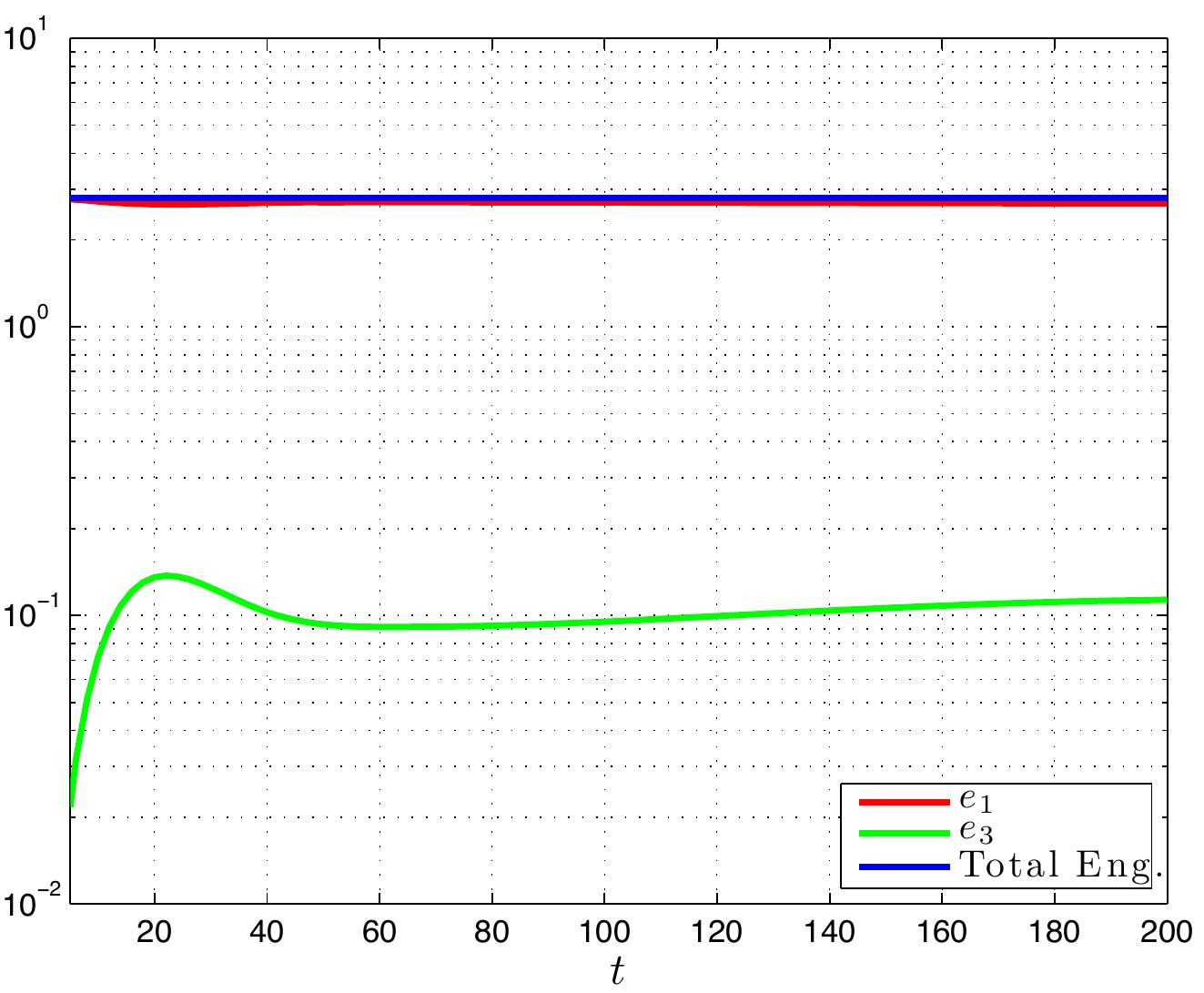}}
  \subfigure[Resolves odd harmonics $\abs{p}\leq
  8$]{\includegraphics[width=2.35in]{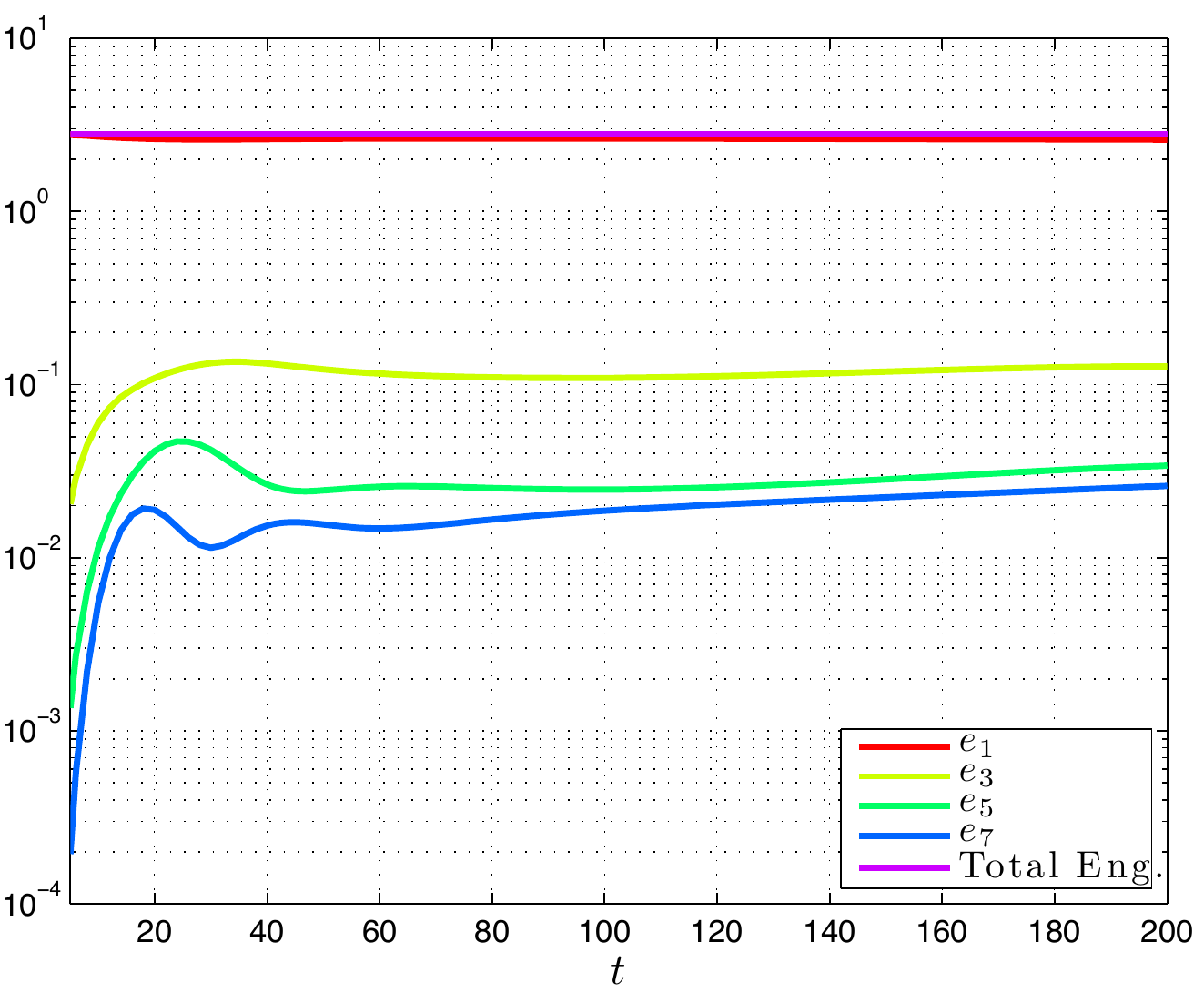}}
  \subfigure[Resolves odd harmonics $\abs{p}\leq
  16$]{\includegraphics[width=2.35in]{figs/plots_print_standing_wave_M64_N16384_Zmax32/eng_dist_8}}
  \caption{Energy distribution, \eqref{e:energy_density}, for
    truncated xNLCME simulations with different numbers of harmonics.
    In all cases, the energy initially residing in mode one tends to
    stay there.}
  \label{f:eng_dist}
\end{figure}

\section{Summary and discussion}
\label{sec:discussion}

We first numerically simulated the one-dimensional nonlinear Maxwell
equations in the regime of weak nonlinearity, low contrast periodic
structure (weak dispersion) with wave-packet data satisfying a {\it
  Bragg resonance condition}, {\it i.e.}  carrier wavelength equal to
twice the medium periodicity.  We observe strong evidence of the
emergence of a coherent structure evolving as slowly varying envelope
with a {\it carrier-shock train}. This violates the nearly-monochromatic
assumption underlying the classical nonlinear coupled mode equations.
We propose our nonlocal integro-differential equations governing
coupled forward and backward waves, derived via a nonlinear
geometrical optics expansion, as the physically correct,
mathematically consistent description of waves governed by nonlinear
Maxwell in a periodic structure with negligible chromatic (nonlocal in
time) dispersion. These equations are equivalent to an infinite
dimensional system of couple first order PDEs, the {\it extended
  coupled mode system} (xNLCME).  The electric field, $E$, obtained
from numerical solution of successively higher truncations of xNLCME
converges toward the envelope carrier-shock trains observed in direct
simulations of the nonlinear Maxwell equations.


Finally we mention that our methods could be applied to study the long time evolution 
of wave-packet type initial conditions  for the problem of quadratically
nonlinear elastic media,
consider in \cite{leveque2002fvm,leveque2003swl,Ketcheson:2009fk}
We obtain nonlocal equations of resonant 
  nonlinear geometrical optics (or equivalently an infinite family of nonlinear coupled mode equations),
  governing   interacting forward and backward propagating waves \cite{Simpson:2010uq}. 
   A difference between the quadratic and cubic case is that the smallest
truncated system that retains nonlinear interactions contains four
modes, $p=\pm 1, \pm 2$.  Nonlinear effects occur through second harmonic generation,
 a process well-known in nonlinear optics.

\noindent {\bf Open problems and conjectures:}  As our simulations
show, there is agreement between finite mode truncations of the
integro-differential equations and the primitive Maxwell system.
Assessing, and proving the time of validity of this approximation is
one upon problem.

Following up on assessing the time of validity, there is also the
question of the time of existence and the well-posedness of the
equations. We expect that solutions of xNLCME for initial data having
a finite number of nonzero mode amplitudes, {\it e.g.} NLCME gap
soliton data, will give rise to solutions of xNLCME that develop
singularities in finite time.  The nature of this blowup is expected
to occur via a cascade to high mode amplitudes (higher index, $p$),
corresponding to modes necessary to resolve the carrier shock
structure in the small scale.  As we mentioned in the discussion,
there is clearly singularity formation when the heterogeneity is
turned off ($N=0$), and either $E^+$ or $E^-$ is initially zero.  It
is an open problem as to whether this particular mechanism for
singularity formation will persist when coupling is restored.

As pointed out in the introduction, the success in modeling
experiments with NLCME suggests that, although there is such a (weakly
turbulent) cascade, it is only a small part of the optical power that
is transferred to high wavenumbers and that this energy contributes
mainly to resolving the small-scale shocks.  To explore this, one needs
to simulate the xNLCME equations with many more harmonics.  Plotting
the Fourier transform (in the $Z$ coordinate) of the simulations in
Section \ref{s:xnlcme_sims} in figure \ref{f:fourier_dist}, we see
that the spectral support grows as we increase the number of resolved
envelopes (the $E_p^\pm$'s).  A related question is whether or not the
primitive Maxwell system, the xNLCME system, or one of its
truncations possess genuine coherent structures.  In
\cite{Tasgal:2005p6335}, the authors found such solutions for a first
and third harmonic system.  This shall be further 
explored in the forthcoming publication, \cite{Simpson:fk}.

\begin{figure}
  \centering \subfigure[Resolves odd harmonics $\abs{p}\leq
  2$]{\includegraphics[width=2.35in]{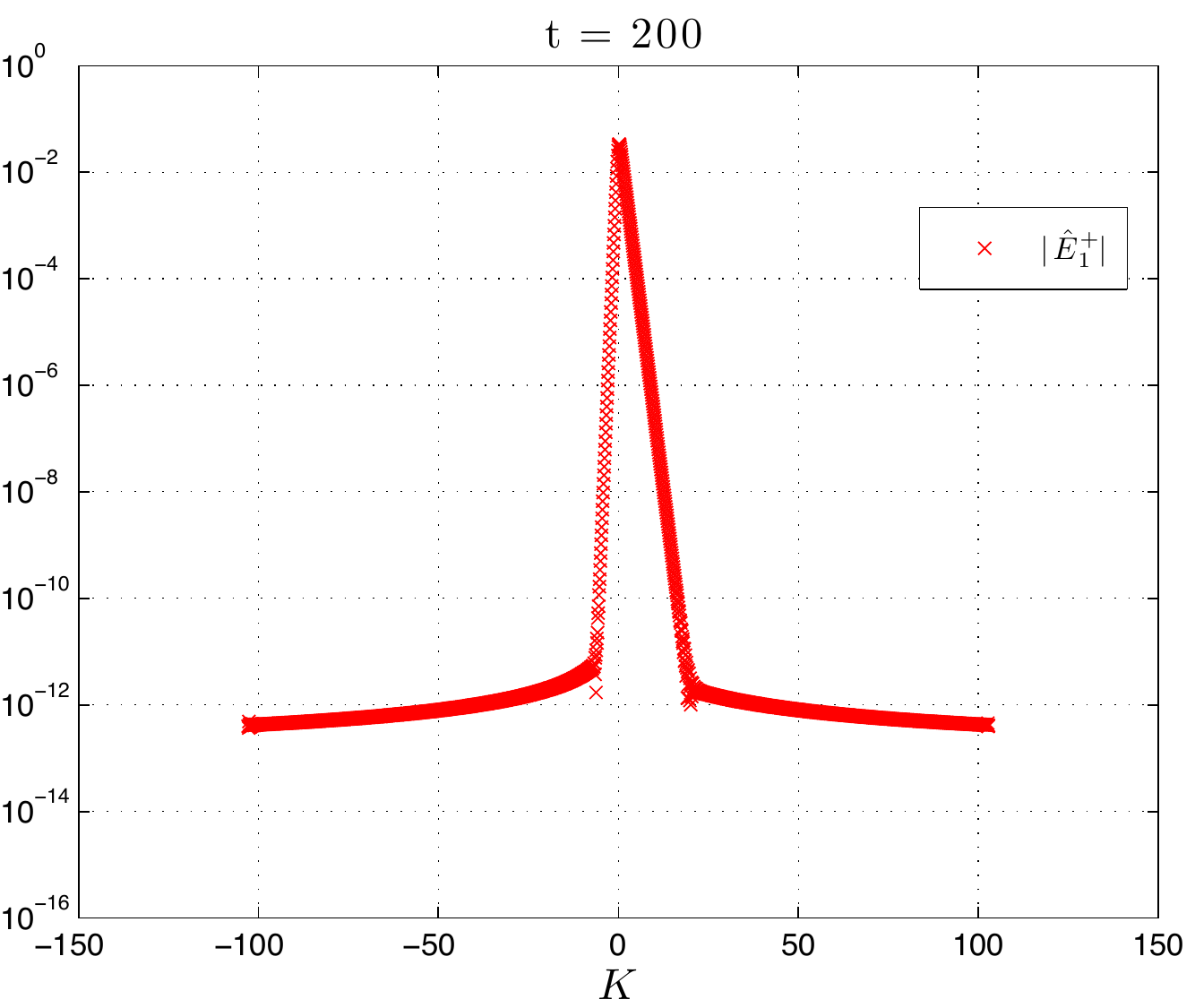}}
  \subfigure[Resolves odd harmonics $\abs{p}\leq
  4$]{\includegraphics[width=2.35in]{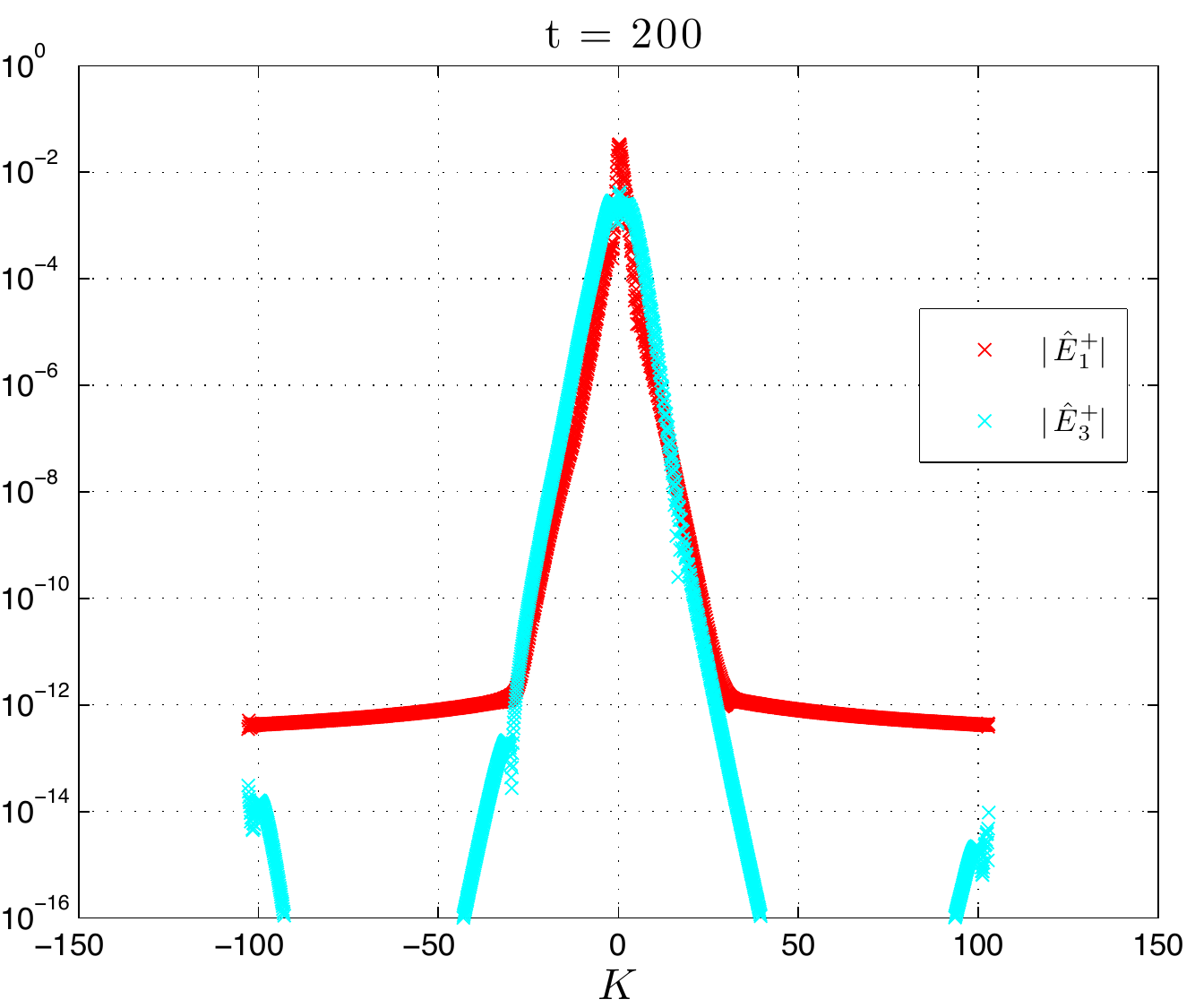}}

  \subfigure[Resolves odd harmonics $\abs{p}\leq
  8$]{\includegraphics[width=2.35in]{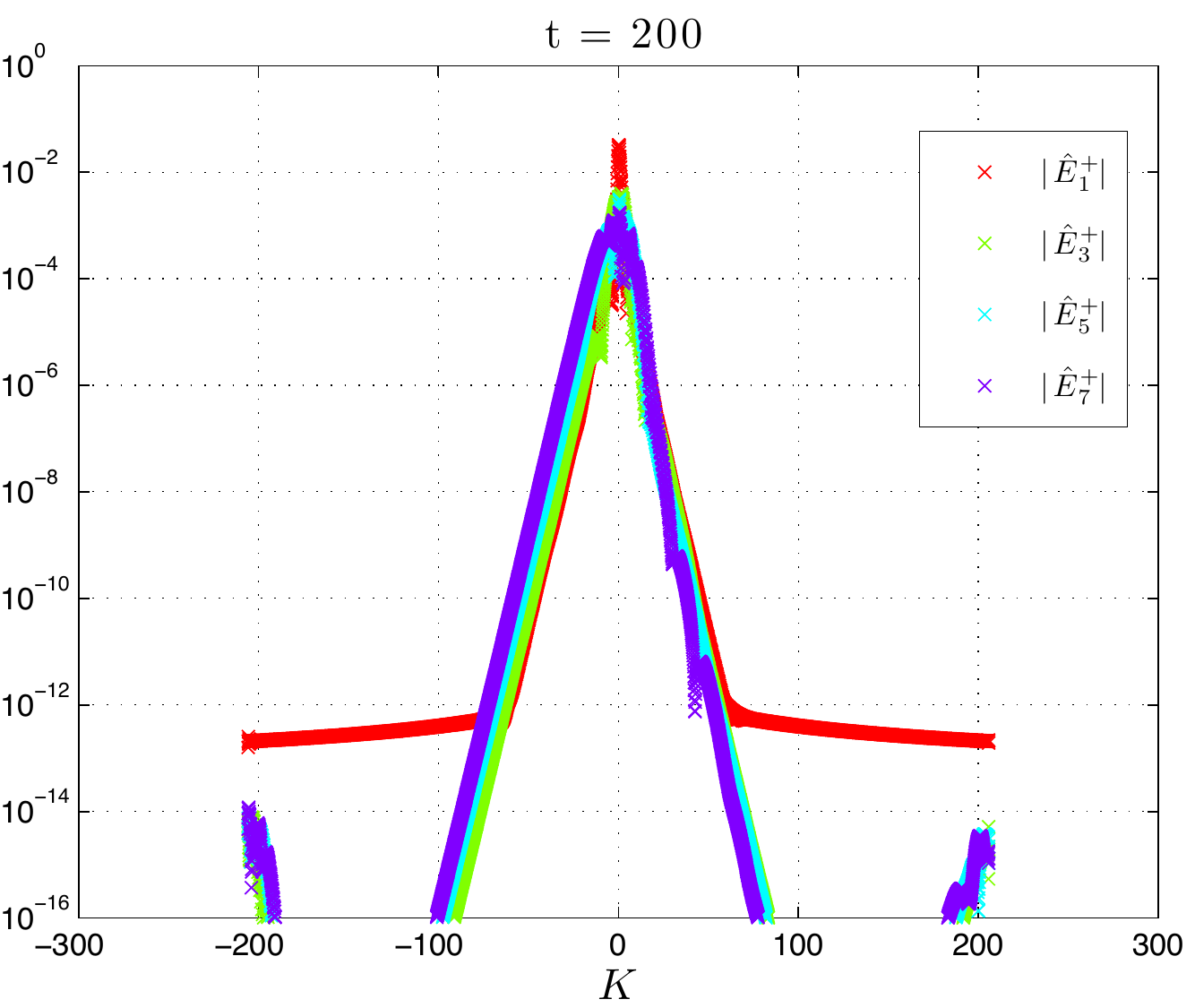}}
  \subfigure[Resolves odd harmonics $\abs{p}\leq
  16$]{\includegraphics[width=2.35in]{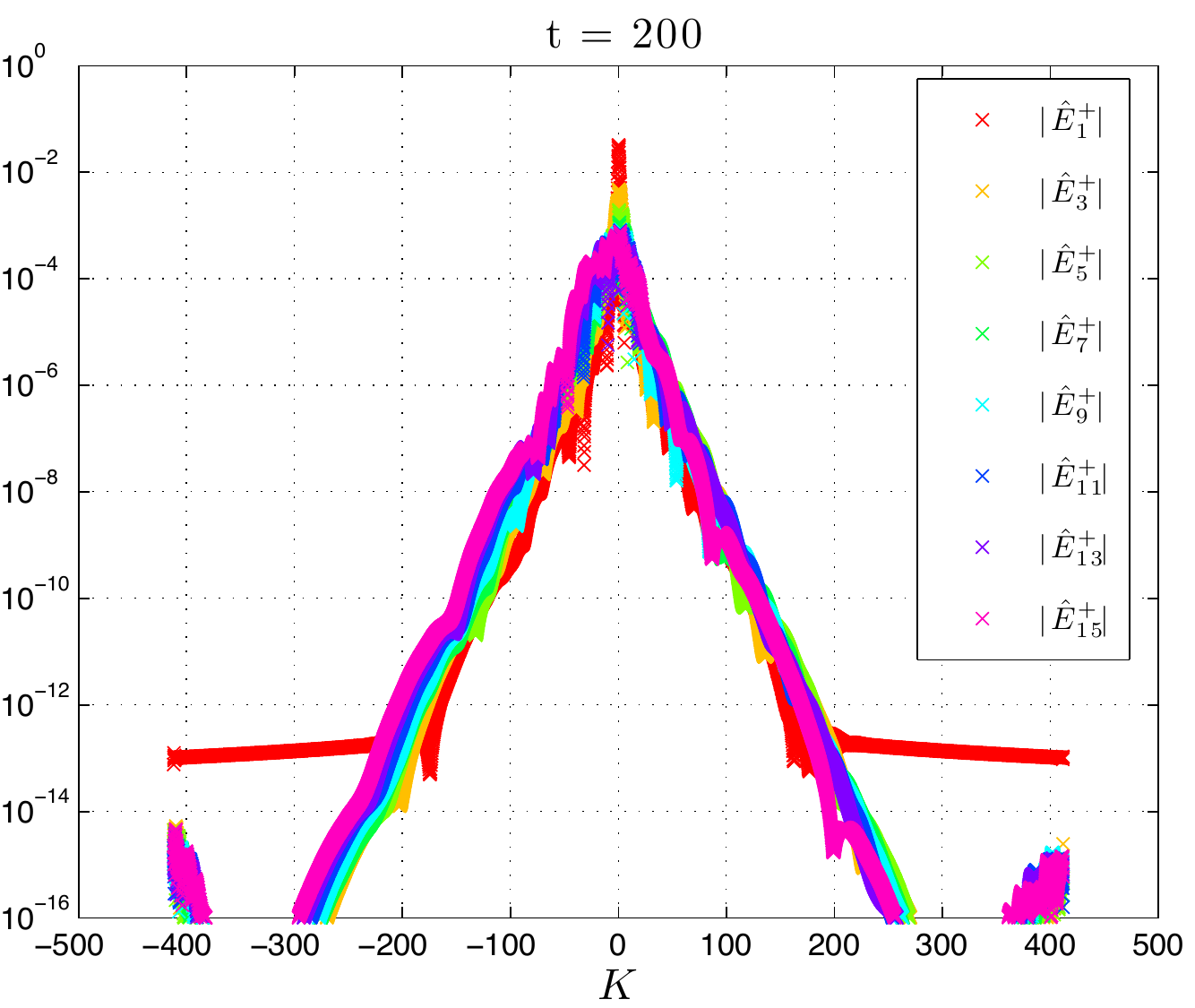}}
  \caption{Fourier transforms of the solutions to truncations of the
    xNLCME equations.  Increasing the number of envelopes expands the
    support in Fourier space.}
  \label{f:fourier_dist}
\end{figure}

Finally, our computations in Section \ref{sec:observations} invoked of
a gas-dynamics entropy condition.  Such a condition is necessary to
use finite volume methods.  Although thermodynamically consistent,
 we do not know whether this is the correct regularizing mechanism of electrodynamics.
\bigskip\bigskip



\bibliographystyle{abbrv}

\bibliography{maxwell_nlgo}


\appendix

\section{The NLCME Soliton}
\label{s:nlcme_soliton}
Using the notation of \cite{goodman01npl}, the NLCME soliton solution
of \eqref{eq:nlcme} is given by:
\begin{subequations}
  \label{eq:NLCME_soliton}
  \begin{gather}
    \mathcal{E}^+(Z,T)  = s \alpha e^{\mathrm{i}\eta} \sqrt{\abs{\frac{N_2}{2\Gamma}}}\frac{1}{\Delta} \sin \delta e^{\mathrm{i} s \sigma} \sech(\theta - \mathrm{i} s \delta/2)\\
    \mathcal{E}^-(Z,T)  = -\alpha e^{\mathrm{i}\eta} \sqrt{\abs{\frac{N_2}{2\Gamma}}}{\Delta} \sin \delta e^{\mathrm{i} s \sigma} \sech(\theta + \mathrm{i} s \delta/2)\\
    \theta = \gamma N_2 \sin\delta (Z - v T),
    \quad \sigma = \gamma N_2 \cos \delta ( v Z - T)\\
    e^{\mathrm{i} \eta}  = \paren{-\frac{e^{2\theta} + e^{-\mathrm{i} s \delta}}{e^{2\theta} + e^{\mathrm{i} s \delta}}}^{2v/(3-v^2)}\\
    \gamma = 1/ \sqrt{1 - v^2}, \quad \Delta  = \paren{\frac{1-v}{1+v}}^{1/4}\\
    s = \sign( N_2 \Gamma), \quad \alpha =
    \sqrt{\frac{2(1-v^2)}{3-v^2}}, \quad \tilde \kappa = \kappa N_2
  \end{gather}
\end{subequations}
We assume that $N_2 \in \R$.  There are two free parameters,
$\abs{v}<1$ and $\delta\in \R$.

\section{Simulating the Nonlinear Maxwell Equations}
\label{sec:methods}
In vector notation, the rescaled Maxwell
system,\eqref{e:rescaled_maxwell}, and constitutive law,
\eqref{e:rescaled_closure}, are expressed as
\begin{equation}
  \label{eq:nonlinhyper}
  \begin{split}
    \dt \begin{pmatrix} D \\ B\end{pmatrix} &+ \dz \begin{pmatrix}
      -B\\-E(D,z)\end{pmatrix}=0 \\
    \dt {\bf v} &+ \dz {\bf f}({\bf v}, z) = 0.
  \end{split}
\end{equation}
To simulate this system of conservation laws, we employ a shock
capturing finite volume scheme with the CLAWPACK software,
\cite{LeVeque:kx,leveque2002fvm}.  Furthermore, we employ the
$\emph{f}$-wave method to accommodate the spatially varying flux
function, \cite{bale2003wpm, leveque2002fvm,leveque2003swl} .

To use finite volume methods we must provide the algorithm with a,
possibly approximate, solution of the Riemman problem.  This
introduces a subtlety as our system has a non-convex flux function.
Non-convex fluxes lead to interesting waves, including rightward (or
leftward) traveling rarefaction and shockwaves that are ``glued''
together.  Such waves, sometimes called {\it compound} or {\it
  composite} waves, were discussed in \cite{wendroff1972arpm,
  wendroff1972brpm, liu1974rpg, liu1975rpg} and more recently in
\cite{wesseling2000umc, wesseling2001uen, muller2006rpe}.  Examples
are also give in the texts \cite{leveque2002fvm, smoller1994swa}.

\subsection{Finite Volume Methods for Maxwell}
\label{sec:nonconvex}

In finite volume numerical methods, at each time step, we must solve a
Riemann problem between adjacent grid cells:
\begin{equation}
  \begin{split}
    \mathbf{v}_t + \mathbf{f}(\mathbf{v};z_j)_z=0\quad\text{for $ z_{j-1/2} < z < z_{j+1/2}$}, \\
    \mathbf{v}_t + \mathbf{f}(\mathbf{v};z_{j+1})_z=0\quad\text{for $ z_{j+1/2} < z < z_{j+3/2}$}, \\
    \mathbf{v}(z, t=t^n) = \begin{cases} \mathbf{v}_j^n & \text{ for $
        z_{j-1/2} < z < z_{j+1/2}$}, \\ \mathbf{v}_{j+1}^n & \text{
        for $z_{j+1/2}< z< z_{j+3/2}$}. \end{cases}
  \end{split}
\end{equation}
$z_{j+1/2}$ is the interface between the cell centered at $z_j$ and
the cell centered at $z_{j+1/2}$.  The fluxes are assumed to be
constant in $z$ within each computational cell. We aim to provide an
exact solution to the Riemann problem, in contrast to an approximate
solutions such as the Roe average.

In the next few sections, we adopt the notation
\begin{equation}
  \begin{split}
    \mathbf{v}_t + \mathbf{f}_l(\mathbf{v})_z=0\quad\text{for $ z <0$} \\
    \mathbf{v}_t + \mathbf{f}_r(\mathbf{v})_z=0\quad\text{for $z>0$} \\
    \mathbf{v}(z,0) = \begin{cases}
      \mathbf{v}_l & \text{ for $ z<0$}\\
      \mathbf{v}_r& \text{ for $z>0$} \end{cases}
  \end{split}
\end{equation}
where $ \mathbf{f}_l(\mathbf{v})=\mathbf{f}(\mathbf{v};z_j)$,
$\mathbf{f}_r(\mathbf{v})=\mathbf{f}(\mathbf{v};z_{j+1})$, and we take
$z_{j+1/2} = 0$.

Given any point $\mathbf{v}$ in $(D,B)$ phase space, we construct two,
entropy condition (specified below) satisfying, manifolds
$\mathcal{W}_1(\mathbf{v})$ and $\mathcal{W}_2(\mathbf{v})$.  These
define the locus of points that can be joined to $\mathbf{v}$ by a
left-going wave in the former case and a right-going wave in the
latter case.  We parameterize them in the $D$ component.  Given a
state $\mathbf{v}_0$, $W_j(D; \mathbf{v}_0)$ is the parametric curve
such that:
\[
\set{\begin{pmatrix}
    D\\
    W_j(D; \mathbf{v}_0)
  \end{pmatrix},\quad \textrm{for $D \in \R$}} =
\mathcal{W}_j(\mathbf{v}_0)\quad \text{for $j=1,2$}
\]

Were the medium homogeneous, solving the Riemann corresponds to
finding the state $\mathbf{v}_\star$ that is the unique point in
$\mathcal{W}_1(\mathbf{v}_l) \bigcap \mathcal{W}_2(\mathbf{v}_r)$.  In
terms of the parametric curves, this point solves the equation:
\begin{equation}
  W_2\paren{D_r;\begin{pmatrix}D_\star \\ W_1(D_\star; \mathbf{v}_l)\end{pmatrix}}= B_r
\end{equation}
As the medium is not homogeneous, we match the flux at the interface.
We seek $\mathbf{v}_\star^l$ and $\mathbf{v}_\star^r$ such that:
\begin{subequations}\label{e:rp}
  \begin{align}
    \label{eq:rp1}
    W_1(D^\star_l; \mathbf{v}_l) & = B^\star_l\\
    \label{eq:rp2}
    \mathbf{f}_l(\mathbf{v}^\star_l) & = \mathbf{f}_r(\mathbf{v}^\star_r)\\
    \label{eq:rp3}
    W_2(D_r; \mathbf{v}^\star_r) & = B_r
  \end{align}
\end{subequations}
$\mathbf{v}^\star_l$ is the the entropy satisfying state immediately
to the left of the interface and $\mathbf{v}^\star_r$ is the entropy
satisfying state immediately to the right of the interface.

For this problem, the flux matching condition is:
\begin{subequations}
  \begin{align}
    E(D^\star_l; z_l) &= E(D^\star_r; z_r)\\
    B^\star_l &= B^\star_r
  \end{align}
\end{subequations}
Defining transfer function, $\mathcal{T}$, that, given $z_l$, $z_r$
and a left state $D^\star_l$, the flux matched displacement is:
\begin{equation}
  \mathcal{T}(D^\star_l; z_l, z_r) = D^\star_r
\end{equation}
With this function, \eqref{e:rp} becomes
\begin{equation}
  W_2\paren{D_r; \begin{pmatrix}
      \mathcal{T}(D^\star_l; z_l; z_r)\\
      W_1(D^\star_l;\mathbf{v}_l)
    \end{pmatrix}}= B_r
\end{equation}
$D^\star_l$ is the unknown.  Once we have this value, we recover
$E^\star$ and $B^\star$ allowing us to compute the fluxes.  Subject to
the specification of the phase space functions $W_j$ are specified,
this is a one-dimensional root finding problem.

\subsection{Non-Convex Fluxes and the Entropy Condition}

It remains to specify the manifolds $\mathcal{W}_j$.  This requires an
additional, non-trivial, assumption on an entropy condition.  While
such a condition is readily apparent in gas dynamics and elasticity,
the appropriate condition for Maxwell is non-obvious.


In this work, we employ a diffusive entropy condition, akin to that
found in gas dynamics.  This was suggested by Sj\"oberg
\cite{sjoberg:uac}, as part of an entropy-flux pair involving the
Poynting vector.  This is also a physically consistent, as many
dielectrics absorb the higher harmonics that would appear as the wave
began to shock.

In constructing the entropy satisfying $W_j$ functions, we closely
follow \cite{liu1974rpg,liu1975rpg,wendroff1972brpm,Muller:2006p1678}
and particularly the $p$-system example in \cite{wendroff1972arpm}.
Graphically, the $W_j$ functions can be constructed by tracing an
appropriate convex hull of $E(D;z)$.  Shock waves occur when points
are joined by chords, rarefaction waves when points are joined along
$E(D;z)$, and composite waves when convex curve is a combination.

{\bf Throughout this section we suppress the $z$ argument, and $E'(D)
  = \partial_D E(D,z)$.}

Since the flux function is no longer uniformly convex, the Lax entropy
condition may not be appropriate.  Instead, Liu entropy condition
\cite{liu1974rpg, Muller:2006p1678} may apply.  Recall, that if
\begin{equation}
  \sigma({\bf v}_0,{\bf v}) \equiv \frac{-E(D)+ E(D_0)}{B-B_0}
\end{equation}
then:
\begin{itemize}
\item A shock joining ${\bf v}_0$ to ${\bf v}_1$ satisfies the Lax
  entropy condition if the system is convex and either
  \begin{subequations}\label{e:lax_condition}
    \begin{align}
      \lambda_1({\bf v}_1) < \sigma < \lambda_1({\bf v}_0), &
      \quad\sigma <
      0\\
      \lambda_2({\bf v}_1) < \sigma < \lambda_2({\bf v}_0), &
      \quad\sigma>0
    \end{align}
  \end{subequations}
  where $\lambda_1 < 0 < \lambda_2$ are the eigenvalues of
  \begin{equation*}
    \begin{pmatrix}
      0 & -1 \\
      - E'(D) & 0
    \end{pmatrix}.
  \end{equation*}
\item A shock joining ${\bf v}_0$ to ${\bf v}_1$ satisfies the Liu
  entropy condition if
  \begin{equation}
    \label{e:liu_condition}
    \sigma({\bf v}_0,{\bf v}_1)\leq \sigma({\bf v}_0,\tilde{\bf {v}})
  \end{equation}
  for all points $\tilde{\bf v}$ between the two points along the
  shock curve in phase space.
\end{itemize}


\subsubsection{Left Traveling Waves}
Given the state $\mathbf{u}_0 = ( D_0, B_0)^T$, we construct $W_1(D;
\mathbf{u}_0)$.  Since we have an inflection point in $E(D)$ at $D=0$,
we dissect all the possible configurations of $D_0$, $D$ and the
inflection point.  Let $D_1$ be the value of $D$ at which the line
tangent to $(D_1, E(D_1))$ intercepts $(D_0, E(D_0))$.

First, suppose that $D< D_0 < 0$, as in Figure
\ref{f:leftward_rarefaction_neg}.  In this region, there is no
difficulty applying the Lax entropy condition \eqref{e:lax_condition};
there is no shock as
\[
\lambda_1(D) = -\sqrt{E'(D)}< \lambda_1(D_0) = -\sqrt{E'(D_0)}< 0
\]
Consequently,
\[
B = B_0 + \int_{D_0}^D \sqrt{E'(s)}ds
\]
Still assuming that $D_0 < 0$, if $ D_0 < D < D_1$ the Lax condition
continues to apply.  $D_0$ and $D$ will satisfy
\eqref{e:lax_condition} and we can join the two with a shock, as in
Figure \ref{f:leftward_shock_neg};
\[
B = B_0 + \sqrt{\abs{E(D) - E_0}\abs{D-D_0}}
\]

Once $D>D_1$, the solution changes.  It is no longer appropriate to
apply the Lax condition as we lose convexity here.  Applying the Liu
condition, \eqref{e:liu_condition}, we see that there is no longer a
shock.  Indeed, we can compute that were there shock solutions,
\begin{align*}
  \sigma({\bf v}_0, {\bf v}) &= \frac{-E + E_0}{B-B_0} = -\sqrt{\abs{\frac{E - E_0}{D-D_0} }}\\
  \sigma({\bf v}_0, {\bf v}_1) &= \frac{-E_1 +
    E_0}{B_1-B_0}=-\sqrt{\abs{\frac{E_1 - E_0}{D_1-D_0} }}
\end{align*}
Examining Figure \ref{f:leftward_compound_neg},
\[
\frac{E_1 - E_0}{D_1-D_0}<\frac{E - E_0}{D-D_0}<0
\]
implying
\[
\sigma({\bf v}_0,{\bf v}_1) < \sigma({\bf v}_0,{\bf v})
\]
which violates \eqref{e:liu_condition}.

As a shock fails to connect the two states, we resort to joining the
states by a compound wave.  The solution is a shock from $D_0$ to
$D_1$ which continues into a rarefaction wave from $D_1$ to $D$.
Thus,
\[
B = B_0 + \sqrt{\abs{E(D_1) - E_0}\abs{D_1-D_0}} + \int_{D_1}^D
\sqrt{E'(s)}ds
\]
\begin{figure}
  \centering \subfigure[$D< D_0$]{
    \includegraphics[width=2in]{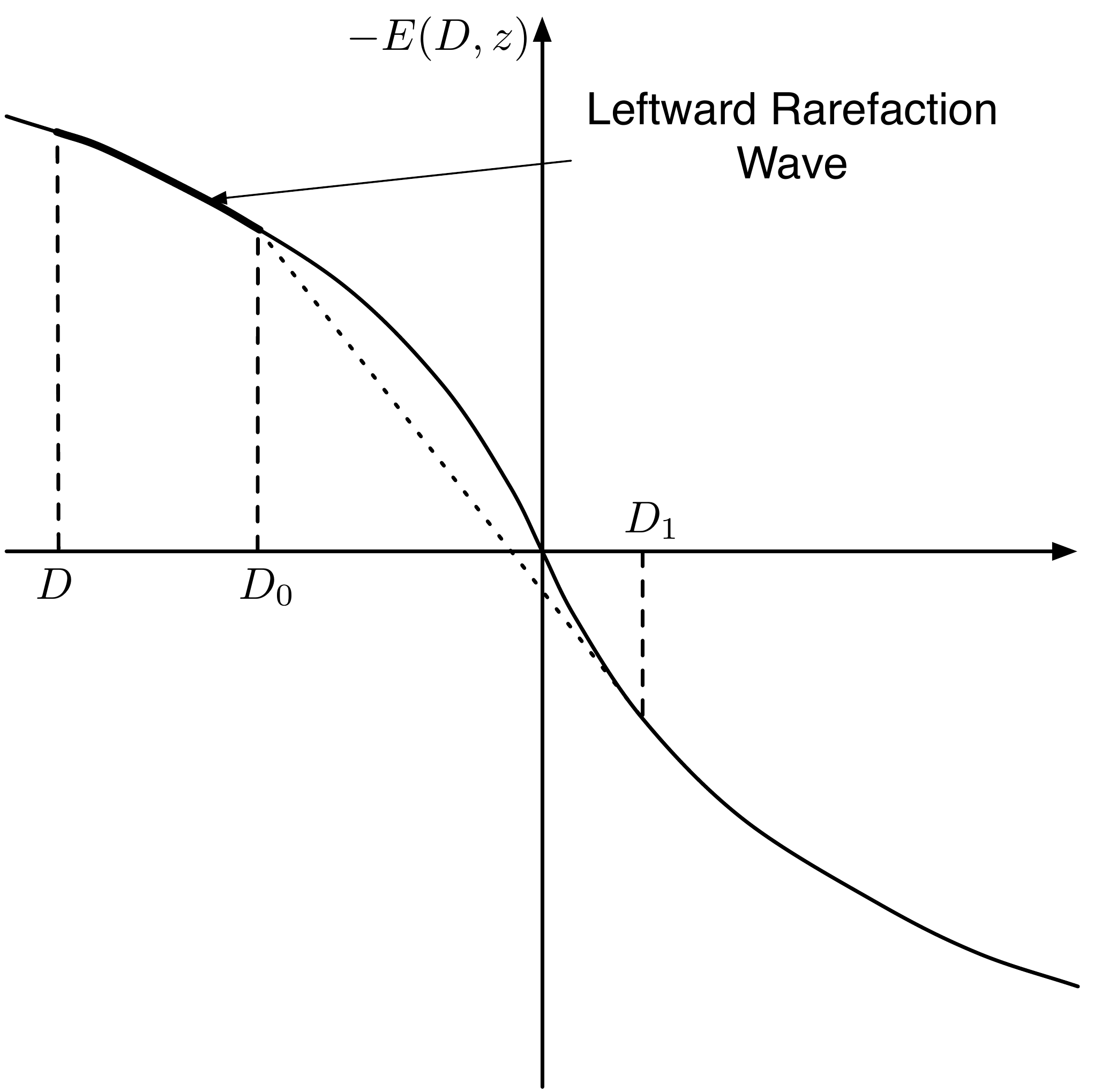}\label{f:leftward_rarefaction_neg}
  } \subfigure[$D_0< D< D_1$]{
    \includegraphics[width=2in]{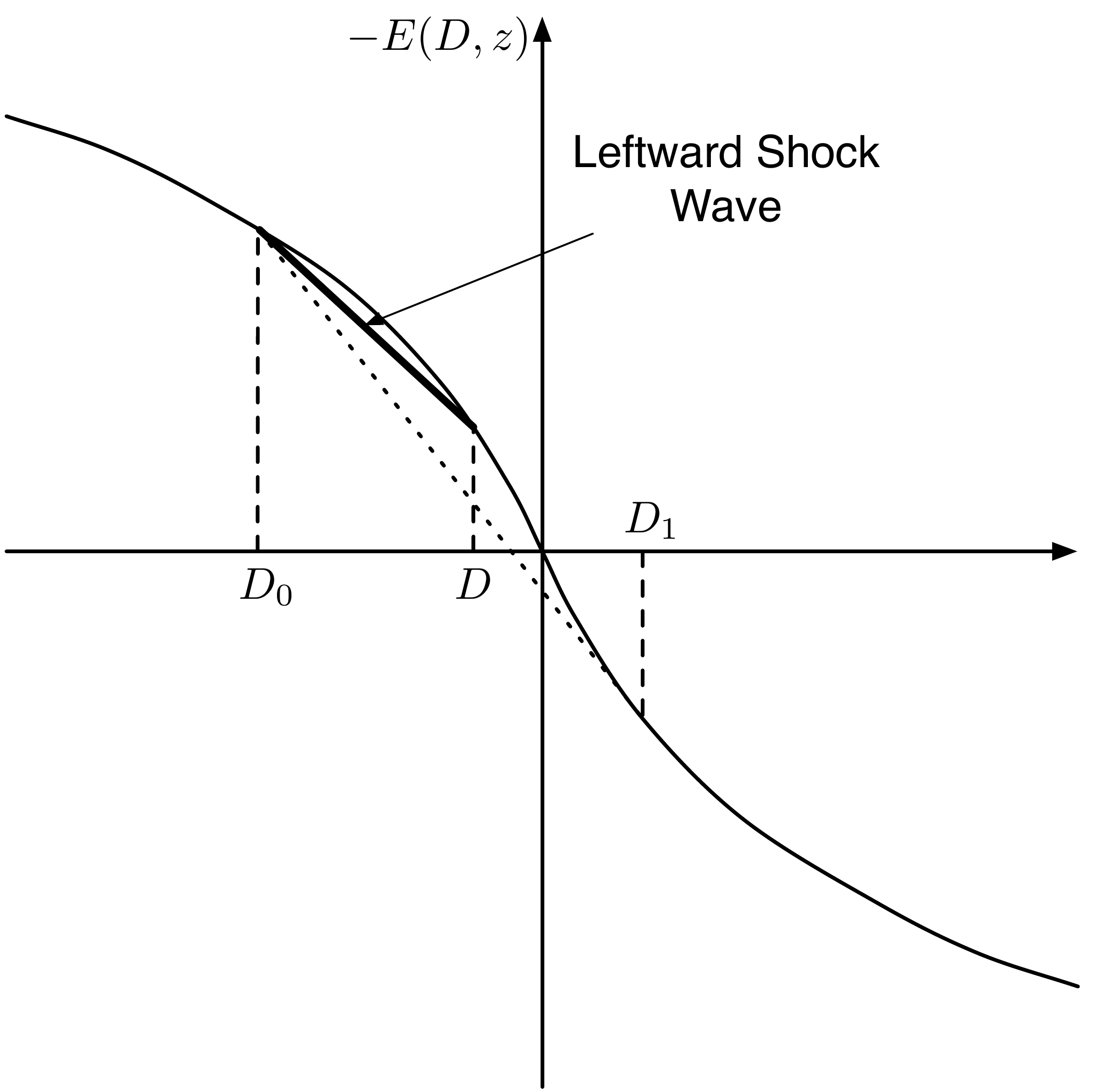}\label{f:leftward_shock_neg}
  }
  \subfigure[$D_0< D_1<D$]{
    \includegraphics[width=2in]{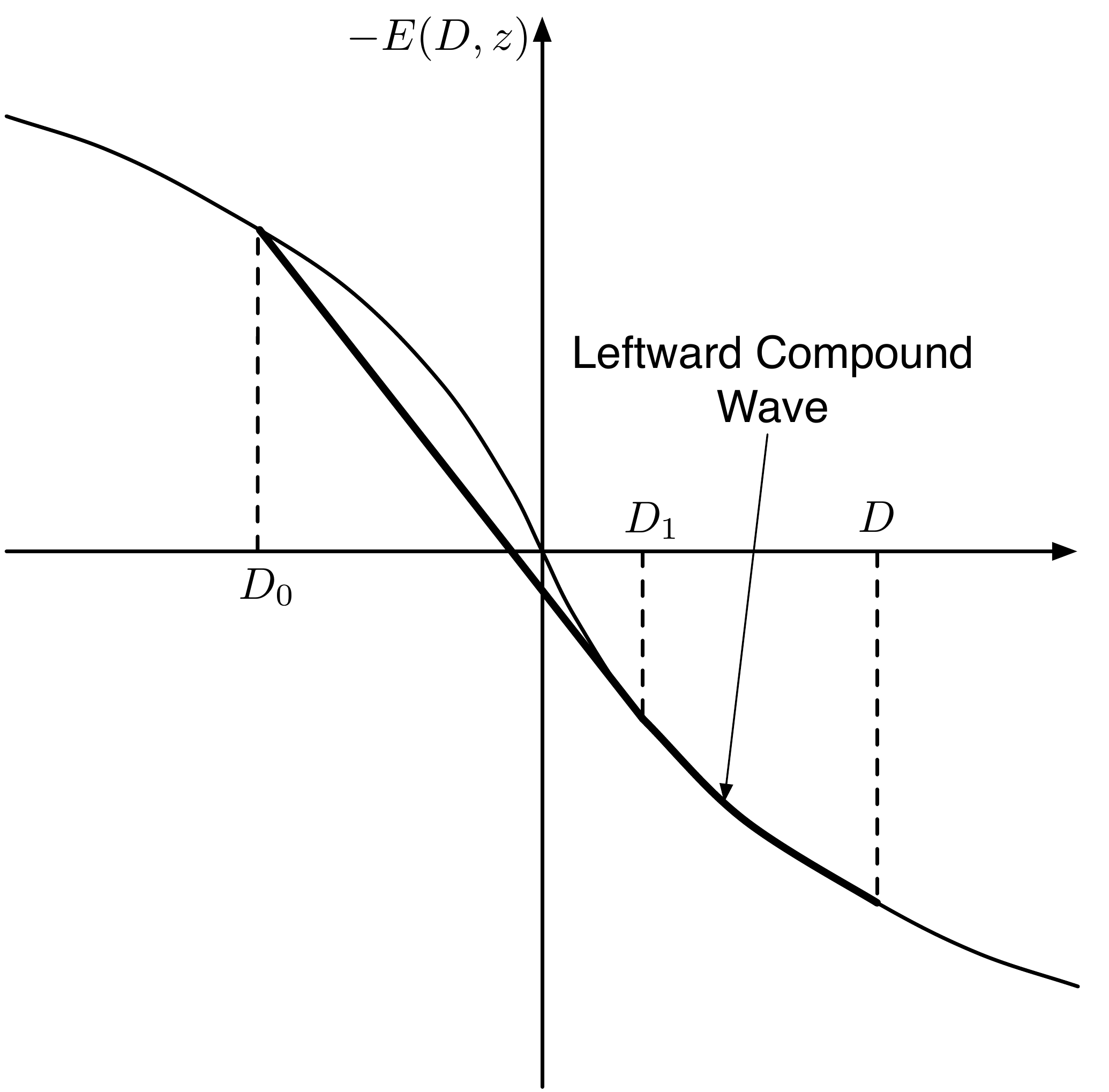}\label{f:leftward_compound_neg}
  }
  \caption{Entropy satisfying leftward traveling waves when $D_0<0$.}
  \label{f:leftward_waves_neg}
\end{figure}

At $D_0=0$ the system is convex yielding leftward traveling
rarefaction waves for all values of $D$:
\[
B = B_0 + \int_{D_0}^D \sqrt{E'(s)}ds
\]

For $D_0 >0$, There are again several cases.  For $D < D_1$, we have
the compound wave again, as in Figure \ref{f:leftward_compound_pos}:
\[
B = B_0 - \sqrt{\abs{E(D_1) - E_0}\abs{D_1-D_0}} + \int_{D_1}^D
\sqrt{E'(s)}ds
\]
As $D$ increases in value and $D_1 < D < D_0$, we have a shock
solution
\[
B = B_0 - \sqrt{\abs{E(D) - E_0}\abs{D-D_0}}
\]
See Figure \ref{f:leftward_shock_pos} Lastly, for $D> D_0$, we get a
leftward traveling rarefaction:
\[
B = B_0 + \int_{D_0}^D \sqrt{E'(s)}ds
\]

\begin{figure}
  \centering \subfigure[$D< D_1< D_0$]{
    \includegraphics[width=2in]{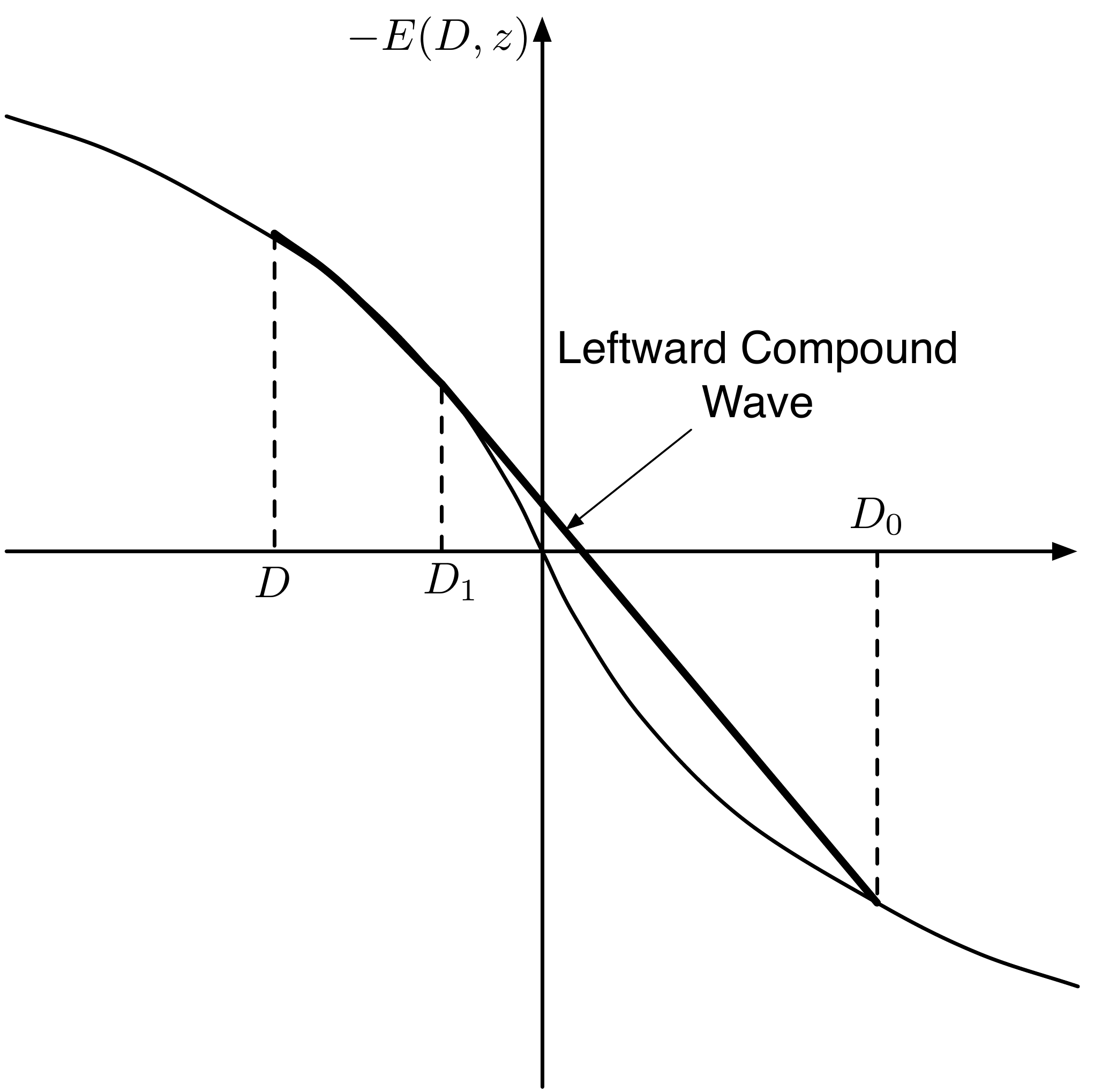}\label{f:leftward_compound_pos}
  } \subfigure[$D_1<D< D_0$]{
    \includegraphics[width=2in]{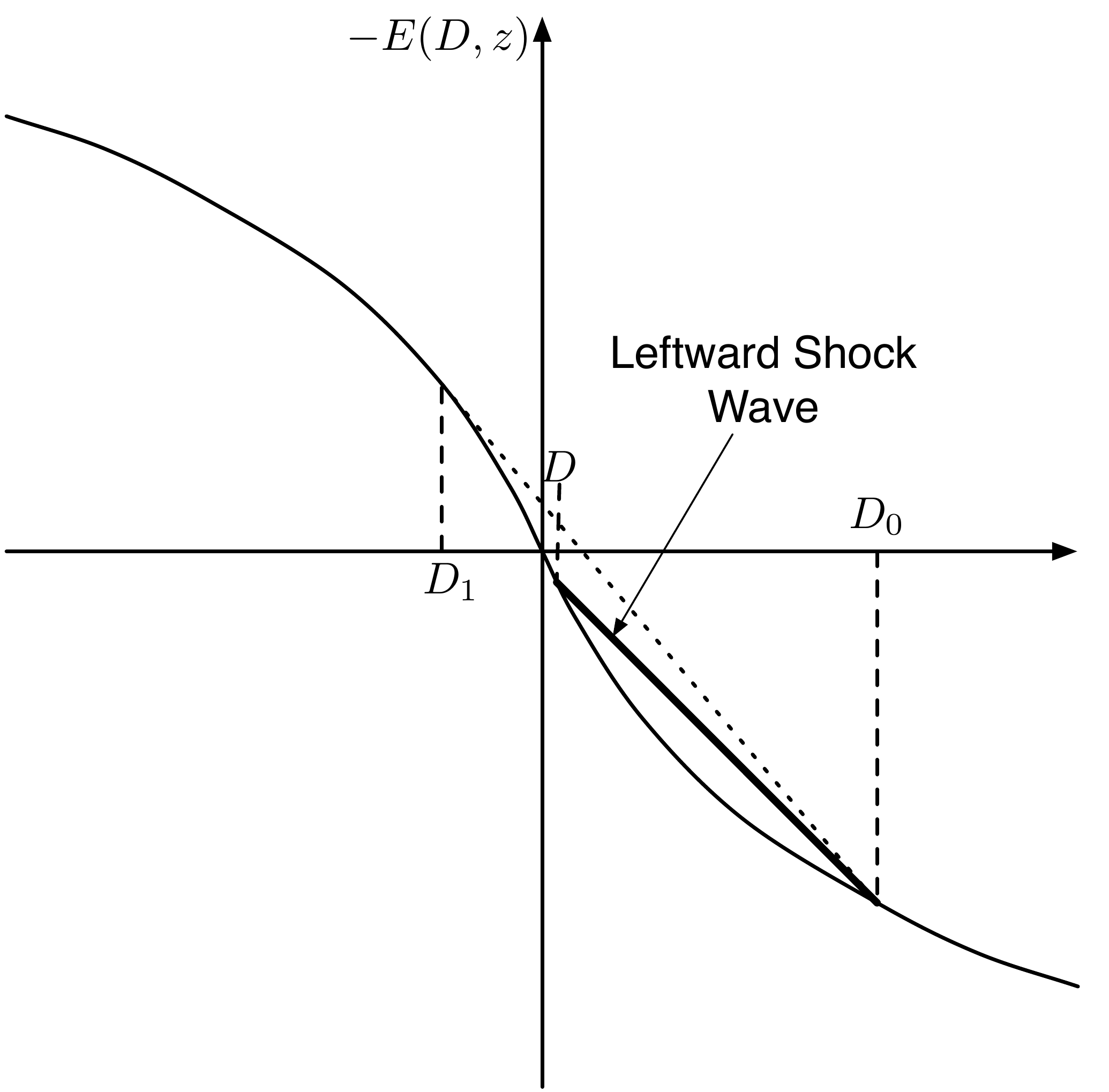}\label{f:leftward_shock_pos}
  } \subfigure[$D_0< D$]{
    \includegraphics[width=2in]{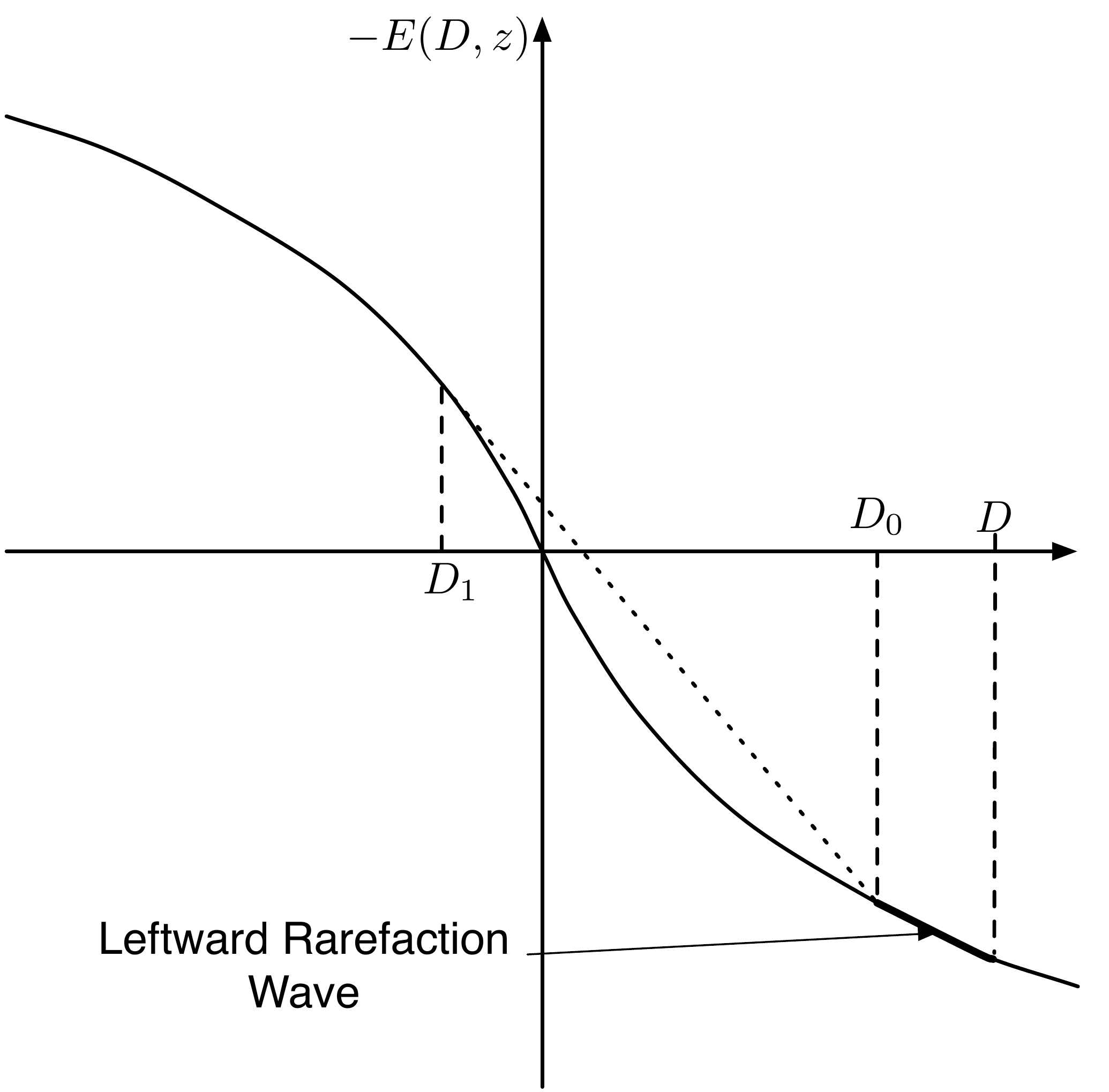}\label{f:leftward_rarefaction_pos}
  }

  \caption{Entropy satisfying leftward traveling waves when $D_0>0$.}
  \label{f:leftward_waves_pos}
\end{figure}

\subsubsection{Right Traveling Waves}
For right traveling waves, the structure is similar.  Given our point
$D_0$, let $(D_2, E_2)$ be the point intercepted by the line tangent
to $(D_0, E(D_0))$.

Again, we first treat $D_0 < 0$.  The different cases are diagrammed
in Figure \ref{f:rightward_waves_neg}.  For $D< D_0<0$, there is a
shock,
\[
B = B_0 + \sqrt{\abs{E(D) - E_0}\abs{D-D_0}}.
\]
For $D_0 < D < 0$, this changes to a rarefaction wave,
\[
B = B_0 - \int_{D_0}^D \sqrt{E'(s)}ds.
\]

Crossing the inflection point, $0 < D< D_2$, it becomes a compound
wave which rarefacts to the point $D_\star$ followed by a shock,
\[
B = B_0 - \int_{D_0}^{D_\star} \sqrt{E'(s)}ds - \sqrt{\abs{E(D) -
    E_\star}\abs{D-D_\star}}.
\]
$D_\star$ is the point $(D_\star, E_\star)$ on the curve whose tangent
intercepts $(D, E)$.  Past $D_2$, the compound wave reduces to a
shock, as the system now satisfies \eqref{e:liu_condition},
\[
B = B_0 - \sqrt{\abs{E(D) - E_0}\abs{D-D_0}}.
\]

\begin{figure}
  \centering \subfigure[$D< D_0$]{
    \includegraphics[width=2in]{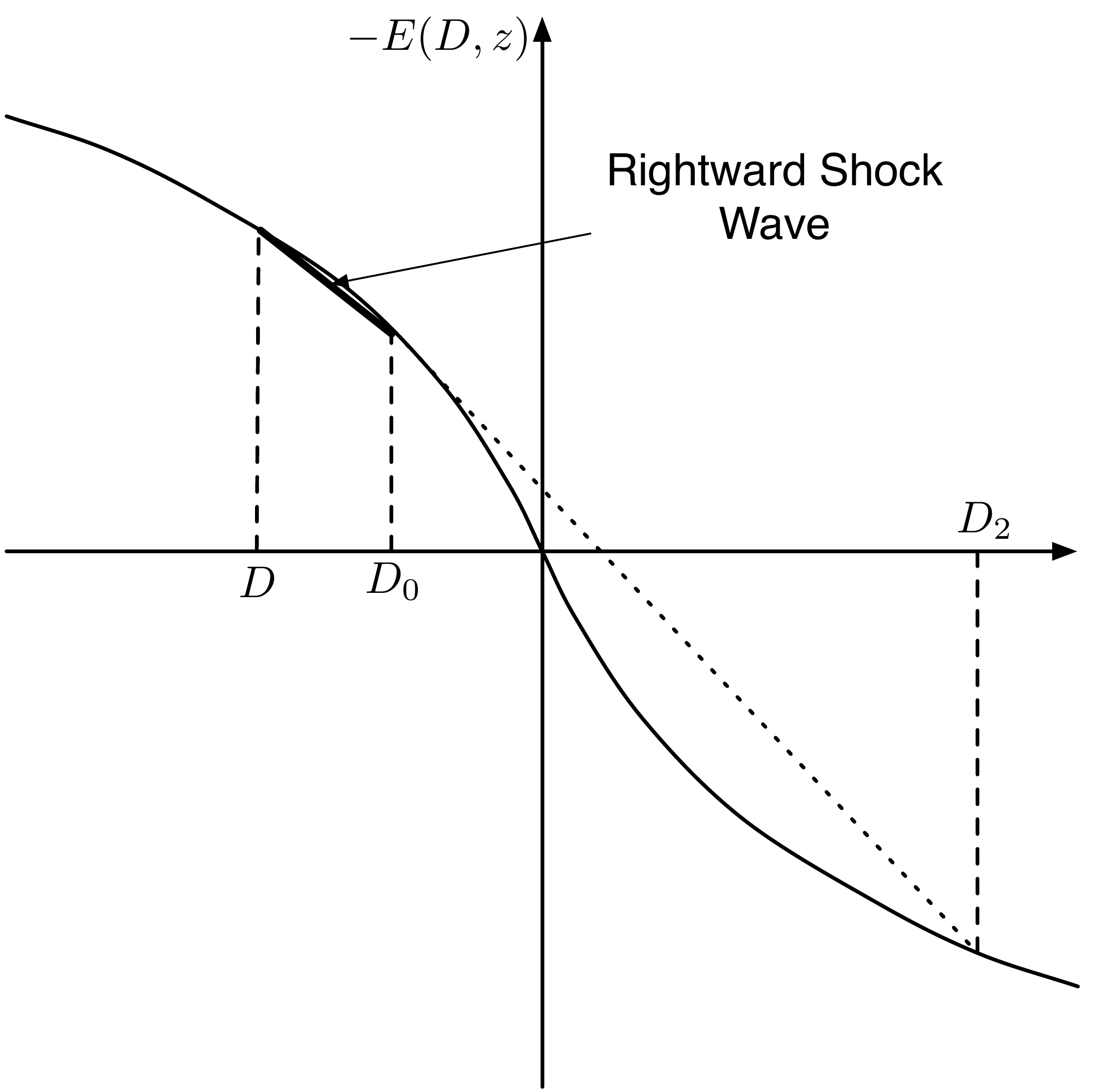}\label{f:rightward_shock_neg}
  } \subfigure[$D_0<D< 0$]{
    \includegraphics[width=2in]{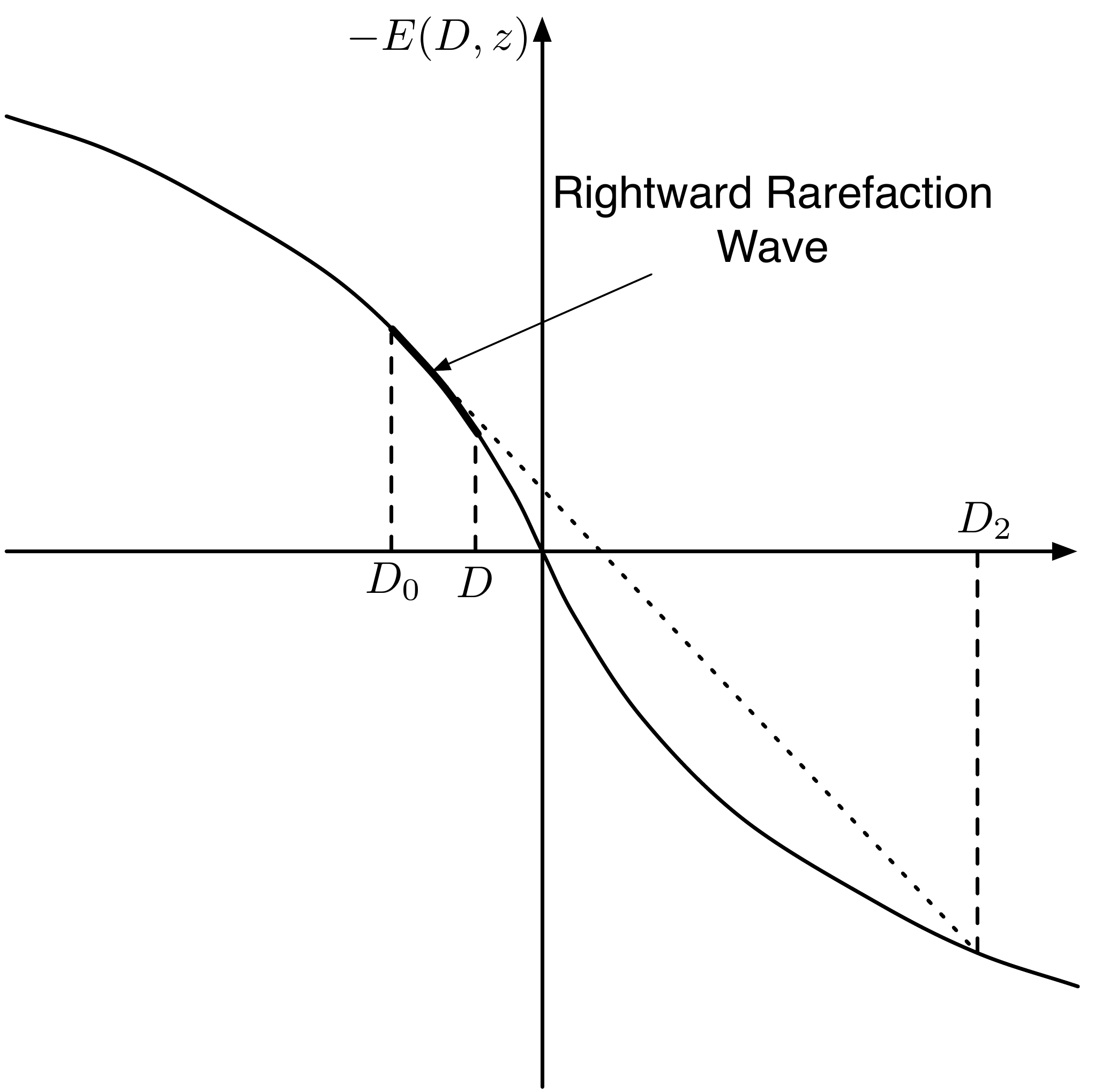}\label{f:rightward_rarefaction_neg}
  } 

\subfigure[$0< D< D_2$]{
    \includegraphics[width=2in]{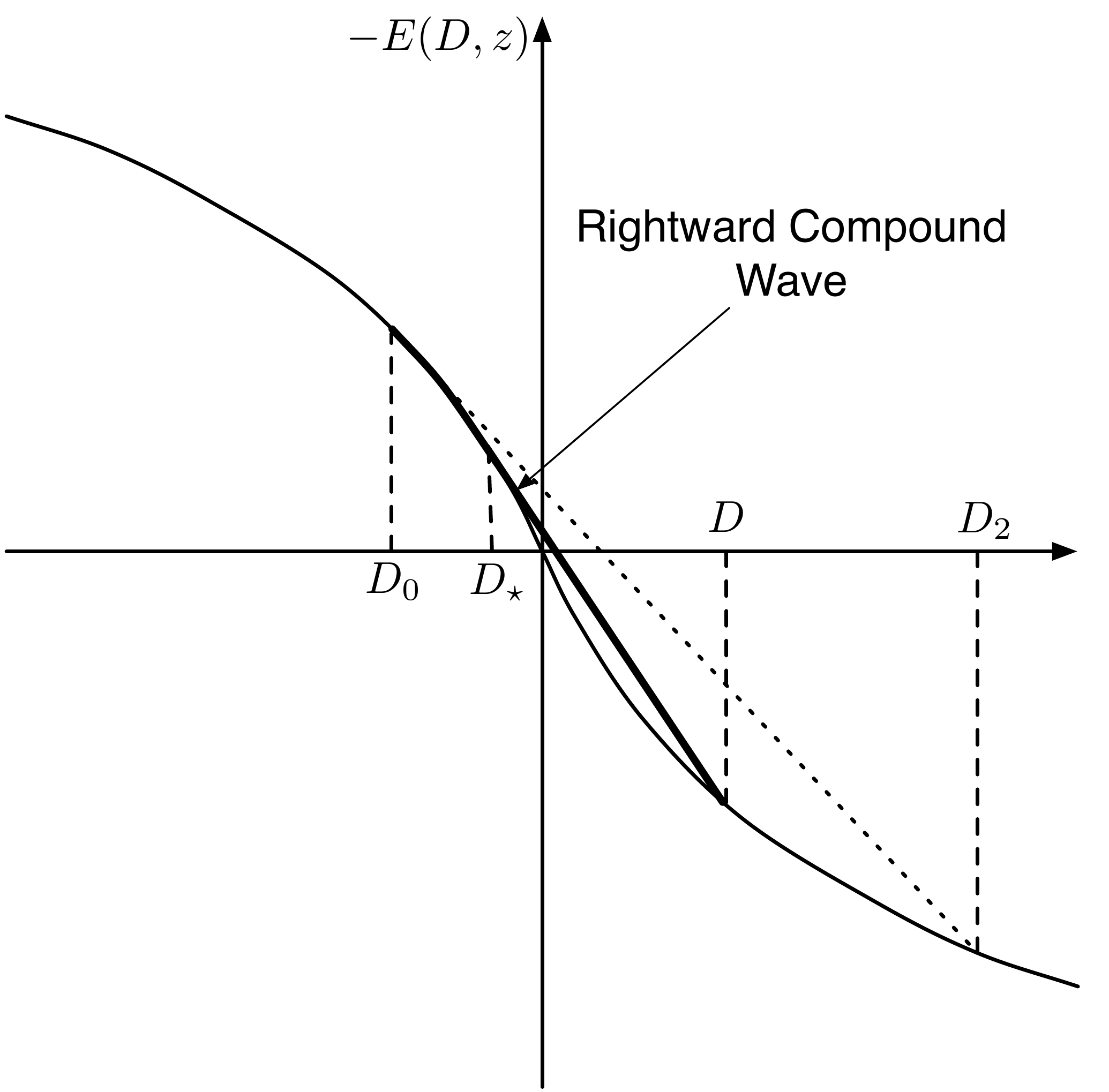}\label{f:rightward_compound_neg}
  } \subfigure[$D_2 < D$]{
    \includegraphics[width=2in]{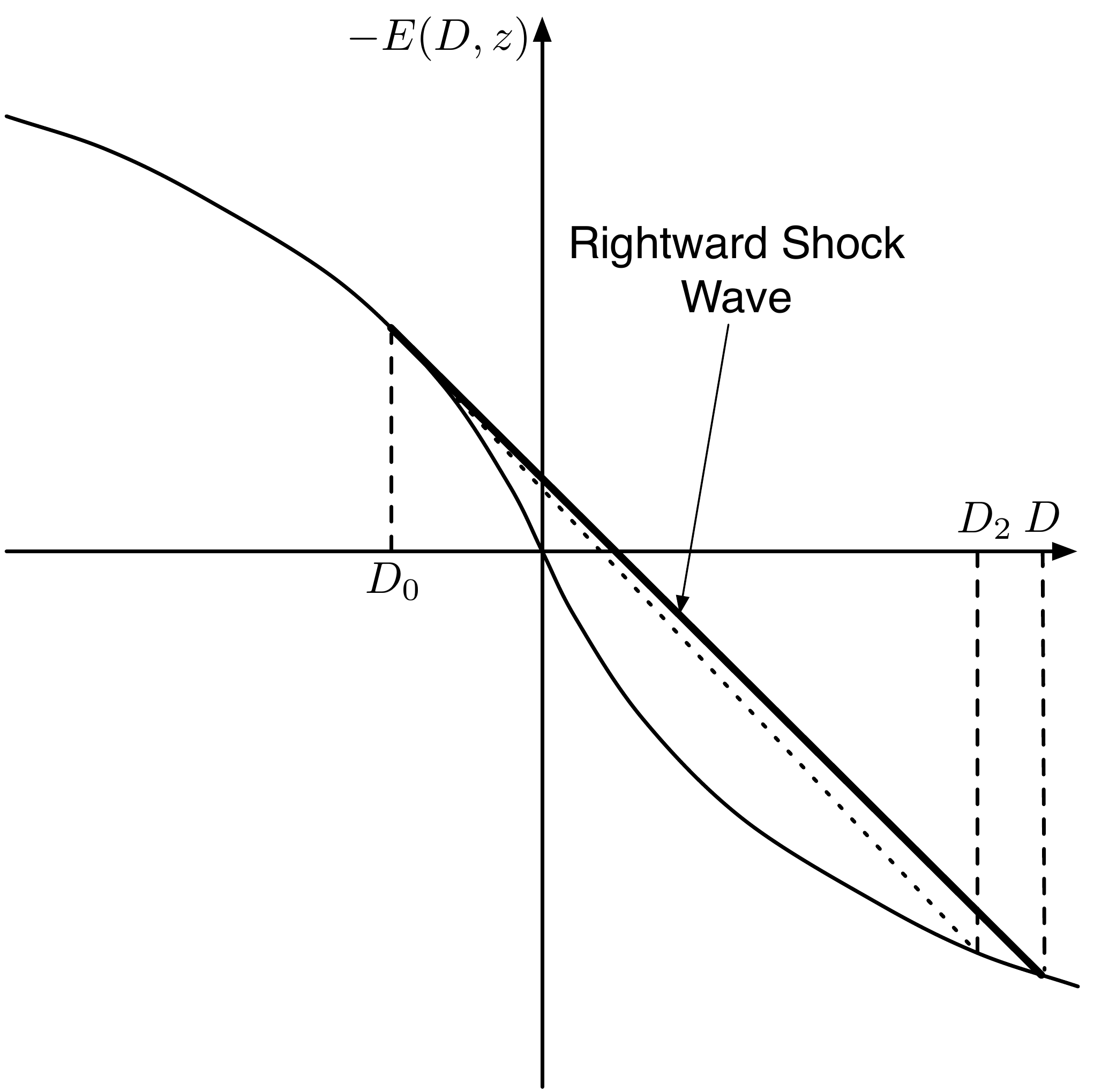}\label{f:rightward_shock2_neg}
  }
  \caption{Entropy satisfying rightward traveling waves when $D_0<0$.}
  \label{f:rightward_waves_neg}
\end{figure}

For $D_0=0$, we have a shock in both directions,
\[
B = B_0 - \sign(D) \sqrt{\abs{E(D) - E_0}\abs{D-D_0}}.
\]

For $D_0>0$, again, we must consider the different positions of $D$
relative to the other points.  These cases appear in Figure
\ref{f:rightward_waves_pos}.  If $D < D_2<0$, there is the shock
solution satisfying \eqref{e:liu_condition},
\[
B = B_0 + \sqrt{\abs{E(D) - E_0}\abs{D-D_0}}.
\]
For $D_2 < D< 0$, this becomes a compound wave,
\[
B = B_0 - \int_{D_0}^{D_\star} \sqrt{E'(s)}ds + \sqrt{\abs{E(D) -
    E_\star}\abs{D-D_\star}}
\]
$D_\star$ is again the point on the curve whose tangent intercepts
$(D,E(D))$.  For $0< D < D_0$, this becomes a purely rarefactory wave,
\[
B = B_0 - \int_{D_0}^{D} \sqrt{E'(s)}ds.
\]
Finally, for $D> D_0$, we again have a shock,
\[
B = B_0 - \sqrt{\abs{E(D) - E_0}\abs{D-D_0}}.
\]

\begin{figure}
  \centering \subfigure[$D< D_2$]{
    \includegraphics[width=2in]{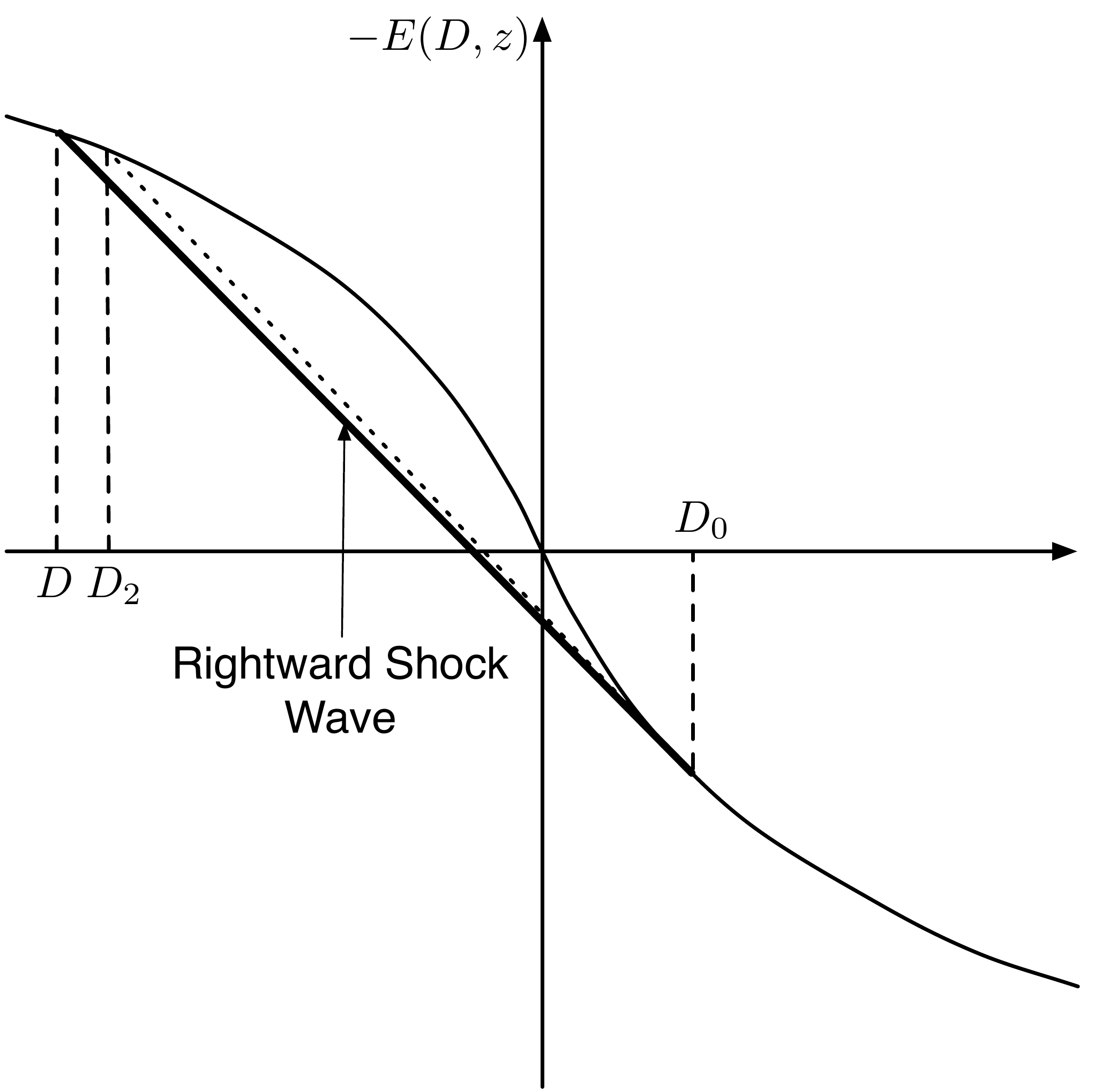}\label{f:rightward_shock_pos}
  } \subfigure[$D_2<D< 0$]{
    \includegraphics[width=2in]{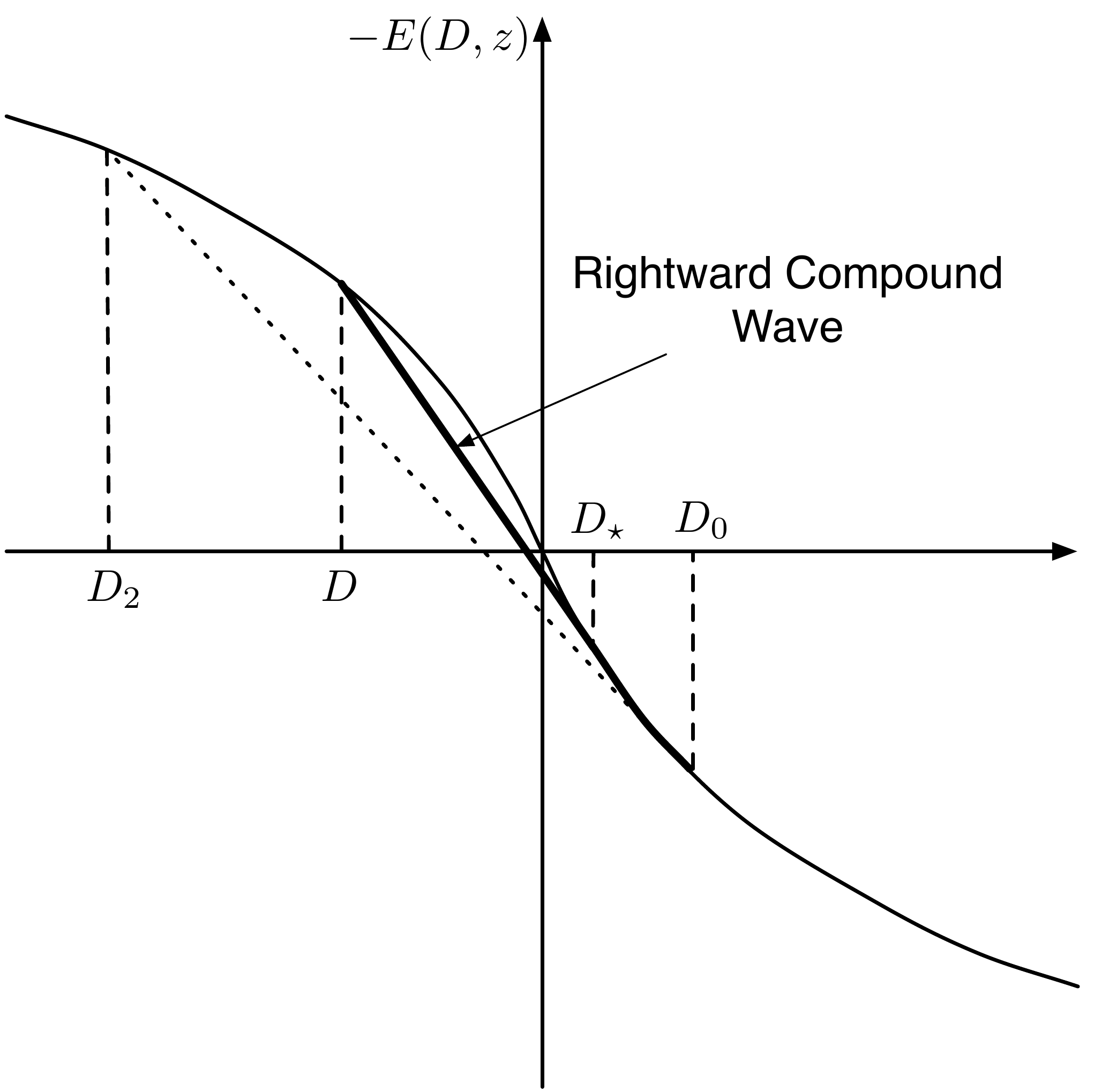}\label{f:rightward_compound_pos}
  } 

\subfigure[$0< D< D_0$]{
    \includegraphics[width=2in]{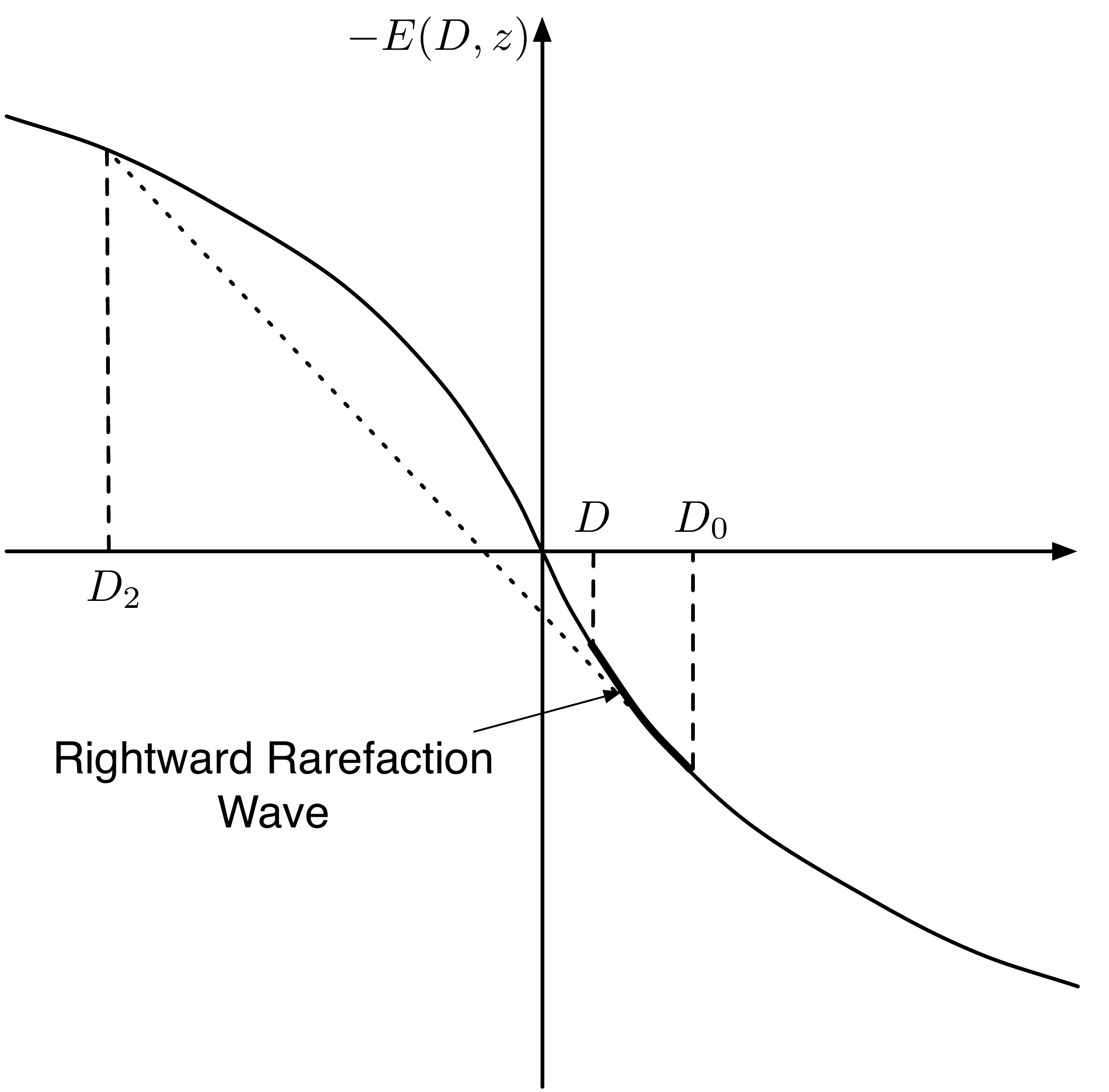}\label{f:rightward_rarefaction_pos}
  } \subfigure[$D_0 < D$]{
    \includegraphics[width=2in]{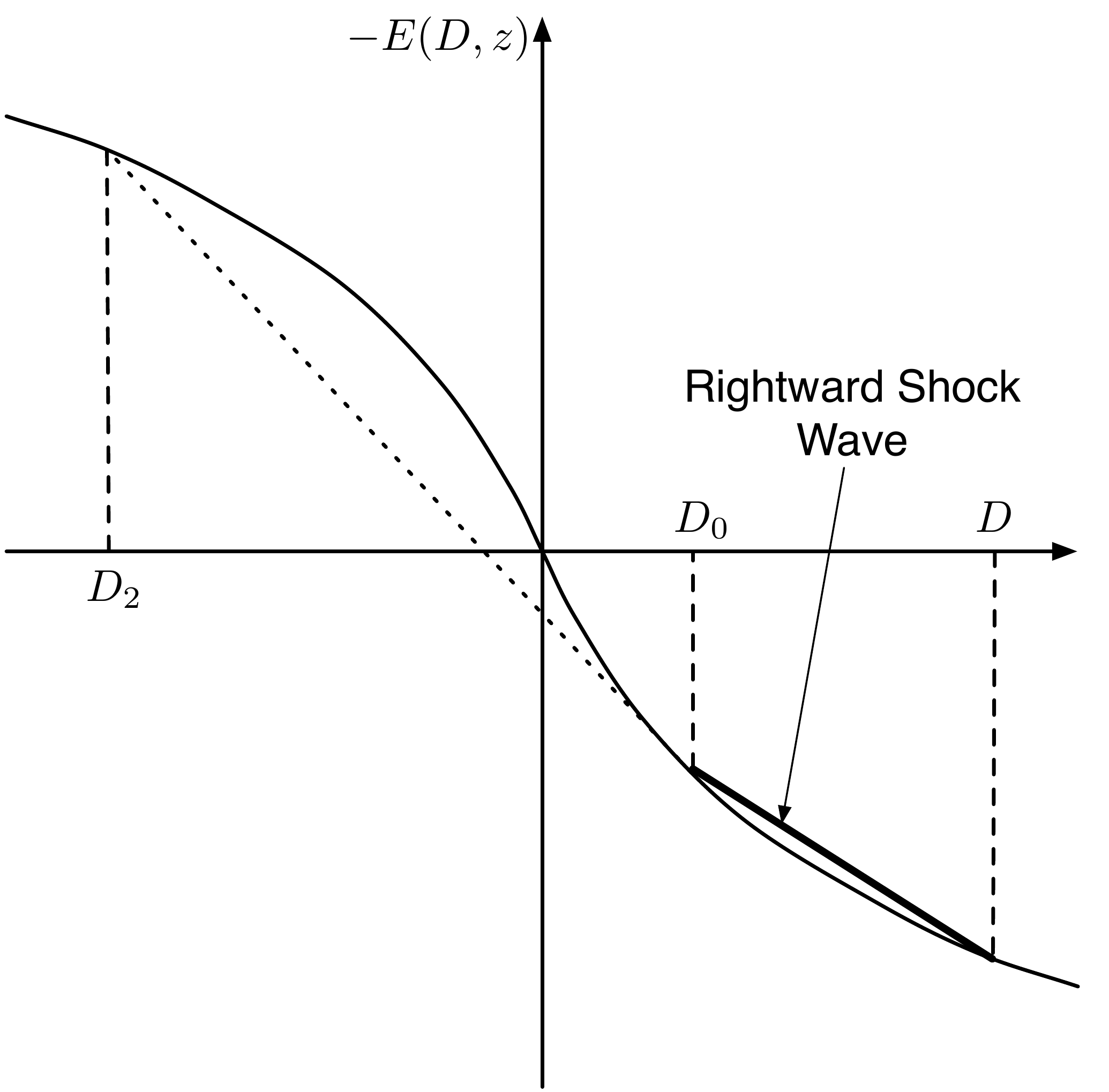}\label{f:rightward_shock2_pos}
  }
  \caption{Entropy satisfying rightward traveling waves when $D_0>0$.}
  \label{f:rightward_waves_pos}
\end{figure}

\end{document}